\def\dju{\mbox{Đurđevich}}
\newtheorem{theorem}{Theorem}[section]
\newtheorem{prop}{Proposition}[section]
\newtheorem{definition}{Definition}[section]
\numberwithin{equation}{section}
\begin{document}

\begin{center}
{\bf Co-Toeplitz Operators and
\\
their Associated Quantization }
\vskip 0.3cm
\noindent
Stephen Bruce Sontz
\\
Centro de Investigaci\'on en Matematicas, A.C. 
(CIMAT)
\\
Guanajuato, Mexico 
\\
email: sontz@cimat.mx 
\end{center}

\centerline{\bf Abstract}

\vskip 0.4cm 
\noindent 
We define {\em co-Toeplitz operators}, 
a new class of Hilbert space 
operators, in order to define 
a co-Toeplitz quantization 
scheme that is dual 
to the Toeplitz quantization scheme 
introduced by the author in the setting of symbols that 
come from a possibly non-commutative algebra 
with unit. 
In the present dual setting 
the symbols come from a possibly non-co-commutative 
co-algebra with co-unit.  
However, this co-Toeplitz quantization is a usual 
quantization scheme in the sense that to 
each symbol we assign a densely defined linear operator 
acting in a fixed Hilbert space. 
Creation and annihilation operators 
are also introduced 
as certain types of co-Toeplitz operators, and then 
their commutation relations provide the way for 
introducing Planck's constant into this theory. 
The domain of the co-Toeplitz 
quantization is then extended as well 
to a set 
of {\em co-symbols}, which are the linear functionals 
defined on the co-algebra. 
A detailed example based on the quantum group 
(and hence co-algebra) 
$SU_q(2)$ as symbol space is presented. 

\vskip 0.4cm \noindent
\textbf{Keywords:} co-Toeplitz operator, 
co-Toeplitz quantization, 
creation and annihilation operators, 
second quantization.

\section{Introduction}
\label{introduction-section}

In a series of recent papers the author has introduced 
a theory of Toeplitz operators having symbols 
in a not necessarily commutative algebra with 
a {\em $*$-operation} (also called a 
{\em  conjugation}). 
See~\cite{sbs4} for the general theory and 
\cite{sbs1}, \cite{sbs2} and \cite{sbs3} for 
various examples of that theory. 
The associated Toeplitz quantization 
is also described in those papers. 
See~\cite{BC} for Toeplitz operators in 
Segal-Bargmann analysis, which was 
my original interest in these topics. 
Also see~\cite{mirek} for a quite recent review of 
Berezin-Toeplitz operators and some related topics, 
including Toeplitz operators.  
Finally, see~\cite{BandS} 
for a more general viewpoint of Toeplitz 
operators in analysis, including 
Banach space applications. 

There are at least three aspects of 
the theory in \cite{sbs4} that make it 
relevant to quantum physics. 
First, the Toeplitz operators are densely defined 
linear operators, all acting in 
the same Hilbert space, and so the 
self-adjoint extensions of the symmetric 
Toeplitz operators can be interpreted as 
being physical observables. 
(A simple sufficient condition is given in order
for a Toeplitz operator to be symmetric). 
Second, there are creation and 
annihilation operators that  
are defined as certain 
types of Toeplitz operators. 
Third, the non-zero commutation relations among the 
creation and annihilation operators allow the introduction of 
Planck's constant $\hbar$ into the theory. 

In this paper we introduce co-Toeplitz operators in order to study the 
associated dual quantization scheme. 
This opens up a new area in the well established 
theory of operators acting in Hilbert space as 
well as providing a way to quantize new types of 
`symbols' in a co-algebra.  
The most fundamental (and dual) property of the co-Toeplitz operators is that their 
symbols lie in a co-algebra rather than in an algebra as is the case for Toeplitz operators. 
A related space of `co-symbols' 
and its quantization are 
introduced as well. 
This co-Toeplitz quantization is also 
relevant to quantum physics, since it 
has the same three 
aspects as already mentioned in the 
Toeplitz setting. 

Since the co-algebra can be non-co-commutative, 
the co-Toeplitz quantization is a generalized 
{\em second quantization}, that is, it produces 
linear operators from symbols coming from an 
algebraic structure that can lack 
the appropriate commutativity, 
which for historical reasons in the case 
of co-algebras is called {\em co-commutativity}. 
In this regard it is worthwhile to note that  
P.~Dirac was famously known for saying that 
the essential property of quantum theory 
is that the observables do not commute. 
So the lack of the appropriate commutativity of 
a co-algebra makes it into a quantum object 
which the co-Toeplitz quantization then 
quantizes. 
In this sense we do have a type of second quantization. 

Some words are in order to explain 
the meaning of a {\em quantization} or a 
{\em quantization scheme}. 
I use these two expressions interchangeably. 
And I do not wish to propose a rigorous mathematical 
definition. 
The basic idea is captured in the catch-phrase
``operators instead of functions''. 
By ``operators'' I mean linear, densely defined
operators acting in a Hilbert space, 
possibly separable.
This is a quite conventional interpretation. 
But by ``functions'' I merely mean elements in 
some vector space with some additional 
algebraic structure, such as an algebra or 
a co-algebra. 
This is a far cry from the standard definition 
of a function, though that is included as 
a special case.  
The properties of the quantization mapping 
that sends ``functions'' to operators are 
left deliberately vague.  

Due to the novelty of the material of this paper, 
much of it is devoted to definitions and their 
motivation, while the number of theorems is less
than a paper of this size would usually contain. 
Some possibilities are presented in the Concluding
Remarks for research leading to more 
theorems. 
However, even the definitions may well be 
changed and refined as more examples of 
co-Toeplitz operators  become available. 

The paper is organized as follows. 
In Section~\ref{toeplitz-quantization-section} 
we review the known, general Toeplitz quantization 
scheme for algebras. 
In Section~\ref{co-toeplitz-quantization-section}
we present the 
dual co-Toeplitz quantization scheme. 
We discuss the role of the co-unit of the 
co-algebra in co-Toeplitz quantization in 
Section~\ref{co-symbols-section} and then show how 
that motivates an extension of this quantization 
scheme using {\em co-symbols} in the dual 
of the co-algebra. 
The duality between Toeplitz  
and co-Toeplitz 
operators is not as symmetric as 
one might have expected. 
This is presented in Section~\ref{duality-section}. 
Adjoints of the co-Toeplitz operators are studied 
in Section~\ref{adjoint-section}. 
Next the creation and annnihilation operators 
are defined in terms of co-Toeplitz operators in 
Section~\ref{ann-creation-section}, and then 
the canonical commutation relations among these 
operators are defined in Section~\ref{ccr-section} 
in algebraic terms. 
At this point Planck's constant $\hbar$ 
is introduced into the theory as well as the 
associated semi-classical 
algebras, for which $ \hbar > 0 $, 
and the classical algebra, 
for which $ \hbar = 0 $. 
We continue in Section~\ref{example-section} 
with an example of this new quantization scheme based  
on the quantum group (and hence co-algebra) 
$SU_q(2)$ as symbol space. 
A Toeplitz quantization of $SU_q(2)$ 
has already been presented in \cite{sbs5} 
using instead its structure 
as an algebra, but with the same sub-algebra of 
`holomorphic' elements. 
Finally, we conclude 
in Section~\ref{concluding-remarks-section} 
with remarks about possible further 
developments and alternatives of this theory. 

We only consider vector spaces over the field of complex numbers.
We use the standard notations $\mathbb{N}$ for the non-negative integers,
$\mathbb{Z}$ for all the integers, $\mathbb{R}$ for the real numbers and 
$\mathbb{C}$ for the complex numbers. 
For $ \alpha \in \mathbb{C} $ we let 
$ \alpha^{*} $ denote its complex conjugate.

\section{The Toeplitz quantization}
\label{toeplitz-quantization-section}

We will introduce the definition of a co-Toeplitz quantization 
using the Toeplitz quantization as a guide and motivation. 
Hence, we start with a review in this section 
of the already known theory of Toeplitz quantization in 
the setting of possibly non-commuting symbols 
as is developed by the author in \cite{sbs4}. 

We let $\mathcal{A}$ be an associative algebra with identity element $1 \equiv 1_{\mathcal{A}}$. 
This algebra could have a non-commutative multiplication; it will be the
symbol space for the Toeplitz quantization. 
Suppose that $\langle \cdot , \cdot \rangle_\mathcal{A}$ is a sesquilinear, complex symmetric form on 
$\mathcal{A}$; 
this form could possibly be degenerate. 
Our convention throughout is that all sesquilinear forms are anti-linear 
in the first entry and linear in the second. 
Moreover, suppose that there exists a sub-algebra 
$\mathcal{P}$ (not necessarily containing $1$) 
of $\mathcal{A}$
such that the sesquilinear form is positive definite when restricted to $\mathcal{P}$.
Then $\mathcal{P}$ is a pre-Hilbert space.
(This is one way of motivating the choice of the letter  $\mathcal{P}$
for this object. 
Another could be that $\mathcal{P}$ is a space 
whose elements are like holomorphic polynomials.) 
We let $\mathcal{H}$ denote a Hilbert space completion 
of $\mathcal{P}$ such that 
$\mathcal{P}$ is a dense subspace of $\mathcal{H}$. 
If we think of $\mathcal{P}$ as corresponding to a 
space of holomorphic polynomials, then 
$\mathcal{H}$ could be considered as a sort of 
generalization of the Segal-Bargmann space 
of holomorphic functions. 
See \cite{bargmann}. 

We let $\iota : \mathcal{P} \to \mathcal{A}$ denote 
the inclusion map, 
which is an algebra morphism. 
We suppose that there exists a projection map 
$P : \mathcal{A} \to \mathcal{P}$, that is,
$P \, \iota = id_\mathcal{P}$. 
While $P$ is assumed to be linear, it is {\em not} 
assumed to be an algebra morphism. 
In this abstract formalism the projection $P$
is rather arbitrary. 
However, one specific choice for it in 
several examples 
is given for $\phi \in \mathcal{A}$ by 
\begin{equation}
\label{specific-P}
    P \phi = 
 \sum_{j \in J} \langle \psi_j , \phi\rangle_{\mathcal{A}} \, \psi_j 
\end{equation}
where $\{ \psi_j ~|~ j \in J \}$ is an orthonormal 
set in $\mathcal{P} $ that is an orthonormal basis 
of $\mathcal{H}$. 
Of course, it must be shown that the possibly infinite 
sum on the right side of \eqref{specific-P} 
converges to an element in $ \mathcal{P} $. 
(This is trivially true if only finitely many 
of the summands are non-zero.) 
But be aware that $P$ defined 
this way is not necessarily 
an orthogonal projection, since the form 
$\langle \cdot , \cdot \rangle_{\mathcal{A}}$ 
need not be positive definite and, in fact, 
is degenerate in some examples. 

The operator $P$ could also be realized more generally 
as an extension to $\mathcal{A}$ 
of a reproducing kernel 
function that represents the identity map on 
the pre-Hilbert space $\mathcal{P}$. 
This is what is happening in \eqref{specific-P}
since the right side restricted to $\mathcal{P}$
is a reproducing function for $\mathcal{P}$. 
Since the algebra $\mathcal{P}$ can be 
non-commutative, the reproducing kernel 
need not be a function in the usual sense 
of that word and so will not have all 
(although some) of the properties of 
a reproducing kernel function. 
See \cite{sbs1} for an example of this 
more general type of reproducing kernel. 

We assume that there is a left action of  
$\mathcal{P}$ on $\mathcal{A}$, namely a linear map 
$$
\alpha :  \mathcal{P} \otimes \mathcal{A} \to \mathcal{A}
$$
satisfying the standard properties, namely 
$1 \cdot a = a $ if 
$1 = 1_{\mathcal{A}} \in \mathcal{P} $, and
$ p_1 \cdot ( p_2 \cdot a ) = ( p_1 p_2 )\cdot  a $
where $p \cdot a := \alpha ( p \otimes a )$ for 
$ p, p_1, p_2 \in \mathcal{P}$ and 
$a \in \mathcal{A} $. 
Here the juxtaposition $p_1 p_2$ means the 
multiplication of elements in $ \mathcal{P} $.  
Next, 
in anticipation of the definition of a left co-action
in Section~\ref{co-toeplitz-quantization-section}, 
we re-write this is terms of the map $\alpha$ as
$$
     \alpha (1 \otimes a) = a 
     \qquad \mathrm{and} \qquad
     \alpha ( p_1 \otimes \alpha (p_2 \otimes a) ) =
     \alpha (p_1 p_2 \otimes a) 
$$
for all $a \in \mathcal{A} $ and all 
$p_1, p_2 \in \mathcal{P} $. 
The first condition is only required 
if $ 1 \in \mathcal{P} $. 

For example, we could take $\alpha$ equal to  $\mu_\mathcal{A}$ restricted to
$\mathcal{P} \otimes \mathcal{A}$, where 
$\mu_\mathcal{A} : \mathcal{A} \otimes \mathcal{A} \to \mathcal{A}$ is 
the multiplication map of $\mathcal{A}$. 
In short, we could take 
$\alpha = \mu_{\mathcal{A}} \, (\iota \otimes id)$. 
This particular choice for $\alpha$ is the 
only place in this theory of Toeplitz operators
where we use the multiplication of $\mathcal{A}$. 
We should emphasize however that this particular 
choice for $\alpha$ closely corresponds 
to what is used in the 
classical theory of Toeplitz operators 
acting in function spaces. 

Nonetheless, 
other choices for $\alpha$, which do not use 
the multiplicative structure of $\mathcal{A}$,
are also possible. 
In such a case we can drop the assumption 
that $\mathcal{A}$ is an algebra and instead 
only assume that it is a vector space. 
However, we still want to have a $*$-structure
on $\mathcal{A}$ in order to be able to define 
creation and annihilation operators 
in Section~\ref{ann-creation-section}. 
Also a $*$-structure appropriately compatible 
with the inner 
product on $\mathcal{P}$ gives an easy way 
to find symmetric operators which then might 
be extendable to self-adjoint operators 
representing physical observables. 
This more general  
approach is presented in \cite{sbs4}. 

Given the setting of the previous paragraph we now  define Toeplitz operators. 
\begin{definition}
Suppose that  
$g \in \mathcal{A}$ and $\phi \in \mathcal{P}$. 
We introduce the notation $\phi g := \alpha (\phi \otimes g)  \in \mathcal{A}$ and define 
$$
T_g (\phi) := P (\phi g ) = P\alpha (\phi \otimes g) \in \mathcal{P}. 
$$
Then $T_g : \mathcal{P} \to \mathcal{P}$ 
is a linear map, and 
we say that $T_g$ is the {\rm Toeplitz operator with 
symbol $g$}. 
\end{definition}

The notation $\phi g$ was introduced merely to 
emphasize the similarity with 
classical Toeplitz operators.  
Another handy notation is $ \cdot \otimes g$, 
which is the linear map
$\mathcal{P} \to \mathcal{P} \otimes \mathcal{A}$ 
defined for $g \in \mathcal{A}$ and 
$\phi \in \mathcal{P}$ by 
$$
         ( \cdot \otimes g) \, \phi := \phi \otimes g. 
$$
Here is the corresponding diagram defining $T_g$ as the 
composition of these three maps: 
\begin{equation}
\label{three-maps}
         \mathcal{P} 
         \stackrel{\cdot \otimes g}{\longrightarrow} \mathcal{P} \otimes \mathcal{A} 
         \stackrel{\alpha}{\longrightarrow} \mathcal{A}
          \stackrel{P}{\longrightarrow} \mathcal{P}. 
\end{equation}
Thus the Toeplitz operator $T_g$ is defined
for each symbol for $g \in \mathcal{A}$ as 
\begin{equation}
\label{T-given-by}
        T_g := P \, \alpha \, (\cdot \otimes g) \in \mathcal{L}(\mathcal{P}). 
\end{equation}
where $\mathcal{L} (\mathcal{P} ) := 
\{ A : \mathcal{P} \to \mathcal{P} ~|~ A  \mathrm{~is~linear} \} $. 

To bring this more closely into notational accord with 
the usual definition of a 
Toeplitz operator in classical 
analysis, for each $g \in \mathcal{A}$ we define 
$$
M_g := \alpha (\cdot \otimes g) :  \mathcal{P} \to  \mathcal{A}.
$$
We note that $M_g$ is typically {\em not} an algebra morphism, 
even though both $\mathcal{P}$ and $\mathcal{A}$ are algebras. 
Then $T_g = P \, M_g$. 
Moreover, if we take $\alpha$ to be 
the restriction of the multiplication
on $\mathcal{A}$, which as was noted above is a 
possible case, 
then $M_g$ is indeed the operation of multiplication by $g$ on the right. 
(The change to get multiplication by $g$ on the left is easy enough.) 
However, even the rather general formula $M_g = \alpha (\cdot \otimes g)$ can itself
be generalized easily. 
All that we need is any linear map $\mathcal{A} \ni g \mapsto M_g$, where
$M_g : \mathcal{P} \to  \mathcal{A}$ is linear, 
that is, we need a linear map 
$ M: \mathcal{A} \to 
\mathrm{Hom}_{ \mathrm{Vect} } 
(\mathcal{P}, \mathcal{A} ) 
$, 
where $\mathrm{Hom}_{\mathrm{Vect} } (V,W)$ 
means the vector space of 
all linear maps $ V \to W$ of the vector spaces 
$V$ and $W$. 

We are using the unconventional notation $\mathcal{L}(\mathcal{P})$ 
in order to denote the complex
vector space of {\em all} the linear maps 
$A : \mathcal{P} \to \mathcal{P}$. 
Any such map $A$ can be considered as a densely defined linear operator in the
Hilbert space $\mathcal{H}$. 
We note that $A$ may or may not be a bounded operator. 
However, note that in general there are densely defined linear operators in the
Hilbert space $\mathcal{H}$ that do not lie in $\mathcal{L}(\mathcal{P})$.  
This is so for two reasons: 
First, the domain of a densely defined operator need not be equal to $\mathcal{P}$;
second, the domain 
need not be mapped to itself under 
the action of such an operator. 

The Toeplitz quantization that has been defined  
associates to each symbol $g \in \mathcal{A}$
an operator $T_g \in \mathcal{L}(\mathcal{P})$, 
which is the Toeplitz operator with symbol $g$. 
The mapping $T: \mathcal{A} \to \mathcal{L}(\mathcal{P})$ that is given by $ T : g \mapsto T_g$
is called the 
{\em Toeplitz quantization (scheme)}. 
A question that arises naturally is whether the 
Toeplitz quantization $ T $ is injective, 
that is, if a Toeplitz operator comes from 
a unique symbol. 
For example, in a certain context 
Theorem~4.3 in \cite{sbs3} says that the sesquilinear 
form on $ \mathcal{A} $ being non-degenerate is 
a necessary and sufficient condition for $ T $ to be 
injective. 
See \cite{sbs3} for more details. 

Even though $\mathcal{A}$ and 
$\mathcal{L}(\mathcal{P})$ are algebras, 
the Toeplitz quantization $ T $ 
is not expected nor desired 
to be an algebra morphism. 
On the contrary, the deviation of $ T $ from 
being an algebra morphism is some way of 
measuring the `quantum-ness' of $ T $. 
As an example, we might have elements 
$ g, h \in \mathcal{A} $ satisfying the  
`classical' $ q $-commutation relation 
$ g h - q h g = 0 $ for $ q \in \mathbb{C} $, 
while the corresponding 
Toeplitz operators satisfy the `quantum' 
$ q $-commutation relation 
$ T_g T_h - q T_h T_g = \hbar \, I_{\mathcal{P}} $. 
In Section~\ref{ccr-section} the rigorous 
definitions of `classical' and `quantum' 
relations are given in a related context. 

The identity element $1 = 1_{\mathcal{A}}$ in 
$\mathcal{A}$ has played 
no essential role so far in this theory. 
It seems that in the examples the main property
of $1$ that arises  is $T_1 = I_{\mathcal{P}}$, the identity map. 
Nonetheless, we would like to find the dual of this property in the co-Toeplitz setting. 
To achieve this requires more details about how 
$T_1$ is defined in the Toeplitz setting. 
These details are rather trivial, but their 
duals in the co-Toeplitz setting motivate 
an important definition there, as we shall see. 

Let's first note that 
$\mathrm{Hom}_{\mathrm{Vect} } (\mathbb{C} , \mathcal{A} ) \cong 
\mathcal{A} $ in a natural way. 
Explicitly, a symbol $g \in \mathcal{A}$ corresponds 
to the linear map $l_g : \mathbb{C} \to \mathcal{A}$
given by $l_g (z) := z \, g$ 
for every $z \in \mathbb{C}$. 
And an arbitrary linear map 
$l : \mathbb{C} \to \mathcal{A}$ 
has the form $l = l_g$, where $g:= l(1)$ with 
$ 1 \in \mathbb{C} $. 
Then we have that the composition  
$$
\mathcal{P} \cong \mathcal{P} \otimes \mathbb{C} 
\stackrel{id \otimes l_g}{\longrightarrow} 
\mathcal{P} \otimes \mathcal{A}
$$
is equal to $\cdot \otimes g$. 
So we can use this to re-write \eqref{T-given-by} as 
$T_{g} = P \, \alpha \, (id \otimes l_g)$. 
By taking the case where 
$g = 1_{\mathcal{A}} = 1 \in \mathcal{A}$ 
we see that $l_1 = \eta : \mathbb{C} \to \mathcal{A}$, the unit map of the algebra $\mathcal{A}$. 
By further taking $\alpha$ 
to be the restriction of the multiplication of 
$\mathcal{A}$, that is 
$\alpha = \mu_{\mathcal{A}} \, (\iota \otimes id)$, 
we easily get 
$T_1 = I_{_{\mathcal{P}}}$. 

Various examples of this sort of Toeplitz quantization have been worked out
in some of the author's papers. 
In those examples there is some sort of 
definition of a `holomorphic element' in the
algebra $\mathcal{A}$, which then must actually be a $*$-algebra, and $\mathcal{P}$ is the sub-algebra 
(but {\em not} a sub-$*$-algebra) of holomorphic elements in $\mathcal{A}$. 
There is also a concept of `anti-holomorphic element' in $\mathcal{A}$ with its corresponding
sub-algebra, defined by 
$\overline{\mathcal{P}}:= \mathcal{P^*}$, of the anti-holomorphic elements. 
Then Toeplitz operators with symbols in $\mathcal{P}$ 
are defined to be creation operators. 
On the other hand, 
Toeplitz operators with symbols in 
$\overline{\mathcal{P}}$ are defined 
to be annihilation operators. 
This aspect of the theory, which includes commutation relations among these operators, gives 
the theory contact with ideas from the mathematical physics of quantum systems. 

It might be worthwhile to recall for the record 
what a $ * $-algebra with identity $ 1 $ is. 
First off, a {\em $ * $-operation} 
(or {\em conjugation}) on a 
vector space $ V $ is an  
{\em anti-linear} 
map $ V \to V $, denoted  
by $v \mapsto v^*$ for $ v \in V $, that is also an 
involution (that is, $ v^{**} =v $). 
Then a {\em $ * $-algebra with identity~$ 1 $} 
is an algebra $ \mathcal{A} $ with identity~$ 1 $ 
which also has a  $ * $-operation satisfying 
$ (a b)^{*} = b^{*} a^{*} $ 
for all $ a,b \in \mathcal{A} $ 
as well as $ 1^{*} = 1 $. 

The conjugation in the symbol space $\mathcal{A}$ interchanges by definition the 
holomorphic and anti-holomorphic sub-algebras, namely
$$
            \mathcal{P}^* = \overline{\mathcal{P}} \quad \mathrm{and} \quad
            (\overline{\mathcal{P}})^* = \mathcal{P}. 
$$

But the Toeplitz quantization that we have described breaks this symmetry, since the 
creation and annihilation operators have distinct 
properties in specific examples.
The origin of this has to do with the fact that the 
Toeplitz operators are acting in the
holomorphic space $\mathcal{P}$, 
even though we could have used 
the anti-holomorphic space 
$\overline{\mathcal{P}}$ instead of $\mathcal{P}$. 
All of the technical details work out if we 
use $\overline{\mathcal{P}}$.  
For example, the projection of $ \mathcal{A} $ 
onto $\overline{\mathcal{P}}$ is given by 
the linear operator $ P^{*} $, 
where $ P^{*} (f) := (  P(f^{*}) )^{*}$ 
is the standard $ * $-operation 
(but {\em not} adjoint) of an operator.  
Then the Toeplitz quantization 
(which now produces operators in 
$ \mathcal{L} ( \overline{\mathcal{P}} ) $) 
of the symbols 
in $\mathcal{P}$ give the annihilation operators, 
while on the other hand 
the Toeplitz quantization of the symbols 
in $\overline{\mathcal{P}}$ give 
the creation operators. 
However, this is still to be considered 
as a type of Toeplitz quantization. 
The new concept of co-Toeplitz quantization 
comes in the next section. 

It is important to realize that the role played 
by the sesquilinear form on $ \mathcal{A} $  
is not essential to this theory. 
However, it 
does unify three different aspects of it. 
First, it can be used to define the projection
$ P $, although that can be done without having a 
sesquilinear form. 
Second, it can be used to define the left action, 
although that can also be defined independently. 
Third, it restricts to an inner product 
on $ \mathcal{P} $. 
But one can also define that inner product 
directly. 
Given these comments, we see how the sesquilinear
form, which does appear in some examples, can be 
removed from this theory without basically 
changing it.

\section{The co-Toeplitz quantization}
\label{co-toeplitz-quantization-section}

Now we continue with the dual development of the new theory of 
co-Toeplitz quantization. 
This is achieved by reversing most of the arrows in the theory of 
Toeplitz quantization as outlined in the previous section. 
This sort of duality is well known in category 
theory and is called {\em notion duality}. 
We will consider {\em object duality} in 
Section~\ref{duality-section}. 

We let $\mathcal{C}$ be a co-associative co-algebra 
with a  co-unit 
$\varepsilon : \mathcal{C} \to \mathbb{C}$ and with 
 $\Delta : \mathcal{C} \to \mathcal{C}    \otimes \mathcal{C}$, a 
possibly non-co-commutative 
co-multiplication. 
The co-algebra $\mathcal{C}$ is the 
symbol space for the co-Toeplitz quantization. 
It is important to note that even the co-commutative case is new. 
For the definition 
and basic properties of co-algebras see \cite{KS}. 

We suppose next that $\mathcal{C}$ is equipped with a 
sesquilinear, complex symmetric form 
denoted by 
$\langle \cdot , \cdot \rangle_\mathcal{C}$. 
Let $\mathcal{P}$ be  a co-associative, co-algebra 
with co-multiplication $\Delta^\prime$, 
but not necessarily with a co-unit. 
Suppose that there also exists a 
co-algebra morphism 
$Q : \mathcal{C} \to \mathcal{P}$, 
dual to $ \iota $ in the Toeplitz setting. 
Also, we suppose that there exists 
a linear map $j  : \mathcal{P} \to \mathcal{C}$, 
dual to $ P $ in the Toeplitz setting, 
such that 
$$
Q \, j = id_\mathcal{P}. 
$$

The injection $j$ need not be a co-algebra morphism. 
We suppose that the form on $\mathcal{C}$ restricts  
down using $j$ to a 
positive definite inner product 
$\langle \cdot , \cdot \rangle_\mathcal{P}$ 
on $\mathcal{P}$, that is to say, 
$\langle f , g \rangle_\mathcal{P} = \langle j(f) , j(g) \rangle_\mathcal{C}$ 
holds for all $f,g \in \mathcal{P}$. 
Therefore, $\mathcal{P}$ is a pre-Hilbert space. 
We let $\mathcal{H}$ denote a Hilbert space completion of $\mathcal{P}$ such 
that $\mathcal{P}$ is a dense subspace of $\mathcal{H}$. 
Comparing this with the Toeplitz setting, we 
notice that the arrow of the inclusion map 
of the pre-Hilbert space $\mathcal{P}$ into 
the Hilbert space $\mathcal{H}$ has not been 
reversed in the co-Toeplitz setting. 
So, it still makes intuitive sense to think 
of $\mathcal{P}$ as a space of 
`holomorphic polynomials'
and of $\mathcal{H}$ as a type of 
generalized Segal-Bargmann space
of `holomorphic functions'. 

The projection map $Q$ in this setting is 
quite abstract, although it is required to be a 
co-algebra morphism while the projection $P$ 
in the Toeplitz setting 
was only required to be linear. 
Nonetheless a similar formula using the form 
$\langle \cdot , \cdot \rangle_\mathcal{C}$ 
can be used to define $Q$ in examples. 
We will see this in the example in 
Section \ref{example-section}. 

We also suppose that there is a 
{\em left co-action} of 
the co-algebra $\mathcal{P}$ on 
$\mathcal{C}$, namely, there exists a linear map
$$
  \beta : \mathcal{C} \to \mathcal{P} \otimes \mathcal{C}
$$
which has the usual properties dual to those of 
a left action, namely, 
$$
  ( \varepsilon^\prime \otimes  id_{ \mathcal{C} }) \, \beta 
  \cong id_{ \mathcal{C} } \qquad 
  \mathrm{and} \qquad ( id_{ \mathcal{P} } \otimes \beta) \, \beta = 
  (\Delta^\prime \otimes id_{\mathcal{C}} ) \, \beta. 
$$
Each of these properties can be expressed by 
a commutative diagram. 
However, the first property is only required when 
the co-algebra $ \mathcal{P} $ has a co-unit 
$ \varepsilon^\prime $. 
As an example the left co-action 
$\beta$ could be the composition
\begin{equation}
\label{special-beta}
       \mathcal{C} \stackrel{\Delta}{\longrightarrow} \mathcal{C} \otimes \mathcal{C} 
       \stackrel{Q \otimes id}{\longrightarrow} \mathcal{P} \otimes \mathcal{C}  
\end{equation}
as the reader can readily  verify by checking that the 
corresponding diagrams commute. 
(Hint: The co-associativity of $\Delta$ is used.)  
In this case $\beta$ is a projection of the co-multiplication of $\mathcal{C}$. 
In the dual case of Toeplitz operators we had  
a particular choice of the left action $\alpha$ 
given by 
$\alpha = \mu_{\mathcal{A}} \, ( \iota \otimes id )$.  
So this particular choice of $\beta$ in  
\eqref{special-beta} 
is dual to that choice of $\alpha$ in the Toeplitz case. 
Also, much as in the Toeplitz case, this choice of 
$\beta$ is the only place in this theory of co-Toeplitz 
operators where we use the co-multiplication of  $\mathcal{C}$. 
With other choices of $\beta$ which do not depend
on the co-multiplicative structure of $\mathcal{C}$ 
we do not need to assume that $\mathcal{C}$ is a 
co-algebra. 
Rather, we only need to assume that $\mathcal{C}$ is 
a vector space equipped with a $*$-structure. 
In the example in Section~\ref{example-section}  
we will use the 
particular choice \eqref{special-beta} 
and so that example will be a 
co-algebra. 
It remains for future research work to find non-trivial 
examples of co-Toeplitz operators in a setting where
the symbol space is not a co-algebra. 

Given the set-up of the previous paragraph, we 
now define co-Toeplitz operators.
\begin{definition}
We take $g \in \mathcal{C}$, known as a {\em symbol}, 
and then consider the composition, dual 
to diagram \eqref{three-maps}, 
of these three linear maps from right to left: 
\begin{equation}
\label{dual-three-maps}
\mathcal{P} \stackrel{\pi_g}{\longleftarrow} \mathcal{P} \otimes \mathcal{C} 
 \stackrel{\beta}{\longleftarrow} \mathcal{C}  
\stackrel{j}{\longleftarrow} \mathcal{P}, 
\end{equation}
where the family of linear maps 
$\{ \pi_g ~|~ g \in \mathcal{C} \}$, 
the dual to the family of linear maps 
$\{ \cdot \otimes g ~|~ g \in \mathcal{C} \}$, 
has yet to be defined. 
Then $C_g := \pi_g \, \beta \, j$ 
is the definition of the 
{\em (left) co-Toeplitz operator with symbol $g$}. 
\end{definition}

Clearly, $C_g : \mathcal{P} \to \mathcal{P}$ is linear or, in other words, 
$C_g \in \mathcal{L}(\mathcal{P})$. 
In particular, $C_g$ is a densely defined operator 
in the Hilbert space $\mathcal{H}$. 
By replacing $\beta$ with a right co-action we get a 
theory of {\em right co-Toeplitz operators}. 
That quite similar, analogous theory 
will not be discussed here; 
we will only concern ourselves 
with left co-Toeplitz operators.  

Next, the possibly non-linear function 
$C : \mathcal{C} \to \mathcal{L}(\mathcal{P})$ 
defined by 
$g \mapsto C_g$ is called 
the {\em co-Toeplitz quantization}.  
We note in passing that the vector space 
$\mathcal{L}(\mathcal{P})$
is an algebra under the multiplication 
given by composition of operators, 
while $\mathcal{L}(\mathcal{P})$ does not seem 
to have a natural co-algebra structure. 

As in the Toeplitz setting, it is natural to ask 
whether the co-Toeplitz quantization map $ C $ 
is injective. 
It seems reasonable to conjecture that this will 
depend on other conditions, much as we 
already remarked is the case 
in the Toeplitz setting. 

Analogously to the Toeplitz case, we can introduce some notation to help
understand better what is going on here. 
In analogy to $M_g$ we define 
$$
\tilde{M}_g := \pi_g \, \beta :  \mathcal{C} \to \mathcal{P}
$$
for $g \in \mathcal{C}$.
Then $C_g = \tilde{M}_g  \, j = \pi_g \, \beta \, j \in \mathcal{L}(\mathcal{P})$. 
Be aware that $\tilde{M}_g $ maps a co-algebra to a co-algebra, 
but $\tilde{M}_g $ is not a co-algebra morphism. 
This is dual to the Toeplitz setting where 
$M_g : \mathcal{P} \to \mathcal{A} $ is a map 
between algebras, but is not an algebra morphism. 

We still have a quite general theory (possibly too general!), since 
the family $\{ \pi_g ~|~ g \in \mathcal{C} \}$ is quite arbitrary in the above discussion. 
For example, $\pi_g$ could be independent of $g$ thereby 
giving a co-Toeplitz quantization that does not depend on the symbol. 
This is much more general than we would 
wish to consider.  
A more acceptable possibility is to define 
$\pi_g :  \mathcal{P} \otimes \mathcal{C} \to \mathcal{P}$ by 
\begin{equation}
\label{define-pi-g}
\pi_g (\phi \otimes f) := \langle g , \, f  \rangle_\mathcal{C} \, \phi
\end{equation}
for $\phi \in \mathcal{P}$ and $f, g \in \mathcal{C}$. 
To see that this formula gives a dual to the
map $ \cdot \otimes g $ (now defined in the co-Toeplitz setting), 
we consider the following calculation for
$ \psi, \phi \in \mathcal{P} $ and 
$ f, g \in \mathcal{C} $: 
\begin{align*}
\langle 
(\cdot \otimes g) \psi , \phi \otimes f
\rangle_{\mathcal{P} \otimes \mathcal{C}} &=
\langle 
\psi \otimes g , \phi \otimes f
\rangle_{\mathcal{P} \otimes \mathcal{C}}
\\
&=
\langle 
\psi , \phi 
\rangle_{\mathcal{P}} 
\,
\langle 
g , f
\rangle_{\mathcal{C}} 
\\
&=
\langle 
\psi , 
\langle 
g , f
\rangle_{\mathcal{C}} \, \phi 
\rangle_{\mathcal{P}} 
\\
&=
\langle 
\psi , 
\pi_{g} (\phi \otimes f) 
\rangle_{\mathcal{P}}. 
\end{align*}
This provides some justification for 
the formula \eqref{define-pi-g}  
for $ \pi_{g} $. 
Note that the second equality here is the standard 
definition of the sesquilinear form on 
$ \mathcal{P} \otimes \mathcal{C} $. 

Now given our convention for sesquilinear forms, 
$\pi_g$ is a linear map, but in this case 
the co-Toeplitz quantization mapping  
$C : g \mapsto C_g$ is anti-linear.  
It seems to be some sort of tradition 
in mathematical physics that a 
quantization map should be linear. 
To avoid this slight unpleasantness we could 
define $ \pi_g $ by 
\begin{equation*}
\pi_g (\phi \otimes f) = \langle g^* , \, f \rangle_\mathcal{C} \, \phi
\end{equation*}
for $\phi \in \mathcal{P}$ and $f, g \in \mathcal{C}$.  
Of course, to have this make sense we must assume 
that $\mathcal{C}$ is a $ * $-co-algebra, 
which we will do anyway later.  
But, we rather prefer to let the quantization 
mapping be anti-linear. 

Again for the record let us recall that a 
{\em $ * $-co-algebra} $ \mathcal{C} $ is a 
co-algebra with a $ * $-operation 
such that the co-multiplication map
$ \Delta : \mathcal{C} \to 
\mathcal{C} \otimes \mathcal{C} $
is a {\em $ * $-morphism}, namely, 
$ \Delta (g^{*}) = ( \Delta (g) )^{*}$.
Since the co-algebra 
$ \mathcal{C} $ has a co-unit 
$ \varepsilon : \mathcal{C} \to \mathbb{C} $, 
we also require that $ \varepsilon $ 
is a $ * $-morphism, namely,  
$ \varepsilon (g^{*}) = 
( \varepsilon ( g ) )^{*}$. 
Note that the $ * $-operation of 
$ \mathcal{C} \otimes \mathcal{C} $ is 
determined by 
$(g \otimes h)^{*} = g^{*} \otimes h^{*}$
for $ g,h \in \mathcal{C} $. 
Be aware that this is not exactly dual to the
definition of a $ * $-algebra, where the 
multiplication is required to be an
anti-$ * $-morphism. 

Given the definition \eqref{define-pi-g} 
for $\pi_g$ we can write down more explicit 
expressions for $\tilde{M}_g$ and $C_g$. 
So we take $f \in \mathcal{C}$ and 
then in Sweedler's notation for a co-action 
(see Appendix~B in \cite{QPB}) we have 
$$
\beta (f) = f^{(0)} \otimes  f^{(1)} \in \mathcal{P} \otimes \mathcal{C}.
$$ 

It follows for $ g \in \mathcal{C} $ that
$$
\tilde{M}_g (f) = 
\pi_g \, \beta (f) = 
\pi_g ( f^{(0)} \otimes  f^{(1)} ) = 
\langle g , \, f^{(1)} \rangle_\mathcal{C} \, f^{(0)}. 
$$

For $C_g$ we simply note that for 
$\phi \in \mathcal{P}$ we have that
$$
C_g (\phi) = \tilde{M}_g  \, j (\phi) = 
\langle g , \, f^{(1)} \rangle_\mathcal{C} \,  f^{(0)}, 
$$
where now $f = j (\phi)$. 
If we use the injection $j$ to identify $ \mathcal{P}$ as a subspace of $\mathcal{C}$, 
then the previous expression simplifies to 
$$
C_g (\phi) =  
\langle g , \, \phi^{(1)} \rangle_\mathcal{C} \, \phi^{(0)}. 
$$

The co-action $\beta$ is a basic operation 
in these expressions. 
However, $\beta$ is hidden inside Sweedler's notation. 
For example, as noted earlier, we can take 
$\beta = (Q \otimes id) \, \Delta_\mathcal{C} 
: \mathcal{C} \to \mathcal{P} \otimes \mathcal{C}$. 
Then for $f \in \mathcal{C}$ we have 
$$
     \beta (f) = Q ( f^{(1)}) \otimes  f^{(2)}
     \quad \mathrm{and} \quad 
     C_{g} (\phi) = 
     \langle g, f^{(2)} {\rangle}_{\mathcal{C}} 
     \, Q (f^{(1)}), 
$$
where we are using Sweedler's notation 
for the co-multiplication, that is,  
$\Delta_\mathcal{C} (f) = f^{(1)} \otimes  f^{(2)} 
\in \mathcal{C} \otimes \mathcal{C}$. 
Be aware please that this is 
{\em not} Sweedler's notation 
$f^{(0)} \otimes  f^{(1)}$ 
introduced above for the co-action $\beta$. 

To maintain contact with physics ideas we only 
consider the case when 
$\mathcal{C}$ is a $*$-co-algebra. 
But, in that case we do {\em not} require 
$\mathcal{P}$ to be a sub-$*$-co-algebra. 
Rather we think of the elements 
in $\mathcal{P}$ as being {\em holomorphic} variables, 
while those in $\mathcal{P}^*$
are {\em anti-holomorphic} variables. 
Then the creation operators are defined 
to be those of the form $C_g$ 
where $g \in \mathcal{P}^{*}$, 
while annihilation operators are those 
of the form $C_g$ 
where $g \in \mathcal{P}$. 
What relation holds between the
operators 
$(C_g)^*$, the adjoint of $ C_{g} $,  
and $C_{g^*}$ for a symbol 
$g \in \mathcal{C}$ 
is a question that we will consider later. 

A possible relation between the 
sesquilinear form and the $ * $-operation 
is given in the next definition. 
This property was already described in the 
Toeplitz setting in \cite{sbs5}, 
but it was not given its own name there. 

\begin{definition}
	If for all $ f, g \in \mathcal{C} $ the identity 
	\begin{equation}
	\label{star-symmetry}
	\langle f^{*} , g^{*} \rangle_{\mathcal{C}} 
	= \langle f , g \rangle_{\mathcal{C}}^{*} 
	\end{equation}
	holds, 
    then we say that the sesquilinear form 
    $ \langle \cdot , \cdot \rangle_{\mathcal{C}} $
    is {\rm $ * $-symmetric}.	
\end{definition}

As in the Toeplitz setting it is important to 
understand the role of the sesquilinear form 
in the co-Toeplitz setting, where it has the 
same three aspects mentioned earlier 
as in the Toeplitz setting 
plus a new aspect, which is that it appears 
in the definition \eqref{define-pi-g} of $ \pi_{g} $. 
This seems to be a more essential role 
since $ \pi_{g} $ so defined 
is dual to the map $ \cdot \otimes g$ 
in the Toeplitz setting.

\section{The co-unit and co-symbols}
\label{co-symbols-section}

So far the co-unit has not played a role 
in this theory of co-Toeplitz operators. 
To achieve this we now will dualize the 
theory from the Toeplitz setting. 
Since the co-unit 
$\varepsilon : \mathcal{C} \to \mathbb{C}$ 
is a linear map, we consider how to deal 
with an arbitrary linear map 
$\lambda : \mathcal{C} \to \mathbb{C}$ in a way that 
is dual to the linear maps 
$l : \mathbb{C} \to \mathcal{A}$ 
which appeared in the Toeplitz setting. 
The dual construction, starting with $\lambda$ 
instead of with a symbol $g \in \mathcal{C} $, 
gives us a more general type of co-Toeplitz 
operator defined as the composition from 
right to left as follows: 
\begin{equation}
\label{general-type-with-e} 
\mathcal{P} \cong \mathcal{P} \otimes \mathbb{C} 
\stackrel{id \otimes \lambda}{\longleftarrow} 
\mathcal{P} \otimes \mathcal{C} 
\stackrel{\beta}{\longleftarrow} \mathcal{C}
\stackrel{j}{\longleftarrow} \mathcal{P} 
\end{equation}

However, the linear functional 
$\lambda$ lies in 
$\mathrm{Hom}_{\mathrm{Vect}} ( \mathcal{C} ,
\mathbb{C} )$ 
which, quite unlike its dual 
$\mathrm{Hom}_{\mathrm{Vect}} 
( \mathbb{C}, \mathcal{C} )$, 
is {\em not} naturally isomorphic in general to 
$\mathcal{C} $. 
Of course, for every symbol $g \in \mathcal{C}$ each 
\begin{equation}
\label{define-e-q}
e_g := \langle g, \cdot \rangle_{\mathcal{C}} 
\end{equation}
lies in 
$\mathrm{Hom}_{\mathrm{Vect}} (\mathcal{C} ,
\mathbb{C})$.   
Moreover, if we take $\lambda = e_g$ 
in diagram  \eqref{general-type-with-e}, 
we readily see that 
$$
id \otimes e_g = \pi_g 
$$
and so we do have the co-Toeplitz operators as 
defined above as a special case of the more 
general definition 
$$
  C_{\lambda} := ( id_{\mathcal{P}} \otimes \lambda  ) \, \beta \, j
$$
for  
$\lambda \in \mathrm{Hom}_{\mathrm{Vect}} 
( \mathcal{C} ,
\mathbb{C} )$. 
Having this definition in hand, it now makes 
sense to study the co-Toeplitz operator 
$C_{\varepsilon}$, where 
$\varepsilon : \mathcal{C} \to \mathbb{C}$ 
is the co-unit of the co-algebra $\mathcal{C}$. 

In the Toeplitz setting we had that 
$T_1 = I_{\mathcal{P}}$ in the special case when the 
left action was the restriction of the 
multiplication of $\mathcal{A}$. 
So in the present co-Toeplitz setting we expect 
a similar result when the left co-action 
$\beta$ is the projection of the 
co-multiplication, that is, when we have 
$\beta = (Q \otimes id_{\mathcal{C}} ) \,  
\Delta_{\mathcal{C}} $. 
In this case
for $\phi \in \mathcal{P}$ we compute that   
\begin{align*}
C_{\varepsilon} \, \phi &= 
( id_{\mathcal{P}} \otimes \varepsilon  ) \, \beta \, j \phi 
= 
( id_{\mathcal{P}} \otimes \varepsilon  ) \, 
( Q \otimes id_{\mathcal{C}} ) \, 
\Delta_{\mathcal{C}} \, \phi 
\\
&= 
 ( Q \otimes id_{\mathbb{C}} ) 
 ( id_{\mathcal{C}} \otimes \varepsilon  ) \, 
 \, \Delta_{\mathcal{C}} \, \phi 
 = 
 ( Q \otimes id_{\mathbb{C}} ) ( \phi \otimes 1 ) 
 \\
 &\cong Q \, \phi 
 = \phi, 
\end{align*}
where in the last equality we used that 
$\phi \in \mathcal{P}$. 
Also $ 1 $ here means the identity element 
$ 1 \in \mathbb{C} $. 

This discussion, which seemed at the start to be 
a minor side issue, has given rise to a new 
definition which we now explicitly state. 
\begin{definition}
	Let 
	$\lambda \in \mathcal{C}^\prime :=
	\mathrm{Hom}_{\mathrm{Vect}} ( \mathcal{C} ,
	\mathbb{C} )$ 
	be a linear functional on 
	the co-algebra $\mathcal{C}$.  
	Then we define the 
	{\rm (generalized) co-Toeplitz operator}  
	with {\rm co-symbol $\lambda$} to be 
	the linear map 
$$
   C_{\lambda} := 
   ( id_{\mathcal{P}} \otimes \lambda ) \, \beta \, j 
   \in \mathcal{L} (\mathcal{P}). 
$$

Much as before, we define the 
{\rm (generalized) co-Toeplitz quantization} 
to be the map 
$ C:\mathcal{C}^\prime \to \mathcal{L} (\mathcal{P})$
given by $\lambda \mapsto C_\lambda$  
for $\lambda \in \mathcal{C}^\prime$. 
\end{definition}

We sometimes omit the word `generalized' 
when speaking of these new objects, since 
the fact that we are using co-symbols in 
$\mathcal{C}^\prime$ rather than symbols
in $\mathcal{C}$ suffices to remove 
any ambiguity. 	
Note that the notation in this definition 
gives us the strange looking identity 
$$ 
   C_{g} = C_{e_{g}} 
$$
for any $g \in \mathcal{C}$, where 
on the left side there is a co-Toeplitz 
operator with symbol $ g $ and on the right 
side there is a generalized co-Toeplitz operator 
with co-symbol $ e_{g} $ as defined in 
\eqref{define-e-q}. 

So, given this definition,  
we have proved above the following result, which is 
dual to the result that $T_1 = I_{\mathcal{P}}$ 
in the Toeplitz setting. 
\begin{prop}
Let the left co-action be 
\begin{equation}
\label{beta-special-form}
\beta = 
( Q \otimes id_{\mathcal{C}} ) \, \Delta_{\mathcal{C}}. 
\end{equation}
Then the  
co-Toeplitz quantization of the co-unit  
$\varepsilon$ of the co-algebra $\mathcal{C}$ is 
$$
C_{\varepsilon} = I_{\mathcal{P}},  
$$
the identity operator on $ \mathcal{P} $. 
\end{prop}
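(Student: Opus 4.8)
The plan is to verify $C_{\varepsilon}=I_{\mathcal{P}}$ by evaluating on an arbitrary $\phi\in\mathcal{P}$ and tracking it through the three maps in diagram \eqref{dual-three-maps}, essentially formalizing the computation displayed just before the proposition but being careful about the natural isomorphisms. First I would unfold the definition $C_{\varepsilon}=(id_{\mathcal{P}}\otimes\varepsilon)\,\beta\,j$ and insert the hypothesis \eqref{beta-special-form}, obtaining $C_{\varepsilon}\phi=(id_{\mathcal{P}}\otimes\varepsilon)(Q\otimes id_{\mathcal{C}})\,\Delta_{\mathcal{C}}(j\phi)$, an element of $\mathcal{P}\otimes\mathbb{C}\cong\mathcal{P}$.

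The key step is the interchange law for tensor products of linear maps: since $Q$ acts only on the first tensor factor and $\varepsilon$ only on the second, one has $(id_{\mathcal{P}}\otimes\varepsilon)(Q\otimes id_{\mathcal{C}})=(Q\otimes id_{\mathbb{C}})(id_{\mathcal{C}}\otimes\varepsilon)=Q\otimes\varepsilon$. Applying this and then the counit axiom of the co-algebra $\mathcal{C}$ in the form $(id_{\mathcal{C}}\otimes\varepsilon)\,\Delta_{\mathcal{C}}\cong id_{\mathcal{C}}$ (under $\mathcal{C}\otimes\mathbb{C}\cong\mathcal{C}$) reduces the expression to $(Q\otimes id_{\mathbb{C}})(j\phi\otimes 1)\cong Q(j\phi)$, where $1\in\mathbb{C}$. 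The hypothesis $Q\,j=id_{\mathcal{P}}$ then gives $Q(j\phi)=\phi$, which finishes the argument since $\phi$ was arbitrary. Equivalently, in Sweedler notation $\Delta_{\mathcal{C}}(j\phi)=(j\phi)^{(1)}\otimes(j\phi)^{(2)}$ one computes $C_{\varepsilon}\phi=Q\big((j\phi)^{(1)}\big)\,\varepsilon\big((j\phi)^{(2)}\big)=Q\big((j\phi)^{(1)}\varepsilon((j\phi)^{(2)})\big)=Q(j\phi)=\phi$, the middle equality using linearity of $Q$ and the next using the counit axiom.

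There is no substantive obstacle here: the proof is purely diagrammatic and uses only the counit axiom of $\mathcal{C}$, functoriality of $\otimes$, and the defining property $Qj=id_{\mathcal{P}}$. The only point needing mild care is bookkeeping the canonical isomorphisms $\mathcal{P}\otimes\mathbb{C}\cong\mathcal{P}$ and $\mathcal{C}\otimes\mathbb{C}\cong\mathcal{C}$ — which is why $\cong$ rather than $=$ appears — and observing that it is the ``right-hand'' counit identity $(id\otimes\varepsilon)\Delta\cong id$ that is invoked, in keeping with the fact that $\beta$ in \eqref{beta-special-form} applies $Q$ to the left factor of $\Delta_{\mathcal{C}}$. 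Note also that the co-algebra morphism property of $Q$ is not needed for this particular statement; plain linearity of $Q$ suffices.
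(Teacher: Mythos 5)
Your proposal is correct and follows essentially the same route as the paper's own computation: unfold $C_{\varepsilon}=(id_{\mathcal{P}}\otimes\varepsilon)\,\beta\,j$, interchange the tensor factors to $(Q\otimes id_{\mathbb{C}})(id_{\mathcal{C}}\otimes\varepsilon)\,\Delta_{\mathcal{C}}$, apply the counit axiom, and finish with $Qj=id_{\mathcal{P}}$. Your extra care with the injection $j$ and the canonical isomorphisms is a minor tidying of the same argument, not a different approach.
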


So an important point here is that the set 
of co-symbols can be strictly larger than the 
set of co-symbols of the form $e_{g}$ for 
$g \in \mathcal{C}$.  
Recall that the sesquilinear form on $ \mathcal{C}$
could be degenerate, and so the Riesz 
representation theorem need not apply here. 
What we do see however is that the  
theory of co-Toeplitz operators 
with co-symbols 
can possibly admit more operators than the 
original co-Toeplitz operator 
theory with symbols only in $\mathcal{C}$. 

A more important point is that the dual space 
$\mathcal{C}^\prime$
of a co-algebra with co-unit 
has a canonical structure as 
an algebra with unit, where the multiplication 
of the elements 
$\alpha, \beta \in \mathcal{C}^\prime$ 
is defined as the composition 
$$
    \mathcal{C} 
    \stackrel{\Delta}{\longrightarrow} 
    \mathcal{C} \otimes \mathcal{C} 
    \stackrel{\alpha \otimes \beta}{\longrightarrow} 
    \mathbb{C} \otimes \mathbb{C}
    \cong \mathbb{C}
$$
and the unit is the linear map 
$\eta : \mathbb{C} \to \mathcal{C}^\prime$
defined by $ \eta (z):= z \varepsilon $
for all $z \in \mathbb{C}$, 
where $\varepsilon \in \mathcal{C}^\prime$
is the co-unit of $\mathcal{C}$. 
So the moral of this story is that the 
generalized co-Toeplitz quantization with 
co-symbols in $ \mathcal{C}^\prime $ is a 
map from the {\em algebra} $ \mathcal{C}^\prime $
to the {\em algebra} $ \mathcal{L} (\mathcal{P}) $
of linear operators. 
Of course, we do not expect this map $C$ to be an 
{\em algebra} morphism.  
Rather, as we have remarked earlier 
in the Toeplitz setting, 
the discrepancy that $ C $ has from being an 
algebra morphism is an indication of the 
`quantum-ness' of the generalized 
co-Toeplitz quantization $ C $. 

A particular case of this multiplication 
occurs by taking $\alpha = e_g$ and 
$\beta = e_h$ for $g,h \in \mathcal{C}$. 
Then for all $ \phi \in \mathcal{C} $ we get 
\begin{align*}
  (e_g \, e_h) (\phi) &= 
  (e_g \otimes e_h) (\phi^{(1)} \otimes \phi^{(2)}) 
  \\
  &=  e_g( \phi^{(1)} ) e_h( \phi^{(2)} )
  \\
  &= \langle g, \phi^{(1)} \rangle_{\mathcal{C}} \, 
     \langle h, \phi^{(2)} \rangle_{\mathcal{C}}. 
\end{align*}
Furthermore, if $ \phi $ is a group-like element 
(that is, $ \Delta (\phi) = \phi \otimes \phi $), 
then this simplifies to 
$$
   (e_g \, e_h) (\phi) = 
   \langle g, \phi \rangle_{\mathcal{C}} \, 
   \langle h, \phi \rangle_{\mathcal{C}} =
   e_g (\phi) \, e_h (\phi). 
$$

The family of the $ e_{g} $'s plays an 
important role in this theory. 
\begin{definition}
\label{define-e}
	We define 
	$ e :\mathcal{C} \to \mathcal{C}^{\prime} $ 
	by
	$ e(g):= e_g $ for all $ g \in \mathcal{C} $. 
\end{definition}

Note that $ e $ is an anti-linear map which 
need not be injective nor surjective. 
Moreover, the range of $ e $ need not be a 
subalgebra of $ \mathcal{C}^{\prime}  $. 
However, we do have the following nice property.

\begin{theorem}
Suppose the sesquilinear form is $ * $-symmetric. 
Then the range $ \mathrm{Ran} \, e $ of $ e $ 
is closed under 
the $ * $-operation of $ \mathcal{C}^{\prime}  $. 
More specifically, $ (e_{g})^{*} = e_{g^{*}} $ holds 
for all $ g \in \mathcal{C} $, that is, 
$ e $ is a $ * $-morphism. 
\end{theorem}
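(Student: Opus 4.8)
The plan is to compute $(e_g)^{*}$ directly from the definitions and show that it equals $e_{g^{*}}$; the statement that $\mathrm{Ran}\, e$ is closed under $ * $ then drops out immediately, as does the assertion that $ e $ is a $ * $-morphism.

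First I would pin down the $ * $-operation on the dual algebra $ \mathcal{C}^{\prime} $. Dually to the operator $ * $-operation $ P^{*}(f) = (P(f^{*}))^{*} $ recalled in Section~\ref{toeplitz-quantization-section}, the natural $ * $-operation on a functional $ \lambda \in \mathcal{C}^{\prime} $ is
$$
   \lambda^{*}(g) := \big( \lambda(g^{*}) \big)^{*}
   \qquad \text{for } g \in \mathcal{C}.
$$
One checks routinely that $ \lambda \mapsto \lambda^{*} $ is anti-linear and an involution (using that $ g \mapsto g^{*} $ is one on $ \mathcal{C} $), that it is compatible with the convolution product on $ \mathcal{C}^{\prime} $ built from $ \Delta $, and that $ \varepsilon^{*} = \varepsilon $ because $ \varepsilon $ is a $ * $-morphism; thus $ \mathcal{C}^{\prime} $ is genuinely a unital $ * $-algebra, which is what makes the statement meaningful.

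Next, fix $ g \in \mathcal{C} $ and evaluate $ (e_g)^{*} $ at an arbitrary $ f \in \mathcal{C} $. Using the definition of $ e_g $ in \eqref{define-e-q} and the $ * $-operation just described,
$$
   (e_g)^{*}(f) = \big( e_g(f^{*}) \big)^{*}
   = \langle g , f^{*} \rangle_{\mathcal{C}}^{\,*} .
$$
Now I apply the hypothesis that $ \langle \cdot , \cdot \rangle_{\mathcal{C}} $ is $ * $-symmetric, i.e.\ identity \eqref{star-symmetry}, with its two entries taken to be $ g $ and $ f^{*} $: this yields
$$
   \langle g , f^{*} \rangle_{\mathcal{C}}^{\,*}
   = \langle g^{*} , (f^{*})^{*} \rangle_{\mathcal{C}}
   = \langle g^{*} , f \rangle_{\mathcal{C}}
   = e_{g^{*}}(f),
$$
where the middle equality uses the involution property $ (f^{*})^{*} = f $. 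Since $ f $ was arbitrary, $ (e_g)^{*} = e_{g^{*}} $.

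Finally, $ \mathrm{Ran}\, e $ is closed under $ * $: any element of $ \mathrm{Ran}\, e $ has the form $ e_g $ for some $ g \in \mathcal{C} $, and $ (e_g)^{*} = e_{g^{*}} \in \mathrm{Ran}\, e $. The assertion that $ e $ is a $ * $-morphism (of the appropriate anti-linear kind) is precisely $ e(g^{*}) = e_{g^{*}} = (e_g)^{*} = (e(g))^{*} $, which is exactly what has just been shown. I expect no genuine obstacle here: the only thing demanding care is choosing the correct $ * $-operation on $ \mathcal{C}^{\prime} $ and faithfully tracking the complex conjugations coming from the sesquilinearity (anti-linear in the first slot) of $ \langle \cdot , \cdot \rangle_{\mathcal{C}} $, so that the single invocation of $ * $-symmetry lands on the right pair of arguments.
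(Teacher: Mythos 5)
Your computation is correct and is essentially identical to the paper's own proof: both evaluate $(e_g)^{*}$ at an arbitrary element, unwind the definition of the $*$-operation on $\mathcal{C}^{\prime}$, and apply the $*$-symmetry identity \eqref{star-symmetry} once to land on $e_{g^{*}}$. The extra remarks on verifying that $\mathcal{C}^{\prime}$ is a unital $*$-algebra are fine but not needed for the statement.
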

\noindent 
\textbf{Remark:}
The $ * $-operation of $ \mathcal{C}^{\prime}  $ 
is defined by 
$\lambda^{*} (g) := ( \lambda (g^{*}) )^{*}$
for $ \lambda \in \mathcal{C}^{\prime} $
and $ g \in \mathcal{C} $. 
\begin{proof}
We calculate for 
$ g, h \in \mathcal{C} $ that 
$$
  (e_{g})^{*} (h) =  
  \big( e_{g} (h^{*}) \big)^{*} = 
  \langle g , h^{*} \rangle_{\mathcal{C}}^{*} = 
  \langle g^{*} , h \rangle_{\mathcal{C}} = 
  e_{g^{*}} (h), 
$$
where we used the $ * $-symmetry in the 
third equality. 
This shows the second assertion of the theorem 
from which the first assertion follows directly. 
\end{proof}

It is a quite general fact that the Toeplitz 
quantization map does not preserve multiplication, 
even though it is a map between algebras. 
The co-Toeplitz quantization map does not 
preserve co-multiplication ever, since it maps
into a vector space with no natural 
co-multiplication even though its domain is a 
co-algebra. 
But the generalized co-Toeplitz quantization 
is a map from the algebra $ \mathcal{C}^{\prime} $
to the algebra $ \mathcal{L} (\mathcal{P}) $.  
And this map sends the identity element 
$ \varepsilon$ of $\mathcal{C}^{\prime} $
to the identity element 
$I_{\mathcal{P}}$ of $\mathcal{L} (\mathcal{P}) $ 
when \eqref{beta-special-form} holds. 
But what is the relation of  
the generalized co-Toeplitz quantization map 
with the multiplication? 
The next result may come as a surprise. 
\begin{theorem}
Suppose that a co-Toeplitz quantization satisfies: 
\begin{itemize}

\item 
The left co-action $ \beta $ is given by 
\eqref{beta-special-form}.  

\item  
$ \mathcal{P} $ is a sub-co-algebra 
of $ \mathcal{C} $, that is, 
$ \Delta_{ \mathcal{P}} = \Delta_{ \mathcal{C}} \!
\upharpoonright_{\mathcal{P}}$. 

\end{itemize}

\noindent 
Then the generalized co-Toeplitz quantization map 
$ 
C : \mathcal{C}^{\prime} \to \mathcal{L} (\mathcal{P}) $ 
is an algebra morphism. 
\end{theorem}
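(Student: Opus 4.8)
The plan is to compute both $C_{\lambda}C_{\mu}$ and $C_{\lambda\mu}$ explicitly on an arbitrary $\phi \in \mathcal{P}$ in Sweedler notation and to observe that they coincide, the agreement being forced by the co-associativity of $\Delta_{\mathcal{C}}$. Combined with the Proposition above, which gives $C_{\varepsilon}=I_{\mathcal{P}}$ precisely when \eqref{beta-special-form} holds, this yields that $C$ is a unital algebra morphism.

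First I would record the explicit form of a generalized co-Toeplitz operator on $\mathcal{P}$ under the two hypotheses. Using the injection $j$ to regard $\mathcal{P}$ as a subspace of $\mathcal{C}$, the second hypothesis says that $\Delta_{\mathcal{C}}(\phi) = \phi^{(1)} \otimes \phi^{(2)} \in \mathcal{P} \otimes \mathcal{P}$ for every $\phi \in \mathcal{P}$, where we use Sweedler's notation for $\Delta_{\mathcal{C}}$ (and \emph{not} the one for the co-action $\beta$; under the present hypotheses these two coincide on $\mathcal{P}$, as we are about to see). Since $\phi^{(1)} \in \mathcal{P}$ and $Q\,j = id_{\mathcal{P}}$, the first hypothesis \eqref{beta-special-form} gives $\beta\,j(\phi) = (Q \otimes id_{\mathcal{C}})\,\Delta_{\mathcal{C}}(\phi) = \phi^{(1)} \otimes \phi^{(2)}$, whence
\[
C_{\lambda}(\phi) = (id_{\mathcal{P}} \otimes \lambda)\,\beta\,j(\phi) = \lambda(\phi^{(2)})\,\phi^{(1)} \in \mathcal{P}
\]
for every $\lambda \in \mathcal{C}^{\prime}$. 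The crucial point is that the output again lies in $\mathcal{P}$ with its first tensor leg literally $\phi^{(1)}$, so that $C_{\lambda}$ may be iterated; this is exactly where the sub-co-algebra hypothesis is used.

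Next I would iterate. For $\lambda, \mu \in \mathcal{C}^{\prime}$ and $\phi \in \mathcal{P}$, applying the displayed formula twice and using linearity,
\[
C_{\lambda}C_{\mu}(\phi) = \mu(\phi^{(2)})\,C_{\lambda}(\phi^{(1)}) = \mu(\phi^{(2)})\,\lambda\big((\phi^{(1)})^{(2)}\big)\,(\phi^{(1)})^{(1)}.
\]
Co-associativity of $\Delta_{\mathcal{C}}$ (inherited by $\mathcal{P}$) lets me rename $(\phi^{(1)})^{(1)} \otimes (\phi^{(1)})^{(2)} \otimes \phi^{(2)}$ as $\phi^{(1)} \otimes \phi^{(2)} \otimes \phi^{(3)}$, giving $C_{\lambda}C_{\mu}(\phi) = \lambda(\phi^{(2)})\,\mu(\phi^{(3)})\,\phi^{(1)}$. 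On the other hand, recalling that the product in $\mathcal{C}^{\prime}$ is $(\lambda\mu) = (\lambda \otimes \mu)\,\Delta_{\mathcal{C}}$, I get $C_{\lambda\mu}(\phi) = (\lambda\mu)(\phi^{(2)})\,\phi^{(1)} = \lambda\big((\phi^{(2)})^{(1)}\big)\,\mu\big((\phi^{(2)})^{(2)}\big)\,\phi^{(1)}$, and co-associativity again — now in the form $\phi^{(1)} \otimes (\phi^{(2)})^{(1)} \otimes (\phi^{(2)})^{(2)} = \phi^{(1)} \otimes \phi^{(2)} \otimes \phi^{(3)}$ — turns this into $\lambda(\phi^{(2)})\,\mu(\phi^{(3)})\,\phi^{(1)}$. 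The two expressions agree, so $C_{\lambda\mu} = C_{\lambda}C_{\mu}$ on $\mathcal{P}$; that is, $C$ preserves multiplication. Since $C_{\varepsilon} = I_{\mathcal{P}}$ by the Proposition, $C$ is in fact a unital algebra morphism, as claimed.

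I do not expect a genuine obstacle here: the argument is a bookkeeping exercise in Sweedler calculus, and the multiplicativity is forced by co-associativity once the two hypotheses let $\beta$ collapse to the co-multiplication of $\mathcal{P}$. The only points that require care are keeping the two Sweedler conventions straight — one for $\Delta_{\mathcal{C}}$ and one for the co-action $\beta$ — and checking that, under \eqref{beta-special-form} together with the sub-co-algebra hypothesis, the map $Q$ acts as the identity on $\phi^{(1)} \in \mathcal{P}$, so that they agree on $\mathcal{P}$ and the iteration in the second step is legitimate.
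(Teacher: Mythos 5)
Your proposal is correct and follows essentially the same route as the paper's proof: compute $C_{\lambda}C_{\mu}\phi$ and $C_{\lambda\mu}\phi$ in Sweedler notation, using that $Q$ acts as the identity on $\mathcal{P}$ (so that $\beta\,j$ collapses to $\Delta_{\mathcal{P}}$ under the two hypotheses) and that co-associativity identifies the two iterated co-products. The only cosmetic difference is that you use the three-leg notation $\phi^{(1)}\otimes\phi^{(2)}\otimes\phi^{(3)}$ where the paper uses the iterated indices $\phi^{(11)},\phi^{(12)},\phi^{(21)},\phi^{(22)}$.
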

\noindent 
\textbf{Remark:}
This tells us that under 
the given hypotheses 
the generalized co-Toeplitz quantization map 
is just too nice. 
For physical reasons 
we want to have a quantization that is not 
quite so nice. 
After all, Dirac has taught us that the 
distinguishing characteristic of quantum theory is 
that the observables do not commute. 
In this more general context Dirac's insight 
can be extended to say that the range of 
a quantization mapping 
should be less commutative that its domain. 
So, in the favorable case when $ C $ is injective, 
we do not want $ C $ to be an algebra morphism. 
Therefore, I consider this to be a No~Go theorem. 
Now the hypothesis on $ \beta $ seems reasonable, 
since it is the dual of the commonly used condition
on the left action in the Toeplitz setting. 
But the second hypothesis is dual to assuming  
in the Toeplitz setting that the projection
$ P : \mathcal{A} \to \mathcal{P} $ is an 
algebra morphism. 
And that is a condition which we do not wish 
to impose. 
Hence, the second hypothesis is something which we 
want to not hold in examples and in the future 
development of this theory. 
That hypothesis does not hold in the example 
in Section~\ref{example-section}. 
Of course, a No~Go theorem is a theorem and 
is worth knowing. 

\begin{proof}
Take $ \lambda, \mu \in  \mathcal{C}^{\prime} $ 
and $ \phi \in \mathcal{P} $. 
Throughout the proof
we use the notation 
$ \Delta :=  
\Delta_{ \mathcal{P}} = \Delta_{ \mathcal{C}}
 \! \upharpoonright_{\mathcal{P}}
$, which comes from the second hypothesis. 
We also use the iterated Sweedler notation as 
explained, for example, in \cite{QPB}. 
Then we calculate as follows. 
\begin{align*}
    C_{\lambda} \, C_{\mu} \phi &= 
    (id \otimes \lambda) (Q \otimes id) \, \Delta \,  
    (id \otimes \mu) (Q \otimes id) \, \Delta \phi  
    \\
    &= 
    (id \otimes \lambda) (Q \otimes id) \, \Delta \, 
    (id \otimes \mu) (Q \phi^{(1)} \otimes \phi^{(2)})
    \\ 
    &= 
    \mu(\phi^{(2)}) 
    (id \otimes \lambda) (Q \otimes id) \, \Delta \, 
    \phi^{(1)}   
    \\ 
    &= 
    \mu(\phi^{(2)})  
    (id \otimes \lambda) (Q \otimes id) \, 
    ( \phi^{(11)} \otimes \phi^{(12)} )  
    \\ 
    &= 
    \mu(\phi^{(2)}) \,
    \lambda(\phi^{(12)}) \, 
     \phi^{(11)}     
     \\ 
     &= 
     \mu(\phi^{(22)}) \,
     \lambda(\phi^{(21)}) \, 
     \phi^{(1)}  
     \\
     &=
     \big( 
        (\lambda \otimes \mu) \, \Delta \phi^{(2)} 
     \big)
     \phi^{(1)} 
     \\
     &=
     \big( (\lambda \mu) \phi^{(2)} \big) Q \phi^{(1)} 
     \\
     &=
     (id \otimes \lambda \mu) 
     (Q \otimes id) \, \Delta \phi 
     \\
     &=
     C_{\lambda \mu} \, \phi. 
\end{align*}

Here we used 
$\Delta \phi = \phi^{(1)} \otimes \phi^{(2)} 
\in \mathcal{P} \otimes \mathcal{P}$, 
the fact that $ Q $ acts as the identity 
on $ \mathcal{P} $, the co-associativity of 
$ \Delta $, the definition of the product 
$  \lambda \mu   $ and the definition of 
the co-Toeplitz quantization mapping $ C $. 
The first hypothesis was used in the first and 
last equalities. 
\end{proof}

The question naturally arises whether the 
generalized co-Toeplitz quantization map is 
injective. 
Using Definition~\ref{define-e} 
and Equation~\eqref{define-e-q}, 
we see that a necessary condition 
for this injectivity is that 
$ e : \mathcal{C} \to \mathcal{C}^{\prime} $ 
is injective, which itself is equivalent to 
the sesquilinear form on $ \mathcal{C} $ 
being non-degenerate. 
 
The extension of the co-Toeplitz quantization 
from the domain of symbols to the domain 
of co-symbols leads one to wonder 
if there is a corresponding extension of the 
domain of the Toeplitz quantization. 
Now the symbol $ g \in \mathcal{A} $ in the 
Toeplitz setting was used there to define a linear
map $l_{g} : \mathbb{C} \to \mathcal{A}$. 
And this map $ l_{g} $ was all that we needed 
to define the Toeplitz operator with symbol~$ g $. 
But the generalization given by replacing 
$ l_{g} $ with an arbitrary linear map 
$l : \mathbb{C} \to \mathcal{A}$ is no 
generalization at all because, 
as noted earlier, any such map $ l $ is equal 
to $ l_{g} $ for a unique symbol 
$ g \in \mathcal{A} $. 
So the co-Toeplitz quantization shows a bit 
of flexibilty, let's say, that is not present 
in the Toeplitz quantization. 
This is an indication of a lack of symmetry 
between the Toeplitz and co-Toepliz quantizations, 
a topic that we will consider in more detail 
in the next section.

\section{Duality}
\label{duality-section}

We now discuss in detail in what sense the 
theories of Toeplitz and co-Toeplitz 
quantization are duals of each other. 
The duality behind the definition 
of co-Toeplitz operators 
comes about simply by reversing the direction 
of all the arrows (i.e., morphisms)
in the definition of a Toeplitz operator. 
This sort of duality comes from category theory 
and is seen in the 
formulation of the basic 
concepts of non-commutative geometry, for 
example. 
It is called {\em notion duality}. 
This is exactly what we see in the relation between 
the definitions \eqref{three-maps} and 
\eqref{dual-three-maps} of 
Toeplitz operators 
and of co-Toeplitz operators, respectively. 

However, another sort of duality (called 
{\em object duality}) arises from 
applying the {\em duality contravariant functor} 
$ V \mapsto V^{\prime} 
\equiv \mathrm{Hom}_{\mathrm{Vect}} (V, \mathbb{C}) $ 
for $ V $ a complex vector space 
and the corresponding pull-back definition 
$T \mapsto T^{\prime} : W^{\prime} \to V^{\prime}$ for 
a morphism (i.e., linear map) $ T: V \to W $ 
of vector spaces $ V $ and $ W $.  
Specifically, 
$ T^\prime (\lambda) := 
\lambda \circ T \in V^{\prime}$ for 
$ \lambda \in W^{\prime} $.  
So the question arises as to what happens to 
\eqref{three-maps} and 
\eqref{dual-three-maps} when we apply this duality
contravariant functor to each of them. 
Of course, we do get some operator. 
The question is what type of operator it is and 
whether it has a simple formula. 

One nice property is that a $ * $-operation on $ V $ 
induces a $ * $-operation on $ V^{\prime} $ 
defined by 
$ \lambda^{*} (v) := ( \lambda (v^{*}) )^{*}  $ 
for $ \lambda \in V^{\prime} $ and $ v \in V $. 
Let us also recall from the last section that the dual 
$\mathcal{C}^{\prime}$ 
of a co-algebra $ \mathcal{C} $ is always an algebra. 
On the other hand, the dual $ \mathcal{A}^{\prime}  $ 
of an algebra $ \mathcal{A} $ is not necessarily 
a co-algebra. 
Briefly, the point is that in general 
the duality contravariant functor is only 
{\em sub-multiplicative} 
with respect to the tensor product, namely, 
$
V^{\prime} \otimes W^{\prime} \subset 
(V \otimes W)^{\prime}. 
$ 
However, if either $ V $ or $ W $ is 
finite dimensional, then the duality  
contravariant functor is {\em multiplicative}, 
$
V^{\prime} \otimes W^{\prime} = 
(V \otimes W)^{\prime}. 
$
To get multiplicativity in the full infinite 
dimensional setting requires changing either the 
definition of the duality contravariant functor or 
the definition of the tensor product (or of both). 
See \cite{KS} for more details. 
A rather similar analysis, which we leave to the 
interested reader, shows that the dual 
of a co-action is always an action, while the dual 
of an action is not necessarily a co-action. 

But the dual of a vector space with a sesquilinear 
form does not in general have a naturally defined 
sesquilinear form. 
So, we will not look for a full duality between 
Toeplitz and co-Toeplitz operators using this 
duality contravariant functor. 
Thus, we will mainly consider the duality relation 
between the diagrams \eqref{three-maps} and 
\eqref{dual-three-maps} considered as diagrams 
of {\em vector spaces} as well as the {\em definitions} 
of Toeplitz and co-Toeplitz operators, respectively. 
Similarly, we take \eqref{general-type-with-e}
to be the diagram of 
{\em vector spaces} which {\em defines} 
a generalized co-Toeplitz operator. 
But we will comment on other 
algebraic aspects of this duality contravariant 
functor as they arise in specific contexts. 

Given this situation, it seems more feasible 
for us to first consider the dual of a co-Toeplitz 
operator as defined in \eqref{dual-three-maps} 
with symbol $ g \in \mathcal{C} $, a co-algebra, 
which gives us this dual diagram: 
\begin{equation}
\label{this-dual-diagram}
\mathcal{P}^{\prime} 
\stackrel{\pi_{g}^{\prime}}{\longrightarrow} 
(\mathcal{P} \otimes \mathcal{C})^{\prime} 
\stackrel{\beta^{\prime}}{\longrightarrow} 
\mathcal{C}^{\prime} 
\stackrel{j^{\prime}}{\longrightarrow} 
\mathcal{P}^{\prime}. 
\end{equation}
To understand this diagram 
we evaluate $ \pi_{g}^{\prime} $. 
So for $ \lambda \in \mathcal{P}^{\prime} $, 
$ \phi \in \mathcal{P} $ and  
$ f,g \in \mathcal{C} $ we have  
\begin{align*}
    &\pi_{g}^{\prime} (\lambda)(\phi \otimes f) =  
    ( \lambda \circ \pi_{g} ) (\phi \otimes f) 
    = 
    \lambda 
    \big( 
    \langle g , f \rangle_{\mathcal{C}} \, \phi  
    \big) 
    \\
    &= 
     \langle g , f \rangle_{\mathcal{C}} \,
    \lambda 
    \big( 
    \phi  
    \big) 
    =
    e_{g} (f) \lambda (\phi) 
    =
    ( \lambda \otimes e_{g} ) (\phi \otimes f), 
\end{align*}
which implies that 
$ \pi_{g}^{\prime} (\lambda)
= \lambda \otimes e_{g} 
\in \mathcal{P}^{\prime} \otimes \mathcal{C}^{\prime}$ 
and hence 
$$ 
\pi_{g}^{\prime} = \cdot \otimes e_{g} : 
  \mathcal{P}^{\prime} \to 
  \mathcal{P}^{\prime} \otimes \mathcal{C}^{\prime} 
  \subset (\mathcal{P} \otimes \mathcal{C})^{\prime}. 
$$ 
Then \eqref{this-dual-diagram} becomes 
\begin{equation}
\label{better-dual-diagram}
\mathcal{P}^{\prime} 
\stackrel{ \cdot \otimes e_{g} }{\longrightarrow} 
\mathcal{P}^{\prime} \otimes \mathcal{C}^{\prime} 
\stackrel{\beta^{\prime}}{\longrightarrow} 
\mathcal{C}^{\prime} 
\stackrel{j^{\prime}}{\longrightarrow} 
\mathcal{P}^{\prime}. 
\end{equation}
This is a Toeplitz operator as 
defined by \eqref{three-maps} with 
symbol $ e_{g} $ in the algebra 
$ \mathcal{C}^{\prime} $. 
Moreover, $ \beta^{\prime} $ is a left action 
and $ j^{\prime} $ is a projection. 
Also 
$ Q^{\prime} : \mathcal{P}^{\prime} \to \mathcal{C}^{\prime}$ 
is a unital algebra morphism. 
We have the following. 

\begin{theorem}
\label{dual-of-co-toepltz-operator}
If $ C_{g} \in \mathcal{L} (\mathcal{P}) $ 
is a co-Toeplitz operator with symbol $ g $ 
in the co-algebra $ \mathcal{C} $, 
then 
$ (C_{g})^{\prime} = T_{e_{g}} \in  \mathcal{L} (\mathcal{P}^{\prime})$ 
is a Toeplitz operator with symbol $ e_{g} $ in 
the algebra $ \mathcal{C}^{\prime} $. 

If $ C_{\mu} \in \mathcal{L} (\mathcal{P}) $ 
is a generalized 
co-Toeplitz operator with co-symbol $ \mu $ 
in the algebra $ \mathcal{C}^{\prime} $, 
then 
$ (C_{\mu})^{\prime} = T_{\mu} \in  \mathcal{L} (\mathcal{P}^{\prime})$ 
is a Toeplitz operator with symbol $ \mu $ in 
the algebra $ \mathcal{C}^{\prime} $. 
\end{theorem}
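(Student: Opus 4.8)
The plan is to obtain both assertions by applying the duality contravariant functor $(\cdot)^{\prime}$ to the two defining compositions and then recognizing the resulting maps. Since the functor reverses the order of composition, from $C_{g} = \pi_{g}\,\beta\,j$ we get
$$
(C_{g})^{\prime} = ( \pi_{g}\,\beta\,j )^{\prime} = j^{\prime}\circ\beta^{\prime}\circ\pi_{g}^{\prime}.
$$
The middle factor has already been computed in the discussion preceding the theorem: $\pi_{g}^{\prime} = \cdot\otimes e_{g} : \mathcal{P}^{\prime}\to\mathcal{P}^{\prime}\otimes\mathcal{C}^{\prime}\subset(\mathcal{P}\otimes\mathcal{C})^{\prime}$. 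Hence $(C_{g})^{\prime} = j^{\prime}\circ\beta^{\prime}\circ(\cdot\otimes e_{g})$, and what remains is to check that the three factors on the right are exactly the ingredients of a Toeplitz operator living in the algebra $\mathcal{C}^{\prime}$.

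To do this I would note: (i) since $Q:\mathcal{C}\to\mathcal{P}$ is a co-algebra morphism, its dual $Q^{\prime}:\mathcal{P}^{\prime}\to\mathcal{C}^{\prime}$ is a unital algebra morphism, playing the role of the inclusion $\iota$; (ii) applying $(\cdot)^{\prime}$ to $Q\,j = id_{\mathcal{P}}$ gives $j^{\prime}Q^{\prime} = id_{\mathcal{P}^{\prime}}$, so $j^{\prime}$ is a projection in the sense required in the Toeplitz setting; (iii) since the dual of a left co-action is a left action, the restriction of $\beta^{\prime}$ to $\mathcal{P}^{\prime}\otimes\mathcal{C}^{\prime}$ is a left action of the algebra $\mathcal{P}^{\prime}$ on $\mathcal{C}^{\prime}$. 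Comparing $j^{\prime}\circ\beta^{\prime}\circ(\cdot\otimes e_{g})$ with the diagram \eqref{three-maps} and the formula \eqref{T-given-by} then shows it is precisely the Toeplitz operator $T_{e_{g}}$ with symbol $e_{g}\in\mathcal{C}^{\prime}$, built from the projection $j^{\prime}$ and the left action $\beta^{\prime}$. Here, as noted at the start of this section, ``Toeplitz operator'' is understood at the level of the vector-space diagram \eqref{three-maps}, since $\mathcal{P}^{\prime}$ need not carry a natural sesquilinear form. This gives the first assertion.

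For the second assertion I would run the same argument with $\pi_{g} = id_{\mathcal{P}}\otimes e_{g}$ replaced by $id_{\mathcal{P}}\otimes\mu$ for an arbitrary $\mu\in\mathcal{C}^{\prime}$. The same one-line evaluation used for $\pi_{g}^{\prime}$ — pairing $(id_{\mathcal{P}}\otimes\mu)^{\prime}(\lambda)$ against $\phi\otimes f$ and getting $\mu(f)\,\lambda(\phi)$ — yields $(id_{\mathcal{P}}\otimes\mu)^{\prime} = \cdot\otimes\mu : \mathcal{P}^{\prime}\to\mathcal{P}^{\prime}\otimes\mathcal{C}^{\prime}$, so that $(C_{\mu})^{\prime} = j^{\prime}\circ\beta^{\prime}\circ(\cdot\otimes\mu) = T_{\mu}$, a Toeplitz operator with symbol $\mu\in\mathcal{C}^{\prime}$. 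One may alternatively simply observe that the first assertion is the special case $\mu = e_{g}$, using the identity $C_{g} = C_{e_{g}}$ recorded earlier.

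The only step that requires genuine care — and the one I expect to be the main obstacle in a fully detailed write-up — is the verification that $\beta^{\prime}$ really does restrict to an honest left action of $\mathcal{P}^{\prime}$ on $\mathcal{C}^{\prime}$ (and that $Q^{\prime}$ is a unital algebra morphism). Because the duality functor is only sub-multiplicative on tensor products in infinite dimensions, one must check that the two co-action axioms for $\beta$ dualize to the unit and associativity axioms of an action on the appropriate subspaces $\mathcal{P}^{\prime}\otimes\mathcal{C}^{\prime}$, $\mathcal{P}^{\prime}\otimes\mathcal{P}^{\prime}\otimes\mathcal{C}^{\prime}$, and so on. This is exactly the ``rather similar analysis'' left to the reader in the discussion above; it is diagram bookkeeping rather than a real difficulty, and once it is granted the theorem is just an application of the contravariant functor together with the two elementary computations of $\pi_{g}^{\prime}$ and $(id_{\mathcal{P}}\otimes\mu)^{\prime}$.
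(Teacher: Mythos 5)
Your proposal is correct and follows essentially the same route as the paper: compute $\pi_{g}^{\prime} = \cdot \otimes e_{g}$, recognize the dualized composition $j^{\prime}\,\beta^{\prime}\,(\cdot\otimes e_{g})$ as the defining diagram of a Toeplitz operator with symbol $e_{g}$ in the algebra $\mathcal{C}^{\prime}$, and obtain the second assertion by replacing $e_{g}$ with an arbitrary co-symbol $\mu$. Your added remarks on why $j^{\prime}$ is a projection and why $\beta^{\prime}$ restricts to a left action merely make explicit the bookkeeping the paper leaves to the reader.
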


\noindent 
\textbf{Remark:}
We can also write the result of the first part 
as $ (C_{e_{g}})^{\prime} = T_{e_{g}} $. 

\begin{proof}
We have already proved the first assertion above. 
As for the second assertion we note that 
in the above argument the symbol $ g $ is used to 
define the linear functional 
$ e_{g} \in \mathcal{C}^{\prime}$, 
which is the only 
occurrence of $ g $ in \eqref{better-dual-diagram}. 
So we replace $ e_{g} $  with the co-symbol 
$ \mu $ in that argument to obtain 
$ (\cdot \otimes \mu)
: \mathcal{P}^{\prime} \to 
\mathcal{P}^{\prime} \otimes \mathcal{C}^{\prime} $ 
 in \eqref{better-dual-diagram}, 
and the second result follows immediately. 
\end{proof}

On the other hand, the 
dual of a Toeplitz operator 
is not necessarily a co-Toeplitz operator. 
To see this we examine the dual of diagram 
\eqref{three-maps}, which is 
\begin{equation}
\label{yet-another-dual}
\mathcal{P}^{\prime} 
\stackrel{(\cdot \otimes g)^{\prime}}{\longleftarrow} (\mathcal{P} \otimes \mathcal{A})^{\prime} 
\stackrel{\alpha^{\prime}}{\longleftarrow} 
\mathcal{A}^{\prime}
\stackrel{P^{\prime}}{\longleftarrow} \mathcal{P}^{\prime}
\end{equation}
Here neither $ \mathcal{A}^{\prime} $ nor 
$\mathcal{P}^{\prime} $ need be a co-algebra 
although each does have a $ * $-operation. 
Consequently, it need not make sense in general to 
require $ P^{\prime} $ to be a co-algebra morphism. 
Recall that  
`co-Toeplitz operator' (resp., `generalized 
co-Toeplitz operator')
now means the composition 
of the maps of {\em vector spaces} in diagram 
\eqref{dual-three-maps} 
(resp., diagram \eqref{general-type-with-e}). 

Even if $ \mathcal{P}^{\prime} $ is a co-algebra, 
$ \alpha^{\prime} $ need not be a left 
co-action on $ \mathcal{A}^{\prime} $, since 
$$ 
\mathcal{P}^{\prime} \otimes \mathcal{A}^{\prime}
\subset (\mathcal{P} \otimes \mathcal{A})^{\prime} 
$$ 
can be a proper inclusion. 
But we do have the following result. 

\begin{theorem}
\label{dual-of-toeplitz-operator}
If $T_{g} \in \mathcal{L} (\mathcal{P}) $ 
is a Toeplitz operator with symbol $ g $ 
in the algebra $ \mathcal{A} $ and 
the left action 
$ \alpha : \mathcal{P} \otimes \mathcal{A} 
\to \mathcal{A} $ (used to define the 
Toeplitz operator) satisfies 
$ \mathrm{Ran} \, \alpha^{\prime} \subset 
\mathcal{P}^{\prime} \otimes \mathcal{A}^{\prime} $, 
then 
$$
(T_{g})^{\prime} = C_{\mathrm{ev}_{g}} \in 
\mathcal{L}  (\mathcal{P}^{\prime}) 
$$ 
is a generalized co-Toeplitz operator 
with co-symbol 
$ \mathrm{ev}_{g} \in \mathcal{A}^{\prime\prime} $. 
(We will define $ \mathrm{ev}_{g} $
in the course of the proof.)  
\end{theorem}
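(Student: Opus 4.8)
The plan is to apply the duality contravariant functor $V \mapsto V^{\prime}$ termwise to the defining diagram \eqref{three-maps} of $T_{g}$, to recognize the resulting composite of vector spaces as an instance of diagram \eqref{general-type-with-e}, and then to read off the co-symbol.

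First I would dualize $T_{g} = P \, \alpha \, (\cdot \otimes g)$ from \eqref{T-given-by}, using $(AB)^{\prime} = B^{\prime} A^{\prime}$, to obtain
$$
(T_{g})^{\prime} = (\cdot \otimes g)^{\prime} \, \alpha^{\prime} \, P^{\prime} \in \mathcal{L}(\mathcal{P}^{\prime}),
\qquad
\mathcal{P}^{\prime} \stackrel{P^{\prime}}{\longrightarrow} \mathcal{A}^{\prime} \stackrel{\alpha^{\prime}}{\longrightarrow} (\mathcal{P} \otimes \mathcal{A})^{\prime} \stackrel{(\cdot \otimes g)^{\prime}}{\longrightarrow} \mathcal{P}^{\prime},
$$
which is precisely diagram \eqref{yet-another-dual}. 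Recall from just before the statement that ``generalized co-Toeplitz operator'' now means any vector-space composite of the form $(\mathrm{id} \otimes \lambda)\, \beta\, j$, so no co-action axioms on $\beta$ are needed: all I need is a symbol space $\mathcal{C}$, a linear map $j : \mathcal{P}^{\prime} \to \mathcal{C}$, a linear map $\beta : \mathcal{C} \to \mathcal{P}^{\prime} \otimes \mathcal{C}$, and a linear functional $\lambda \in \mathcal{C}^{\prime}$. I therefore take $\mathcal{C} := \mathcal{A}^{\prime}$, and the hypothesis $\mathrm{Ran}\, \alpha^{\prime} \subset \mathcal{P}^{\prime} \otimes \mathcal{A}^{\prime}$ lets me co-restrict $\alpha^{\prime}$ to a linear map $\beta := \alpha^{\prime} : \mathcal{A}^{\prime} \to \mathcal{P}^{\prime} \otimes \mathcal{A}^{\prime}$; I take $j := P^{\prime} : \mathcal{P}^{\prime} \to \mathcal{A}^{\prime}$. (In this picture $Q$ is realized by $\iota^{\prime} : \mathcal{A}^{\prime} \to \mathcal{P}^{\prime}$, and $Q j = \iota^{\prime} P^{\prime} = (P \iota)^{\prime} = (\mathrm{id}_{\mathcal{P}})^{\prime} = \mathrm{id}_{\mathcal{P}^{\prime}}$ comes for free from $P\iota = \mathrm{id}_{\mathcal{P}}$.)

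Next I would define $\mathrm{ev}_{g} \in \mathcal{A}^{\prime\prime} = (\mathcal{A}^{\prime})^{\prime}$ by $\mathrm{ev}_{g}(\psi) := \psi(g)$ for $\psi \in \mathcal{A}^{\prime}$, and verify the one substantive identity: on the subspace $\mathcal{P}^{\prime} \otimes \mathcal{A}^{\prime} \subset (\mathcal{P} \otimes \mathcal{A})^{\prime}$ one has $(\cdot \otimes g)^{\prime} = \mathrm{id}_{\mathcal{P}^{\prime}} \otimes \mathrm{ev}_{g}$. This is the routine check: for $\lambda \in \mathcal{P}^{\prime}$, $\psi \in \mathcal{A}^{\prime}$ and $\phi \in \mathcal{P}$,
$$
(\cdot \otimes g)^{\prime}(\lambda \otimes \psi)(\phi) = (\lambda \otimes \psi)(\phi \otimes g) = \lambda(\phi)\, \psi(g) = \mathrm{ev}_{g}(\psi)\, \lambda(\phi) = \big( (\mathrm{id}_{\mathcal{P}^{\prime}} \otimes \mathrm{ev}_{g})(\lambda \otimes \psi) \big)(\phi),
$$
and this extends to all of $\mathcal{P}^{\prime} \otimes \mathcal{A}^{\prime}$ by linearity, every element there being a finite sum of decomposable tensors. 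Since $\mathrm{Ran}(\alpha^{\prime} P^{\prime}) \subset \mathrm{Ran}\, \alpha^{\prime} \subset \mathcal{P}^{\prime} \otimes \mathcal{A}^{\prime}$, I may replace $(\cdot \otimes g)^{\prime}$ by $\mathrm{id}_{\mathcal{P}^{\prime}} \otimes \mathrm{ev}_{g}$ inside the composite without changing it, which yields
$$
(T_{g})^{\prime} = (\mathrm{id}_{\mathcal{P}^{\prime}} \otimes \mathrm{ev}_{g})\, \beta\, j = C_{\mathrm{ev}_{g}},
$$
the generalized co-Toeplitz operator of \eqref{general-type-with-e} with symbol space $\mathcal{C} = \mathcal{A}^{\prime}$ and co-symbol $\mathrm{ev}_{g} \in \mathcal{C}^{\prime} = \mathcal{A}^{\prime\prime}$.

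Since the whole argument reduces to bookkeeping with dual maps, I do not expect a genuine obstacle. The point requiring care is that $(\cdot \otimes g)^{\prime}$ is \emph{not} globally of the form $\mathrm{id} \otimes (\text{functional})$ on the full dual $(\mathcal{P} \otimes \mathcal{A})^{\prime}$ --- indeed the expression $\mathrm{id}_{\mathcal{P}^{\prime}} \otimes \mathrm{ev}_{g}$ only has intrinsic meaning on the sub-tensor-product $\mathcal{P}^{\prime} \otimes \mathcal{A}^{\prime}$ --- and it is exactly the hypothesis $\mathrm{Ran}\, \alpha^{\prime} \subset \mathcal{P}^{\prime} \otimes \mathcal{A}^{\prime}$ that confines the relevant composite to this subspace and thereby makes the identification legitimate. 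A secondary matter of hygiene is to fix throughout the conventions $T^{\prime}(\lambda) = \lambda \circ T$ for the dual of a map (as in the text) and $(\lambda \otimes \psi)(\phi \otimes a) = \lambda(\phi)\, \psi(a)$ for the canonical embedding $\mathcal{P}^{\prime} \otimes \mathcal{A}^{\prime} \hookrightarrow (\mathcal{P} \otimes \mathcal{A})^{\prime}$.
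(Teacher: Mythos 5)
Your proposal is correct and follows essentially the same route as the paper's own proof: both dualize the defining composite \eqref{T-given-by}, compute $(\cdot \otimes g)^{\prime}(\lambda \otimes \omega) = \omega(g)\lambda = (\mathrm{id} \otimes \mathrm{ev}_{g})(\lambda \otimes \omega)$ on decomposable tensors, and invoke the range hypothesis on $\alpha^{\prime}$ to land in $\mathcal{P}^{\prime} \otimes \mathcal{A}^{\prime}$ so that the composite matches diagram \eqref{general-type-with-e}. Your added remark that $\mathrm{id}_{\mathcal{P}^{\prime}} \otimes \mathrm{ev}_{g}$ only has intrinsic meaning on the sub-tensor-product is a welcome point of hygiene that the paper leaves implicit, but it is not a different argument.
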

\begin{proof}
We take $ g \in \mathcal{A} $, 
$ \phi \in \mathcal{P} $, 
$ \lambda \in \mathcal{P}^{\prime} $
and $ \omega \in \mathcal{A}^{\prime}$. 
Then we calculate 
\begin{align*}
\big( 
(\cdot \otimes g)^{\prime} (\lambda \otimes \omega) 
\big) 
(\phi) &= 
(\lambda \otimes \omega) 
\big( 
(\cdot \otimes g) (\phi) 
\big)
= 
(\lambda \otimes \omega) (\phi \otimes g)  
\\
&=
\lambda (\phi) \omega (g) 
=
\big( \omega (g) \lambda \big) (\phi),  
\end{align*}
which implies 
$ (\cdot \otimes g)^{\prime} (\lambda \otimes \omega)
= \omega (g) \lambda 
= (id \otimes \mathrm{ev}_{g}) 
(\lambda \otimes \omega)  $, where 
$ \mathrm{ev}_{g} (\omega) := \omega (g) $ 
defines the evaluation functional 
$ \mathrm{ev}_{g} $ at $ g $. 
Let's note that 
$\mathrm{ev}_{g} \in \mathcal{A}^{\prime\prime}$
does hold. 
Therefore we have arrived at 
$$
   (\cdot \otimes g)^{\prime} = 
   id \otimes \mathrm{ev}_{g}. 
$$
So \eqref{yet-another-dual} becomes 
$$
\mathcal{P}^{\prime} 
\stackrel{ id \otimes \mathrm{ev}_{g} }{\longleftarrow} \mathcal{P}^{\prime} \otimes \mathcal{A}^{\prime} 
\stackrel{\alpha^{\prime}}{\longleftarrow} 
\mathcal{A}^{\prime}
\stackrel{P^{\prime}}{\longleftarrow} \mathcal{P}^{\prime}, 
$$
where we also used the hypothesis on the 
range of $ \alpha^{\prime} $. 
And so we have shown 
that $ (T_{g})^{\prime} $ is the generalized 
co-Toeplitz operator $ C_{\mathrm{ev}_{g}} $.  
\end{proof} 

These two theorems show an asymmetry in this 
duality, namely, the dual of a co-Toeplitz operator
is always a Toeplitz operator while for a 
Toeplitz operator we used an extra technical 
hypothesis in order to show that its dual is a 
co-Toeplitz operator. 
Of course, this opens the door to the possibility 
of altering the definition of Toeplitz operator
(and maybe of co-Toeplitz operator as well) 
in the infinite dimensional case 
in order to obtain a more precise duality. 

We are now in a position to evaluate the 
double duals of Toeplitz and co-Toeplitz operators. 
It is an elementary fact that the double dual 
always exists. 
What we want to do is describe it explicitly. 
Here is some well known material that we are  
going to use in order to study double duals. 
\begin{definition}

Suppose that V is a vector space and that $v \in V$. 
Then we define 
$\mathrm{ev}^{V}_{v} \in V^{\prime\prime}$,
the {\rm evaluation at $ v $}, by 
$$
   \mathrm{ev}^{V}_{v} (f) := f(v) 
$$
for all $ f \in V^{\prime} $. 
We also define the {\rm evaluation map} 
$$
\mathrm{ev} \equiv 
\mathrm{ev}^V : V \to V^{\prime\prime} 
$$
by 
$\mathrm{ev}^V (v) := \mathrm{ev}^V_{v}$ 
for all $ v \in V $. 
We sometimes write $\mathrm{ev}$ instead of 
$\mathrm{ev}^V$ when the context indicates  
what the vector space $ V $ is. 
\end{definition}

We state the next elementary result without proof. 
\begin{prop}
	\label{without-proof}
The map $ \mathrm{ev}^V  $ is linear and 
injective. 
For any linear map $ T: V \to W $ between 
vector spaces $V$ and $W$ we have that this 
diagram commutes: 	
\begin{equation*}
 \begin{array}{rcl}
 V & \stackrel{\mathrm{ev}^V}{\longrightarrow} &  V^{\prime\prime}
 \\
  T~ \big\downarrow &  & 
  \big\downarrow ~ T^{\prime\prime} 
 \\
  W & \stackrel{\mathrm{ev}^W}{\longrightarrow} &  W^{\prime\prime}
 \end{array}
\end{equation*}
Using $ \mathrm{ev}^V $ to identify $ V $ as 
a subspace of $ V^{\prime\prime} $ 
(and similarly for $ W $), we can read this diagram 
as saying that the restriction of 
$T^{\prime\prime}$ to the subspace $ V $ is $ T $, 
that is, 
$ T^{\prime\prime} \!\! \upharpoonright_{V} = T $. 
Equivalently, $ T^{\prime\prime}  $ can be 
viewed as an extension of $ T $. 
\end{prop}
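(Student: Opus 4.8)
The plan is to verify the three assertions in turn, each by a direct unwinding of the definitions, with the only genuinely non-formal point being the injectivity of $\mathrm{ev}^V$.

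First I would check linearity of $\mathrm{ev}^V$. For $v_1, v_2 \in V$, $\alpha \in \mathbb{C}$ and any $f \in V^{\prime}$ one has $\mathrm{ev}^V_{v_1 + \alpha v_2}(f) = f(v_1 + \alpha v_2) = f(v_1) + \alpha f(v_2) = \big( \mathrm{ev}^V_{v_1} + \alpha \, \mathrm{ev}^V_{v_2} \big)(f)$, where the middle equality is the linearity of $f$. Since this holds for every $f \in V^{\prime}$, the map $v \mapsto \mathrm{ev}^V_v$ is linear. Next, for injectivity suppose $\mathrm{ev}^V_v = 0$, i.e. $f(v) = 0$ for all $f \in V^{\prime}$, and argue by contraposition: if $v \neq 0$, then (invoking a Hamel basis of $V$ containing $v$, hence the axiom of choice in the infinite-dimensional case) there is a coordinate functional $f_0 \in V^{\prime}$ with $f_0(v) = 1 \neq 0$, contradicting $\mathrm{ev}^V_v = 0$. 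Hence $v = 0$. This existence of enough linear functionals to separate points is the one step that uses something beyond bookkeeping, and it is the main (mild) obstacle.

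For the commuting square I would first recall that $T^{\prime\prime} = (T^{\prime})^{\prime}$, so $T^{\prime\prime}(\Phi) = \Phi \circ T^{\prime} \in W^{\prime\prime}$ for $\Phi \in V^{\prime\prime}$, where $T^{\prime}(g) = g \circ T$ for $g \in W^{\prime}$. Then for $v \in V$ and $g \in W^{\prime}$,
$$
T^{\prime\prime}\big( \mathrm{ev}^V_v \big)(g) = \mathrm{ev}^V_v\big( T^{\prime}(g) \big) = \big( T^{\prime}(g) \big)(v) = (g \circ T)(v) = g\big( T(v) \big) = \mathrm{ev}^W_{T(v)}(g),
$$
which, holding for all $g$, gives $T^{\prime\prime} \circ \mathrm{ev}^V = \mathrm{ev}^W \circ T$, i.e. the diagram commutes.

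Finally, for the restriction statement I would observe that by the injectivity just proved, $\mathrm{ev}^V$ and $\mathrm{ev}^W$ identify $V$ and $W$ with the subspaces $\mathrm{Ran}\,\mathrm{ev}^V \subset V^{\prime\prime}$ and $\mathrm{Ran}\,\mathrm{ev}^W \subset W^{\prime\prime}$; under these identifications the vertical arrows of the square become inclusions, and the commutativity just established reads precisely as: the restriction of $T^{\prime\prime}$ to the copy of $V$ inside $V^{\prime\prime}$ coincides with $T$ (and lands in the copy of $W$ inside $W^{\prime\prime}$), that is, $T^{\prime\prime}\!\!\upharpoonright_{V} = T$, so $T^{\prime\prime}$ is an extension of $T$.
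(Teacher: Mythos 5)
The paper states this proposition explicitly without proof, so there is no argument of the author's to compare against; your proposal supplies the standard one and it is correct. The linearity and the commutativity of the square are the routine unwindings you give, and you rightly identify the only substantive point, namely that $V^{\prime}$ separates points of $V$, which in the infinite-dimensional case does rest on extending $\{v\}$ to a Hamel basis and hence on the axiom of choice. The final paragraph correctly translates the commuting square into the statement $T^{\prime\prime} \!\upharpoonright_{V} = T$ under the identifications by $\mathrm{ev}^{V}$ and $\mathrm{ev}^{W}$.
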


We now proceed to the theorem about double duals. 
\begin{theorem}
	There are three cases of a double dual. 
\begin{itemize}
\item 
Let $ g \in \mathcal{C} $ be a symbol and
let $ C_{g} \in \mathcal{L}(\mathcal{P})$ 
be its associated co-Toeplitz operator. 
If the map $ \beta $ used in defining $ C_{g} $ 
satisfies 
$ \mathrm{Ran} \, \beta^{\prime\prime} \subset 
\mathcal{P}^{\prime\prime} \otimes 
\mathcal{C}^{\prime\prime}$, 
then 
$ (C_{g})^{\prime\prime} = C_{\mathrm{ev}_{e_{_{g}}}} 
\in \mathcal{L}(\mathcal{P}^{\prime\prime})$. 

\item 
Let $ \mu \in \mathcal{C}^{\prime} $ be a co-symbol 
and $ C_{\mu} \in \mathcal{L}(\mathcal{P})$ 
be its associated generalized co-Toeplitz operator. 
If $ \beta $ satisfies the condition in the previous 
part of this theorem, then 
$ (C_{\mu})^{\prime\prime} = 
C_{\mathrm{ev}_{\mu}} 
\in \mathcal{L}(\mathcal{P}^{\prime\prime})$. 

\item 
Let $ g \in \mathcal{A} $ be a symbol and
$ T_{g} \in \mathcal{L}(\mathcal{P})$ 
be its associated Toeplitz operator. 
Suppose that the left action $ \alpha $
used in the 
definition of $ T_{g} $ satisfies the technical 
condition in Theorem~\ref{dual-of-toeplitz-operator}.  
Then 
$ (T_{g})^{\prime\prime} = 
T_{\mathrm{ev}_{g}} 
\in \mathcal{L}(\mathcal{P}^{\prime\prime})$. 
\end{itemize}

\end{theorem}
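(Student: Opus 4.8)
The plan is to obtain all three double duals by composing the object duality functor with itself, writing $(\cdot)^{\prime\prime} = ((\cdot)^{\prime})^{\prime}$ and applying Theorems~\ref{dual-of-co-toepltz-operator} and \ref{dual-of-toeplitz-operator}, which already describe what a single application of the duality contravariant functor does to a (generalized) co-Toeplitz operator and to a Toeplitz operator. Equivalently, this amounts to applying the dual functor twice to the defining diagrams \eqref{dual-three-maps}, \eqref{general-type-with-e} and \eqref{three-maps}. The substantive part is then careful bookkeeping of the nested dual spaces, of the restrictions of the various maps to the subspaces $V^{\prime} \otimes W^{\prime} \subset (V \otimes W)^{\prime}$, and of where the evaluation functionals land.

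For the first case: $(C_{g})^{\prime\prime} = ((C_{g})^{\prime})^{\prime}$, and by Theorem~\ref{dual-of-co-toepltz-operator} the inner dual $(C_{g})^{\prime}$ is the Toeplitz operator $T_{e_{g}} \in \mathcal{L}(\mathcal{P}^{\prime})$ with symbol $e_{g}$ in the algebra $\mathcal{C}^{\prime}$, built from the left action $\beta^{\prime}$ and the projection $j^{\prime}$. Applying Theorem~\ref{dual-of-toeplitz-operator} to $T_{e_{g}}$ — in the role of $T_{g}$ there, with $\mathcal{A} := \mathcal{C}^{\prime}$, $\mathcal{P} := \mathcal{P}^{\prime}$ and $\alpha := \beta^{\prime}$ — the required technical hypothesis is $\mathrm{Ran}\,(\beta^{\prime})^{\prime} \subset (\mathcal{P}^{\prime})^{\prime} \otimes (\mathcal{C}^{\prime})^{\prime}$, which is precisely the assumed condition $\mathrm{Ran}\,\beta^{\prime\prime} \subset \mathcal{P}^{\prime\prime} \otimes \mathcal{C}^{\prime\prime}$, and the conclusion is $(T_{e_{g}})^{\prime} = C_{\mathrm{ev}_{e_{g}}}$ with co-symbol $\mathrm{ev}_{e_{g}} \in (\mathcal{C}^{\prime})^{\prime} = \mathcal{C}^{\prime\prime}$. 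Stringing the two identities together gives $(C_{g})^{\prime\prime} = C_{\mathrm{ev}_{e_{g}}}$. The second case is the same with the first assertion of Theorem~\ref{dual-of-co-toepltz-operator} replaced by its second assertion (so $(C_{\mu})^{\prime} = T_{\mu}$ with symbol $\mu \in \mathcal{C}^{\prime}$) and with $e_{g}$ replaced throughout by $\mu$; the hypothesis needed is again exactly the stated one on $\beta^{\prime\prime}$. For the third case I would run the composition in the other order: by Theorem~\ref{dual-of-toeplitz-operator}, $(T_{g})^{\prime} = C_{\mathrm{ev}_{g}}$ is a generalized co-Toeplitz operator on $\mathcal{P}^{\prime}$ with left co-action $\alpha^{\prime}$ and projection $P^{\prime}$, and then the second assertion of Theorem~\ref{dual-of-co-toepltz-operator}, applied to $C_{\mathrm{ev}_{g}}$, gives $(C_{\mathrm{ev}_{g}})^{\prime} = T_{\mathrm{ev}_{g}} \in \mathcal{L}(\mathcal{P}^{\prime\prime})$ with symbol $\mathrm{ev}_{g} \in \mathcal{A}^{\prime\prime}$; here no new hypothesis enters, since the step used in that proof only restricts a map to the subspace $\mathcal{P}^{\prime\prime} \otimes \mathcal{A}^{\prime\prime} \subset (\mathcal{P}^{\prime} \otimes \mathcal{A}^{\prime})^{\prime}$. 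Hence $(T_{g})^{\prime\prime} = T_{\mathrm{ev}_{g}}$.

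As a consistency check I would invoke Proposition~\ref{without-proof}: each double dual must restrict, along the canonical embedding $\mathrm{ev}$, to the original operator — e.g. $C_{\mathrm{ev}_{e_{g}}}$ restricted to $\mathcal{P} \subset \mathcal{P}^{\prime\prime}$ equals $C_{g}$, and similarly in the other two cases — which records that the double-dual data ($\beta^{\prime\prime}$, $j^{\prime\prime}$, $\alpha^{\prime\prime}$, $P^{\prime\prime}$) genuinely extend the original data. The main obstacle I expect is entirely one of pedantry: verifying that the single hypothesis ``$\mathrm{Ran}\,\beta^{\prime\prime} \subset \mathcal{P}^{\prime\prime} \otimes \mathcal{C}^{\prime\prime}$'' is exactly what Theorem~\ref{dual-of-toeplitz-operator} demands when applied to $T_{e_{g}}$ (rather than some a priori different condition phrased on $(\mathcal{P}^{\prime} \otimes \mathcal{C}^{\prime})^{\prime}$), that no hidden range condition creeps into the third case, and — a related point — that in the third case $\mathcal{A}^{\prime}$ and $\mathcal{A}^{\prime\prime}$ need not be a co-algebra or algebra, so one must use Theorem~\ref{dual-of-co-toepltz-operator} in its vector-space-diagram form, which is precisely how it is stated and proved in this section. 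All of this requires being precise about which inclusion $V^{\prime} \otimes W^{\prime} \subset (V \otimes W)^{\prime}$ is in play at each stage and about the asymmetry, already noted after Theorems~\ref{dual-of-co-toepltz-operator} and \ref{dual-of-toeplitz-operator}, that dualizing a left (co)action which already lands in the smaller tensor product costs nothing, whereas dualizing one that does not requires the range hypothesis.
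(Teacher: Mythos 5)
Your proposal is correct and follows essentially the same route as the paper: both obtain each double dual by composing Theorem~\ref{dual-of-co-toepltz-operator} with Theorem~\ref{dual-of-toeplitz-operator} in the appropriate order, with the range hypothesis on $\beta^{\prime\prime}$ (resp.\ on $\alpha^{\prime}$) entering exactly where the Toeplitz-dual step is invoked. Your added bookkeeping about which tensor-product inclusion is in play, and the consistency check via Proposition~\ref{without-proof}, only make explicit what the paper's remark already records.
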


\noindent 
\textbf{Remark:} 
By Proposition~\ref{without-proof} 
in each of these three cases 
the double dual of the initially 
given operator is necessarily 
an extension of that operator. 
The question is whether the double dual 
of a Toeplitz (resp., co-Toeplitz) operator 
is again a Toeplitz (resp., co-Toeplitz) 
operator and, if so, what is the formula 
for the double dual. 
This theorem answers that question provided 
a specific technical condition holds. 

\begin{proof}
By Theorem~\ref{dual-of-co-toepltz-operator} 
we have $ (C_{g})^{\prime} = T_{e_{g}} $ 
for $ g \in \mathcal{C} $. 
Taking the dual of this using 
Theorem~\ref{dual-of-toeplitz-operator} gives 
$$ (C_{g})^{\prime\prime} = (T_{e_{g}})^{\prime} 
  = C_{\mathrm{ev}_{(e_g)}}
$$
using the hypothesis on $ \beta $. 
This shows the first part of the theorem. 

For the second part we have from 
Theorem~\ref{dual-of-co-toepltz-operator} 
that $ (C_{\mu})^{\prime} = T_{\mu} $ 
for a co-symbol $ \mu$ in the algebra 
$\mathcal{C}^{\prime} $. 
Then by Theorem~\ref{dual-of-toeplitz-operator} 
we obtain 
$$
(C_{\mu})^{\prime\prime} = (T_{\mu})^{\prime} 
= C_{\mathrm{ev}_{\mu}} 
$$
where we again use the same hypothesis on $ \beta $. 

For the last part 
from Theorem~\ref{dual-of-toeplitz-operator} 
we have 
$ (T_{g})^{\prime} = C_{\mathrm{ev}_{g}} $, 
using the hypothesis on $ \alpha $.  
Then applying 
Theorem~\ref{dual-of-co-toepltz-operator} 
we immediately get 
$$
(T_{g})^{\prime\prime} = 
(C_{\mathrm{ev}_{g}})^{\prime} = T_{\mathrm{ev}_{g}}. 
$$

This concludes the proof. 
\end{proof}

\noindent 
A consequence of this section is that the dual  
of a co-Toeplitz operator is a Toeplitz operator 
and has a relatively simple formula. 
However, the corresponding result for the dual of a 
Toeplitz operator required an extra hypothesis. 
So this is an asymmetry in this duality. 
Another question is whether every Toeplitz 
(resp., co-Toeplitz) operator is the dual of a 
co-Toeplitz (resp., Toeplitz) operator. 
This question remains as an open problem.

\section{Adjoints}
\label{adjoint-section}

We next examine the relation between the 
operator adjoint 
$(C_g)^*$ of a co-Toeplitz operator $C_g$ 
with symbol $ g \in \mathcal{C} $ 
and the co-Toeplitz operator $C_{g^*}$.  
Since $ C_{g} : \mathcal{P} \to \mathcal{P} $
and the vector space $ \mathcal{P} $ 
does not in general have 
a $ * $-operation on it, there should be no confusion 
with the adjoint notation $(C_g)^*$ and the 
previously defined $ * $-operation of an 
operator that maps between vector spaces with 
a $ * $-operation. 

As one would expect, to get a result we need 
to assume some sort of a relation between the 
inner product on the pre-Hilbert space 
$ \mathcal{P} $, used to define $(C_g)^*$, 
and the $*$-operation in the symbol space, 
used to define $C_{g^*}$. 
In the Toeplitz case the relation needed 
is easily seen to be 
\begin{equation}
\label{Toeplitz-star-prod-reln} 
\langle M_{g^{*}} \phi, \psi \rangle_{\mathcal{P}} 
= 
\langle \phi , M_{g} \psi \rangle_{\mathcal{P}}
\qquad \mathrm{or} \qquad 
\langle \phi g^{*}, \psi \rangle_{\mathcal{A}} 
= 
\langle \phi , \psi g \rangle_{\mathcal{A}}  
\end{equation}
for $\phi, \psi \in \mathcal{P}$ and 
$ g \in \mathcal{A} $. 
This translates directly into 
$ T_{g^{*}} \subset (T_g)^* $, an inclusion 
of densely defined operators 
acing in $ \mathcal{H} $.  
For more details, including examples, 
see \cite{sbs5}. 

For the co-Toeplitz case with symbol 
$ g \in \mathcal{C} $, a co-algebra, 
we do two straightforward calculations 
using the formula $ C_g = \tilde{M}_{g} \, j $. 
In the following we take $\phi, \psi \in \mathcal{P}$ 
and $g \in \mathcal{C}$. 
First we have
$$
\langle \phi , C_g \, \psi \rangle_{\mathcal{P}}
= 
\langle \phi , (\tilde{M}_{g} \, j) \, \psi 
\rangle_{\mathcal{P}} 
= 
\langle \phi , \tilde{M}_{g} \, \psi 
\rangle_{\mathcal{P}}. 
$$

On the other hand we get 
$$
\langle C_{g^{*}} \phi , \psi \rangle_{\mathcal{P}}
= 
\langle (\tilde{M}_{g^{*}} \, j) \phi , \psi \rangle_{\mathcal{P}} 
= 
\langle \tilde{M}_{g^{*}} \phi , \psi \rangle_{\mathcal{P}}. 
$$

So the condition we impose now and for 
the rest of this paper is
\begin{equation}
\label{M-tilde-condition}
\langle \tilde{M}_{g^{*}} \phi , \psi \rangle_{\mathcal{P}} 
=
\langle \phi , \tilde{M}_{g} \, \psi 
\rangle_{\mathcal{P}} 
\end{equation}
for all $ \phi, \psi \in \mathcal{P} $ 
and $ g \in \mathcal{C} $. 
We have shown the next result. 

\begin{theorem}
Assume \eqref{M-tilde-condition} holds. 
Then we have this 
inclusion of operators acting in $ \mathcal{H} $: 
\begin{equation}
\label{C-g-star-both-ways}
C_{g^{*}} \subset (C_g)^*. 
\end{equation} 
In particular, the adjoint of $ C_{g} $ 
restricted to $ \mathcal{P} $ is 
exactly $ C_{g^{*}} $. 
\end{theorem}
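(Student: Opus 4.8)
The plan is to combine the two inner-product computations displayed immediately above the theorem with the hypothesis \eqref{M-tilde-condition}, and then to read off the adjoint inclusion from the definition of the Hilbert-space adjoint. First I would record that, for all $\phi,\psi\in\mathcal P$ (identifying each element of $\mathcal P$ with its image under $j$ in $\mathcal C$ wherever $\tilde M_g$ is applied to it),
\[
\langle C_{g^{*}}\phi,\psi\rangle_{\mathcal P}
=\langle\tilde M_{g^{*}}\,j\phi,\psi\rangle_{\mathcal P}
=\langle\phi,\tilde M_{g}\,j\psi\rangle_{\mathcal P}
=\langle\phi,C_{g}\psi\rangle_{\mathcal P},
\]
where the first and last equalities are just $C_h=\tilde M_h\,j$, and the middle equality is precisely \eqref{M-tilde-condition}.

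Next I would pass from this identity of sesquilinear pairings on $\mathcal P$ to the operator statement. Since $\langle\cdot,\cdot\rangle_{\mathcal P}$ is the restriction of the inner product of $\mathcal H$ and $\mathcal P$ is dense in $\mathcal H$, the operators $C_g$ and $C_{g^{*}}$ (both with domain $\mathcal P$) are densely defined linear operators in $\mathcal H$, so $(C_g)^{*}$ is a well-defined closed operator. Fix $\phi\in\mathcal P$. The displayed identity says that the linear functional $\psi\mapsto\langle\phi,C_g\psi\rangle_{\mathcal P}$ on the dense domain $\mathcal P$ is represented by the vector $C_{g^{*}}\phi\in\mathcal P\subseteq\mathcal H$; by the definition of the adjoint this is exactly the assertion that $\phi\in\mathrm{Dom}\big((C_g)^{*}\big)$ with $(C_g)^{*}\phi=C_{g^{*}}\phi$, the representing vector being unique because $\mathcal P$ is dense. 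Since $\phi\in\mathcal P$ was arbitrary we obtain $\mathcal P\subseteq\mathrm{Dom}\big((C_g)^{*}\big)$ and $(C_g)^{*}\!\upharpoonright_{\mathcal P}=C_{g^{*}}$, which is the inclusion \eqref{C-g-star-both-ways}; the ``in particular'' clause is just a reformulation of this last equality.

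I do not expect a genuine obstacle: the real content of the theorem is the hypothesis \eqref{M-tilde-condition}, and once the displayed chain of equalities is in hand the conclusion is the standard fact that a symmetric-type pairing valid on a common domain yields an adjoint inclusion. The only thing demanding a little care is the bookkeeping around $j$ — being consistent about whether $\tilde M_g$ acts on $\psi\in\mathcal P$ or on $j(\psi)\in\mathcal C$ — together with the already-used fact that $C_g$ maps $\mathcal P$ into $\mathcal P$, so that $C_{g^{*}}\phi$ genuinely lies in $\mathcal H$ and the adjoint pairing makes sense.
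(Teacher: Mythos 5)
Your proposal is correct and follows the paper's own route exactly: the two inner-product computations preceding the theorem combined with \eqref{M-tilde-condition} give $\langle C_{g^{*}}\phi,\psi\rangle_{\mathcal P}=\langle\phi,C_{g}\psi\rangle_{\mathcal P}$, from which the adjoint inclusion is read off. You merely spell out the standard domain-of-the-adjoint step that the paper leaves implicit, which is a harmless (indeed welcome) elaboration.
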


So far the argument closely follows the 
Toeplitz case. 
Replacing $ g $ with $ g^{*} $ 
in \eqref{C-g-star-both-ways} we obtain 
$C_{g} \subset (C_{g^{*}})^*$, which implies 
by functional analysis that $ C_{g} $
is a closable operator. 
Also, for $g$ real, that is $ g^{*} = g $, 
we see directly from \eqref{C-g-star-both-ways} that 
$ C_g $ is a symmetric operator, in which case 
it then becomes relevant to analyze its self-adjoint 
extensions, if such extensions exist. 
In particular, it would be interesting to know 
if $ C_{g} $ is essentially self-adjoint. 

The condition \eqref{M-tilde-condition} 
can be expanded out in various special cases. 
We use the special case for $ \beta $ 
given in \eqref{special-beta} and the 
definition of $ \pi_g $ in \eqref{define-pi-g}. 
In the following calculations we take 
$ \phi, \psi \in \mathcal{P} $
and $ g \in \mathcal{C} $. 
So, on the one hand we have 
\begin{align}
\label{M-tilde-g-one-hand} 
\langle \phi, \tilde{M}_g \, \psi \rangle_{\mathcal{P}} 
&=  
\langle \phi, \pi_g \beta \,\psi \rangle_{\mathcal{P}}
\\
&=
\langle \phi, \pi_g (Q \otimes id) \Delta_{\mathcal{C}} \, \psi \rangle_{\mathcal{P}} \nonumber
\\
&=
\langle \phi, \pi_g (Q \psi^{(1)} \otimes \psi^{(2)} \rangle_{\mathcal{P}} \nonumber
\\
&=
\langle \phi, \langle g , \psi^{(2)} \rangle_{\mathcal{C}} \, Q \psi^{(1)} 
\rangle_{\mathcal{P}} \nonumber 
\\
&=
\langle g , \psi^{(2)} \rangle_{\mathcal{C}} \, 
\langle \phi, Q \psi^{(1)} 
\rangle_{\mathcal{P}}. \nonumber
\end{align}

On the other hand, using this result \eqref{M-tilde-g-one-hand}, 
we see that 
\begin{align*}
\langle \tilde{M}_{g^{*}} \, \phi,  \psi \rangle_{\mathcal{P}}
&= 
\langle \psi, \tilde{M}_{g^{*}} \, \phi \rangle_{\mathcal{P}}^{*}
\\
&=
\big( \langle g^{*},\phi^{(2)} \rangle_{\mathcal{C}} \, 
\langle \psi, Q \phi^{(1)} 
\rangle_{\mathcal{P}} 
\big)^{*}
\\
&=
\langle \phi^{(2)} , g^{*} \rangle_{\mathcal{C}} \, 
\langle Q \phi^{(1)} , \psi 
\rangle_{\mathcal{P}} 
\end{align*}
So we have obtained the following result. 

\begin{theorem}
With the above choices 
for $ \beta $ and $ \pi_g $ 
we get that the symmetry condition 
\eqref{M-tilde-condition} 
is equivalent to  
$$
\langle g , \psi^{(2)} \rangle_{\mathcal{C}} \, 
\langle \phi, Q \psi^{(1)} 
\rangle_{\mathcal{P}} 
=
\langle \phi^{(2)} , g^{*} \rangle_{\mathcal{C}} \, 
\langle Q \phi^{(1)} , \psi 
\rangle_{\mathcal{P}} 
$$
for all $ \phi, \psi \in \mathcal{P} $
and $ g \in \mathcal{C} $. 
\end{theorem}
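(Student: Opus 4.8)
The plan is to read the statement straight off the two computations already carried out in the paragraph preceding the theorem, since condition \eqref{M-tilde-condition} is literally the equality of the two quantities whose closed forms were just derived. Recall that \eqref{M-tilde-condition} says that for all $\phi,\psi\in\mathcal{P}$ and $g\in\mathcal{C}$,
$$
\langle \tilde{M}_{g^{*}}\,\phi,\psi\rangle_{\mathcal{P}}
= \langle \phi,\tilde{M}_{g}\,\psi\rangle_{\mathcal{P}}.
$$
With the choice \eqref{special-beta} for $\beta$ and the choice \eqref{define-pi-g} for $\pi_g$, the right-hand side was evaluated in \eqref{M-tilde-g-one-hand} as $\langle g,\psi^{(2)}\rangle_{\mathcal{C}}\,\langle\phi,Q\psi^{(1)}\rangle_{\mathcal{P}}$, and the left-hand side was evaluated as $\langle\phi^{(2)},g^{*}\rangle_{\mathcal{C}}\,\langle Q\phi^{(1)},\psi\rangle_{\mathcal{P}}$.

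So first I would simply substitute these two closed forms into \eqref{M-tilde-condition}: doing so produces exactly the displayed identity of the theorem, which gives the forward implication. For the converse I would run the substitution backwards: if the displayed identity holds for all $\phi,\psi\in\mathcal{P}$ and $g\in\mathcal{C}$, then re-identifying the two sides as $\langle\tilde{M}_{g^{*}}\phi,\psi\rangle_{\mathcal{P}}$ and $\langle\phi,\tilde{M}_{g}\psi\rangle_{\mathcal{P}}$, respectively, via \eqref{M-tilde-g-one-hand} and its conjugated companion, recovers \eqref{M-tilde-condition}. Because every intermediate step here is an honest equality rather than an inclusion of operators, the two conditions are genuinely equivalent, and the proof is essentially just this transcription.

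The only place that needs a little care — and it is already handled in the computation above the theorem — is the rewriting of the left-hand side: one uses complex-symmetry of the form $\langle\cdot,\cdot\rangle_{\mathcal{P}}$ to pass to $\langle\psi,\tilde{M}_{g^{*}}\phi\rangle_{\mathcal{P}}^{*}$, applies \eqref{M-tilde-g-one-hand} with $g$ replaced by $g^{*}$ and $\phi,\psi$ interchanged, and then distributes the complex conjugate over the product of the two scalar factors, using $\langle a,b\rangle^{*}=\langle b,a\rangle$ for each of the forms $\langle\cdot,\cdot\rangle_{\mathcal{C}}$ and $\langle\cdot,\cdot\rangle_{\mathcal{P}}$. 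I do not expect any real obstacle; the theorem is a direct consequence of the two boxed-off formulas, and writing it up is a matter of quoting \eqref{M-tilde-g-one-hand} twice and invoking the definition \eqref{M-tilde-condition}.
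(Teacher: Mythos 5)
Your proposal is correct and is essentially identical to the paper's own argument: the paper derives the closed form \eqref{M-tilde-g-one-hand} for $\langle \phi, \tilde{M}_g\psi\rangle_{\mathcal{P}}$, obtains the companion expression for $\langle \tilde{M}_{g^*}\phi,\psi\rangle_{\mathcal{P}}$ by complex symmetry and conjugation exactly as you describe, and then reads off the equivalence by substituting both into \eqref{M-tilde-condition}. Nothing further is needed.
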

The condition 
in this theorem does not seem 
to be the dual of the condition 
\eqref{Toeplitz-star-prod-reln} 
in the 
Toeplitz setting, although it actually is.

\section{Creation and Annihilation Operators} 
\label{ann-creation-section}

We now come back to one of the most important 
aspects of this theory. 
First, we give the basic definition. 
\begin{definition}
    Let $ g \in \mathcal{P}^{*} $
    (or, equivalently, $ g^{*} \in \mathcal{P} $)
    be given. 
    Then we define 
    $$ 
  A^{\dagger} (g) := C_g \in \mathcal{L}(\mathcal{P}), 
    $$ 
    the {\rm creation operator (associated to 
   	the anti-holomorphic symbol $ g $)}. 
   Let $ g \in \mathcal{P} $ be given. 
   Then we define  
   $$ 
   A ( g ) := C_{g} \in \mathcal{L}(\mathcal{P}), 
   $$ 
   the {\rm annihilation operator (associated to 
   	the holomorphic symbol $ g $)}. 
\end{definition}

\noindent
\textbf{Remark:} One way to extend this 
definition to include the 
generalized co-Toeplitz operators 
is to extend to the co-symbols the 
definitions of holomorphic and anti-holomorphic 
elements. 
We leave this topic for future consideration. 
We also bring to the reader's attention 
that in the Toeplitz setting 
the holomorphic (resp., anti-holomorphic) symbols 
give the creation (resp., annihilation) operators. 
These relations are inverted in the co-Toeplitz 
setting. 
The motivation for this reversal comes from the 
example in Section~\ref{example-section}. 

\vskip 0.2cm 
These definitions are originally 
motivated by the definitions in 
Segal-Bargmann analysis and its generalizations.  
See 
Bargmann's paper~\cite{bargmann} where creation and 
annihilation operators were realized for the 
first time as adjoints of each other, which is 
basically the case here 
when \eqref{M-tilde-condition} holds. 
In this formulation the annihilation operators 
could have been defined without a $ * $-structure, 
while the creation operators use explicitly 
the $ * $-structure. 
This is just a consequence of using 
$ \mathcal{P} $ as the pre-Hilbert space. 
If the sesquilinear form is $ * $-symmetric
(see \eqref{star-symmetry}), then 
$ \mathcal{P}^{*} $ is a pre-Hilbert space 
with inner product given by restricting 
the sesquilinear form 
$  \langle \cdot , \cdot \rangle_{\mathcal{C}} $ 
to $ \mathcal{P}^{*} $. 
This is so, since 
for all $ f, g \in \mathcal{P}^{*} $ 
the identity \eqref{star-symmetry} implies 
\begin{equation}
\label{p-star-inner-prod}
   \langle f , g \rangle_{\mathcal{P}^{*}} =
   \langle f , g \rangle_{\mathcal{C}} =
    \langle f^{*} , g^{*} \rangle_{\mathcal{C}}^{*} = 
     \langle g^{*} , f^{*} \rangle_{\mathcal{P}}, 
\end{equation}
which shows that we do get a positive definite 
inner product on $\mathcal{P}^{*}$. 
Then the completion of the pre-Hilbert space 
$\mathcal{P}^{*}$ is denoted as $\mathcal{H}^{*}$. 
We can think of these as the space of anti-holomorphic 
polynomials $ \mathcal{P}^{*} $
and the anti-holomorphic 
Segal-Bargmann space $ \mathcal{H}^{*} $. 
The identity \eqref{p-star-inner-prod}
can be re-written as 
$$
  \langle f , g \rangle_{\mathcal{P}} = 
  \langle g^{*} , f^{*} \rangle_{\mathcal{P}^{*}} 
$$
which says that the anti-linear bijective map 
$ V : \mathcal{P} \to \mathcal{P}^{*} $
given by $ V f := f^{*} $ is anti-unitary. 
Also, $ V^{-1} = V $. 
Therefore, we next define the  
co-Toeplitz operator 
$\tilde{C}_{g} \in \mathcal{L}(\mathcal{P}^{*}) $ 
for $ g \in \mathcal{C} $ by 
$ \tilde{C}_{g} := V C_{g} V^{-1} $. 
This gives us essentially the same set-up as we had above, except now 
with the co-Toeplitz operators acting in a dense subspace 
of an anti-holomorphic Hilbert space. 
In this new set-up 
an annihilation operator is defined as 
$ \tilde{C}_{g} $ for $ g \in \mathcal{P}^{*} $, 
that is, the conjugation of 
a creation operator acting in the 
holomorphic Hilbert space $ \mathcal{H} $. 
Similarly, we define a creation operator 
acting in the anti-holomorphic Hilbert space  
as $\tilde{C}_{g}$ for $g \in \mathcal{P}$, 
the conjugation by $ V $ of an annihilation operator 
acting in the holomorphic Hilbert space. 

Some related structures are defined next. 
\begin{definition}
   	
   The unital subalgebra of 
   $ \mathcal{L} (\mathcal{P}) $ 
   generated by all of the creation and annihilation 
   operators is called the {\rm 	
   canonical commutation relations (CCR) algebra} 
   and is denoted as $ \mathcal{CCR} $. 
   
   The unital subalgebra of 
   $ \mathcal{L} (\mathcal{P}) $ 
   generated by all of the co-Toeplitz 
   operators with symbols in $ \mathcal{C} $ is called the 
   {\rm co-Toeplitz algebra}. 
      
   Finally, the unital subalgebra of 
   $ \mathcal{L} (\mathcal{P}) $ 
   generated by all of the generalized co-Toeplitz 
   operators with co-symbols in 
   $ \mathcal{C}^{\prime} $ is called the 
   {\rm generalized co-Toeplitz algebra}.   
\end{definition}

Creation and annihilation operators have a multitude 
of applications in physics. 
The CCR algebra also arises in many parts of physics. 
However, the newly introduced co-Toeplitz algebra 
and the generalized co-Toeplitz algebra are objects 
that are of more interest in the area of 
operator theory in mathematics. 
While all of these algebras have their 
importance, it seems that very little can be
said about them in general. 
However, they all can be studied in specific 
examples of this theory.

\section{Canonical Commutation Relations} 
\label{ccr-section}

The algebra $ \mathcal{CCR} $ defined here 
can be studied 
in much the same way as the canonical commutation 
algebra is studied in \cite{sbs5} in the Toeplitz setting. 
The upshot is that Planck's constant $ \hbar $ 
will be introduced into the theory and semi-classical 
algebras as well as a dequantized (or classical) 
algebra will be defined. 
To make this paper more self-contained we review how 
the relevant material of \cite{sbs5} applies in 
the co-Toeplitz setting. 

Note that we have already defined the algebra 
$\mathcal{CCR}$. 
It still remains to define the canonical 
commutation relations themselves. 
In physics one usually 
defines the algebra of 
canonical commutation relations by explicitly  
using generators and their relations, 
where these relations 
are by very definition the canonical commutation relations. 
In this setting we do the opposite by 
starting with 
$\mathcal{CCR}$, then writing it 
as the quotient of a free
algebra $\mathcal{F}$ and next identifying the kernel of the quotient map $p : \mathcal{F} \to \mathcal{CCR}$
as the ideal of canonical commutation relations. 
Finally, any minimal set of generators of
this ideal serves as canonical commutation 
relations associated to $\mathcal{CCR}$. 

To achieve this we define $ \mathcal{F} $ 
to be the free unital algebra generated 
by the abstract set 
$ F = \{ G_{f} ~|~ f 
\in\mathcal{P} \cup \mathcal{P}^{*} 
\subset \mathcal{C} \}$ 
in bijective correspondence with the set 
$ \mathcal{P} \cup \mathcal{P}^{*} $. 
The unital algebra morphism 
$p : \mathcal{F} \to \mathcal{CCR}$ is then defined 
on the algebra generators $ G_{f} $ 
of $ \mathcal{F} $ by $p(G_{f}) := C_{f}$ for all 
$ f \in \mathcal{P} \cup \mathcal{P}^{*} $. 
By the universal property of the free algebra 
$ \mathcal{F} $ this uniquely defines the unital 
algebra morphism $ p $. 
And since by definition the elements $ C_{f} $ for 
$ f \in \mathcal{P} \cup \mathcal{P}^{*} $ 
generate $ \mathcal{CCR} $ as a unital algebra, we 
see that $ p $ is surjective. 

\begin{definition}
	We define the 
	{\rm ideal of the canonical commutation relations (CCR)} 
    of the co-Toeplitz quantization $ C $
	to be $ \mathcal{R} := \ker \, p $. 
	
	A {\rm set of 
	canonical commutation relations (CCR)} 
    of the co-Toeplitz quantization $ C $ is defined 
    to be any minimal subset of 
    ideal generators of the two-sided ideal 
    $ \mathcal{R} $. 
\end{definition}

Notice that not only is {\em a} set of 
canonical commutation relations not unique in general, 
even its cardinality in general will not be uniquely 
determined by the given co-Toeplitz quantization. 

The free algebra $ \mathcal{F} $ has a natural 
grading 
$ \deg (G_{f_{1} } \cdots G_{f_{n} } ) := n $
for integer $ n \ge 1 $ and 
$ f_{1}, \dots , f_{n} \in \mathcal{P} \cup \mathcal{P}^{*} $. 
We also put $ \deg(1) := 0$, 
where $ 1 \in \mathcal{F} $ is the 
identity element. 
This leads to an important definition. 

\begin{definition}
	A homogeneous
	element with respect to this grading 
	 in $ \mathcal{R} $ is called a 
	{\rm classical relation}  
	while a non-homogeneous element
	in $ \mathcal{R} $ is called a 
	{\em quantum relation}.
\end{definition}

The motivation for the previous definition is 
given in \cite{sbs5}. 
While this definition applies to any element 
in $ \mathcal{R} $, its main intent is to 
divide the elements in a set of CCR into 
two disjoint subsets. 

It turns out that a 
logically possible, though physically 
anomalous, situation happens when 
$ \mathcal{R} = \ker \, p = 0$, in which case 
$ p $ is an algebra isomorphism and the (unique!)
set of CCR's is empty. 
In this strange case the quantization 
is {\em over-quantized} 
in the sense that there are no pairs 
$f_{1} \ne f_{2} \in \mathcal{P} \cup \mathcal{P}^{*}$
with the 
{\em classical (or trivial) commutation relation} 
$ C_{f_{1}} C_{f_{2}} - C_{f_{2}} C_{f_{1}} = 0 $, 
and then, as we will see momentarily, we can not
introduce Planck's constant $\hbar$ into the theory. 
Also, despite Dirac's insistence on the importance 
of non-commuting observables, some 
non-trivial and useful 
classical commutation relations are 
always present in quantum theory. 

The next definition is also motivated in the 
discussion in \cite{sbs5}. 
\begin{definition}
	Let 
	$ R \in \mathcal{R} $ be a non-zero relation. 
	Then we write $R$ 
	uniquely as 
	\begin{equation}
	\label{R-expansion}
	    R = R_{0} + R_{1} + \cdots + R_{n}, 
	\end{equation}
	where $ R_{i} $ is homogeneous with 
	$ \deg R_i = i $ (for all $ i = 0, 1, \dots , n $ 
	which satisfy $ R_{i} \ne 0$) 
	and $ R_{n} \ne 0 $. 
	
	Then we say that $ R_{n} $ is the 
	{\em classical relation associated to $R$}.
\end{definition}

Note that $ R_{n} $ is indeed a 
non-zero classical relation. 
Based on what is true in the Toeplitz setting 
as is presented in \cite{sbs5}, 
I conjecture that both of the cases 
$ R_{n} \in \mathcal{R} $ and 
$ R_{n} \notin \mathcal{R} $ can occur. 
The intuition here is that the terms 
$ R_{0}, R_{1}, \dots , R_{n-1} $ 
are `quantum corrections' to the classical 
relation $ R_{n} $. 
To see what that means let us define 
the $\hbar$-deformation of a non-zero relation 
$ R \in \mathcal{R} $ to be 
\begin{equation}
\label{define-R-h}
   R(\hbar) := \hbar^{n/2} R_{0} + 
               \hbar^{(n-1)/2} R_{1} + 
               \cdots + 
               \hbar^{1/2} R_{n-1} + 
               R_{n}, 
\end{equation}
where $ \hbar^{1/2} \in \mathbb{C} $ is arbitrary, 
$ \hbar = (\hbar^{1/2})^{2} $ and 
$ R $ is written as in \eqref{R-expansion}. 
Notice that $ R(0) = R_{n} $. 
This says that the classical 
case $\hbar = 0$ gives us the classical relation 
associated to $ R $. 

In physics we take $ \hbar^{1/2} > 0 $, but for now 
there is no need to impose that restriction. 
We use these definitions to define some more 
two-sided ideals in $ \mathcal{F} $ and their
associated quotient algebras. 
\begin{definition}
	Let $ \mathcal{R}_{cl} $ denote the two-sided 
	ideal in $ \mathcal{F} $ generated by all 
	the classical relations with degree $ \ge 1 $. 
	
	The {\rm dequantized (or classical) algebra} 
	of the co-Toeplitz quantization is defined as: 
	$$
	    \mathcal{A}_{cl} =  
	    \mathcal{DQ}:= \mathcal{F} / \mathcal{R}_{cl}. 
	$$ 
	
	Let $\mathcal{R}_{\hbar}$ denote the two-sided 
	ideal in $ \mathcal{F} $ generated by all 
	the relations $ R(\hbar) $ as defined in 
	\eqref{define-R-h} with 
	$ 0 \ne R \in \mathcal{R} $ and 
	$\deg R \ge 1 $. 
	
	Then the {\rm $\hbar$-deformed CCR algebra} 
	associated 
	with the co-Toeplitz quantization is defined
	as: 
	$$
	\mathcal{CCR}_{\hbar} :=\mathcal{F} / \mathcal{R}_{\hbar}.
	$$
\end{definition}
 
By the above remarks we see that 
$ \mathcal{DQ} = \mathcal{CCR}_{0}$. 
Also, we have $ \mathcal{CCR} = \mathcal{CCR}_{1} $. 
There seems to be no reason why 
the dequantized (or classical) algebra
$ \mathcal{DQ} $ should be commutative, 
and so I conjecture that there are examples 
where it is not. 

The algebras $ \mathcal{CCR}_{\hbar} $ 
may have limiting properties as $ \hbar > 0 $
tends to zero. 
These would be the {\em semi-classical} properties 
of the co-Toeplitz quantization. 
And properties of the algebra $ \mathcal{DQ} $
would be the {\em classical} properties 
of the co-Toeplitz quantization. 
In short, this gives us a framework for 
analyzing semi-classical as well as classical 
aspects of this theory. 
However, it seems difficult to delve into 
all this in greater detail 
at the present abstract level, 
though these considerations can be brought to bear 
on specific examples. 
The reader can consult \cite{sbs5} 
for more details, including motivation, for the 
topics of this section. 

Let me emphasize that the approach here is the 
opposite of the usual approach in mathematical physics, 
where one takes certain interesting 
commutation relations to be the given CCR's, and then 
representations of 
those same commutation relations are realized by  
operators acting in some Hilbert space, 
often a Fock space of some sort. 
This more usual approach is found in the recent paper 
\cite{bozejko} and many of the 
papers in its list of references. 
Here, on the other hand, we start with a 
Hilbert space and then define 
the creation and annihilations operators acting in it. 
Only after this do we finally arrive 
at a definition of the CCR's.

\section{An example: $SU_q(2)$}
\label{example-section}

This general theory of co-Toeplitz quantization should 
be fleshed out with specific examples. 
We now proceed with such an example. 

We let $\mathcal{C} = SU_q (2)$ for 
$0 \ne q \in \mathbb{R}$. 
To avoid technicalities we assume as well that 
$ q \ne -1$. 
Then $ SU_q (2) $ is a Hopf $*$-algebra,  
and so in particular it is a $*$-co-algebra. 
We first review some of the well-known 
facts concerning the quantum group $SU_q (2)$. 
For these and many more details see \cite{KS}. 

$SU_q (2)$ can be defined as the universal $*$-algebra 
with the identity element $1$ generated by 
elements $a$ and $c$ satisfying these relations: 
\begin{align}
\label{SU-q-2-relations}
&a c = q \, c a, \qquad a c^* = q \, c^* a, \qquad c c^* = c^* c, 
\\
&a^* a + c^* c =1, \qquad a a^* + q^2 c^* c = 1.
\nonumber 
\end{align} 

The co-multiplication 
$\Delta_\mathcal{C} : \mathcal{C} \to \mathcal{C} \otimes \mathcal{C}$ 
of this co-algebra is the unique $*$-algebra 
morphism determined by 
\begin{align*}
\Delta_\mathcal{C} (a) &= a \otimes a - q \, c^* \otimes c, 
\\
\Delta_\mathcal{C} (c) &= c \otimes a + a^* \otimes c. 
\end{align*}
The co-unit 
$\varepsilon : \mathcal{C} \to \mathbb{C}$ 
is the unique $*$-algebra 
morphism determined by 
\begin{equation*}
\varepsilon(a) = 1 \qquad  \mathrm{and} \qquad 
 \varepsilon(c) = 0. 
\end{equation*}

Even though only the  $ * $-co-algebra structure 
of  $SU_q (2)$ will be used, for completeness 
we also note that the antipode, 
denoted by $S$, is the unique unit preserving, 
anti-multiplicative algebra 
morphism (but \textit{not} $*$-morphism) 
determined by 
$$
    S(a) = a^*, \qquad S(a^*) = a, \qquad S(c) = - q c, \qquad  S(c^*) = - q^{-1} c^*. 
$$

While $SU_q (2)$ is generated by just 
two elements as a $*$-algebra, 
it is an infinite dimensional vector space. 
A Hamel basis of $SU_q (2)$ is given by 
$\{ \varepsilon_{klm} ~|~ k \in \mathbb{Z}
\mathrm{~and~} l,m \in \mathbb{N} \}$, 
where 
\begin{align*}
     \varepsilon_{klm} &= a^k \, c^l \, (c^*)^m \qquad 
     \qquad \mathrm{if~} k \ge 0, 
     \\
     \varepsilon_{klm} &= (a^*)^{-k} \, c^l \, (c^*)^m 
     \qquad \mathrm{~if~} k < 0.
\end{align*}

We define a sesquilinear form on 
$\mathcal{C} = SU_q (2)$ by requiring 
\begin{equation}
\label{define-sequi-form}
\langle 
\varepsilon_{klm}, \varepsilon_{rst} 
\rangle_{\mathcal{C}} 
= w(k, l-m) \, \delta_{k,r} \, \delta_{l-m, s-t} 
\end{equation}
and then extending anti-linearly in the first entry 
and linearly in the second entry. 
Here 
$w : \mathbb{Z} \times \mathbb{Z} \to (0, \infty)$ is 
some strictly positive weight function, and 
$\delta_{i,j}$ is the Kronecker delta function 
for $ i,j \in \mathbb{Z} $. 
See~\cite{sbs2} for motivation for how such 
a formula is related with
the inner product defined in the holomorphic 
Hilbert space in Bargmann's paper \cite{bargmann}. 

While \cite{bargmann} was the original motivation 
for \eqref{define-sequi-form}, there is another 
way of understanding this, which we now sketch. 
See \cite{KS} for more details and 
background. 
It turns out that there is an algebraic 
direct sum decomposition 
\begin{equation}
\label{c-direct-sum}
    \mathcal{C} = 
    \oplus_{m,n} A[m,n], 
\end{equation}
where the sum is over 
$ (m,n) \in \mathbb{Z} \times \mathbb{Z} $. 
This is defined in terms of two  
co-actions on $ \mathcal{C} $
of the \textit{diagonal} quantum group 
$ \mathcal{K} = \mathbb{C}[t,t^{-1}] $, the algebra of 
Laurent polynomials in the variable $ t $. 
One realizes $ \mathcal{K} $ 
(which actually is a Hopf $ * $-algebra) 
as a quantum subgroup
of $ \mathcal{C} $ via 
the surjection $ \pi : \mathcal{C} \to \mathcal{K} $
which is defined to be the algebra morphism 
determined by $ \pi (a) = t $, $ \pi(a^{*}) = t^{-1} $
and $ \pi(c) = \pi(c^{*}) = 0 $. 
Then the left co-action $ L_{\mathcal{K}} $ of
$\mathcal{K}$ on $\mathcal{C} $
is defined as the composition 
$$
    \mathcal{C} 
    \stackrel{\Delta_{\mathcal{C}}}{\longrightarrow} 
    \mathcal{C} \otimes \mathcal{C} 
    \stackrel{\pi \otimes id}{\longrightarrow}
    \mathcal{K} \otimes \mathcal{C}.  
$$
Similarly, the right co-action $ R_{\mathcal{K}} $ of
$\mathcal{K}$ on $\mathcal{C} $
is defined as the composition 
$$
\mathcal{C} 
\stackrel{\Delta_{\mathcal{C}}}{\longrightarrow} 
\mathcal{C} \otimes \mathcal{C} 
\stackrel{id \otimes \pi}{\longrightarrow}
\mathcal{C} \otimes \mathcal{K}.  
$$

Using these co-actions we define for $ m,n \in \mathbb{Z} $ 
$$
   A[m,n] := \{ x \in \mathcal{C} ~|~ 
               L_{\mathcal{K}} (x) = t^{m} \otimes x
               \quad \mathrm{and} \quad 
               R_{\mathcal{K}} (x) = x \otimes t^{n}
             \},
$$
the vector subspace of \textit{bi-homogeneous} elements with 
respect to these co-actions. 
For such a bi-homogeneous element $ x \in A[m,n] $ 
we write 
$ \mathrm{bideg} (x) = (m,n) 
\in \mathbb{Z} \times \mathbb{Z}$, a group.  
One can show that this bi-grading is compatible 
with the multiplication in $ \mathcal{C} $ in the
sense that  
\begin{equation}
\label{compatible-product}
   A[m,n] \, A[p,q] \subset A[m+p,n+q]
\end{equation}
for $ m,n,p,q \in \mathbb{Z} $, since
$ L_{\mathcal{K}} $ and $ R_{\mathcal{K}} $ 
are algebra morphisms. 
This can alternatively be written as 
$$ 
\mathrm{bideg} (xy) = 
\mathrm{bideg} (x) + \mathrm{bideg} (y) 
$$ 
for all bi-homogeneous elements $ x $ and $ y $. 
We also have that $ a \in A[1,1 ]$
and $ c \in A[-1,1] $.
Moreover, $ x \in A[m,n] $ implies that 
$ x^{*} \in A[-m,-n] $. 
Another fact is that $ A[m,n] = 0 $
if and only if $ m - n $ is odd. 

From \eqref{compatible-product} we can see that
$ A[0,0] $ is a sub-algebra of $ \mathcal{C} $
and then that each $ A[m,n] $ 
is an $ A[0,0] $-bimodule. 
One has that $ A[0,0] = \mathbb{C}[\zeta] $, 
the polynomial algebra in the variable 
$ \zeta = q^{2} c c^{*} $. 
(The coefficient $ q^{2} $ makes this notation 
conform with that in \cite{KS}.)
Furthermore, each subspace $ A[m,n] $ with $ m - n $ even 
is a free left (respectively, right)
$ \mathbb{C}[\zeta] $-module 
on one generator denoted as $ e_{m,n} $ 
in the notation of \cite{KS}. 

The basis elements $ \varepsilon_{klm} $ of $ \mathcal{C} $
turn out to be bi-homogeneous with 
$ \mathrm{bideg} (\varepsilon_{klm}) = (k-l+m, k+l-m)$
for all $ k \in \mathbb{Z} $ and $ l,m \in \mathbb{N} $. 
Since the weight function in \eqref{define-sequi-form} is 
strictly positive we see that 
$ \langle 
\varepsilon_{klm}, \varepsilon_{rst} 
\rangle_{\mathcal{C}} 
\ne 0 $
if and only if 
both $ k = r $ and $ l-m = s-t $. 
But this last condition is equivalent to both 
$ k-l+m = r-s+t $ and $ k+l-m = r+s-t $, which is 
the same as 
$ \mathrm{bideg} (\varepsilon_{klm}) = \mathrm{bideg} (\varepsilon_{rst})$. 
This shows that \eqref{c-direct-sum} is an orthogonal  
direct sum with respect to the sesquilinear form 
\eqref{define-sequi-form}, even though this property 
was not being considered when I defined 
\eqref{define-sequi-form}.  
However, this same analysis shows that 
the Hamel basis $\{ \varepsilon_{klm} \}$ is not 
an orthogonal basis, since for given indices 
$ k, l, m $ we have 
$
\langle 
\varepsilon_{klm}, \varepsilon_{kst} 
\rangle_{\mathcal{C}} 
\ne 0
$
for all pairs $ s,t \in \mathbb{N} $
satisfying $ s - t = l - m $. 
And there are infinitely many such pairs.  
It is known that there are other natural sesquilinear 
forms on $ \mathbb{C} $ for which 
\eqref{c-direct-sum} is an orthogonal direct sum. 
In fact, this is done using the (unique!) 
\textit{Haar state} 
of $ SU_q(2) $ and so is more closely related to 
the structure of  $ SU_q(2) $ as a quantum group. 
Again, see \cite{KS} for more details. 

We define $\mathcal{P}:= \mathrm{alg} \{a , c \}$, 
the sub-algebra (but not sub-$*$-algebra) 
of $SU_q (2)$ generated by $a$ and $c$. 
This is a sub-algebra of `holomorphic' elements.  
This is the same sub-algebra that was used in 
\cite{sbs5} for a Toeplitz quantization of $SU_q (2)$. 
We can identify $\mathcal{P}$ as the free algebra generated 
by $a$ and $c$, modulo the relation $a c = q \, c a$, and so 
(as an algebra) $\mathcal{P}$ is the 
complex Manin quantum plane, which  is denoted by 
$ A_{q}^{2|0} $ in \cite{manin}.

A Hamel basis of $\mathcal{P}$ 
is given by the monomials 
$a^k c^l = \varepsilon_{kl0}$ for $k,l \in \mathbb{N}$.
Since 
$$
\langle 
\varepsilon_{kl0} ,  \varepsilon_{rs0} 
\rangle_{\mathcal{C}} 
= w(k, l) \, \delta_{k,r} \, \delta_{l, s}, 
$$ 
we have that $\{ a^k c^l ~|~ k,l \in \mathbb{N} \}$ 
is an orthogonal basis of $\mathcal{P}$ 
and that the sesquilinear form 
$\langle \cdot , \cdot \rangle_{\mathcal{C}}$ 
when restricted to 
$\mathcal{P}$ is a positive definite inner product. 
Clearly, 
$$
    \phi_{kl} := \dfrac{1}{ \sqrt{w(k,l)} } \, a^k \, c^l 
    = \dfrac{1}{ \sqrt{w(k,l)} } \,  \varepsilon_{kl0} 
    \qquad \mathrm{for~} k,l \ge 0
$$
is an orthonormal basis of $\mathcal{P}$. 
Thus $\mathcal{P}$ is a pre-Hilbert space 
whose completion is denoted by $\mathcal{H}$. 
With no loss of generality we can assume 
that $\mathcal{P}$ is 
a dense subspace of $\mathcal{H}$.

The injection 
$j : \mathcal{P} \to \mathcal{C} $ is defined 
to be the inclusion map. 
The quotient map $Q: \mathcal{C} \to \mathcal{P} $ 
is defined as in \eqref{specific-P} by 
$$
       Q(f) := \sum_{i,j \ge 0} 
       \langle \phi_{ij}, f \rangle_{\mathcal{C}} \, \phi_{ij} 
$$
for $f \in \mathcal{C}$. 
The sum on the right side has only finitely 
many non-zero terms. 
It is now any easy exercise to prove that 
$Q(a) = a$, $Q(c) = c$ 
and $Q(a^*) = Q(c^*) = 0$, 
these being results needed to prove 
some of the statements in the next paragraph. 
We will discuss the action of $Q$ on the basis 
elements $\varepsilon_{klm}$ a little later on. 

According to the general theory of 
Section~\ref{co-toeplitz-quantization-section}, 
the projection $Q$ should be a co-algebra 
morphism, meaning a linear map intertwining 
the two co-multiplications. 
While $Q$ is clearly linear, 
we have not specified a co-multiplication 
$\Delta_\mathcal{P}$ on the Manin 
quantum plane $\mathcal{P}$. 
To do this we require that $\Delta_\mathcal{P}$ 
is the unique algebra morphism 
$\mathcal{P} \to \mathcal{P} \otimes \mathcal{P}$ 
satisfying 
\begin{equation*}
     \Delta_\mathcal{P} (a) := a \otimes a 
     \quad \mathrm{and} \quad 
     \Delta_\mathcal{P} (c) := c \otimes a. 
\end{equation*}
To see that this does make sense, 
one first defines the algebra morphism 
$\Delta_\mathcal{P}$ on the free algebra 
generated by $a$ and $c$ by using the 
previous formulas, 
and then one shows that 
$\Delta_\mathcal{P} (a c - q \, c a) = 0$. 
Hence $\Delta_\mathcal{P}$ passes to the 
quotient algebra $\mathcal{P}$. 
It is straightforward to show that 
$\Delta_\mathcal{P}$ so defined is co-associative. 
However, no linear map 
$ l : \mathcal{P} \to \mathbb{C} $
can be the co-unit for this co-multiplication, 
since 
$$
   (l \otimes id) \Delta_\mathcal{P} (c) =
   (l \otimes id) (c \otimes a) = l(c) a \ne c. 
$$
So, $ \mathcal{P} $ is a co-algebra without co-unit, 
which is allowed in the general theory. 
Finally, one can readily prove that 
$Q: \mathcal{C} \to \mathcal{P} $ 
is a co-algebra morphism and that 
$ \mathcal{P} $ is not a sub-co-algebra 
of $ \mathcal{C} $. 

We now calculate the action of $Q$ 
on the basis elements $\varepsilon_{klm}$ of 
the co-algebra $\mathcal{C} = SU_q(2)$: 
\begin{align*}
Q(\varepsilon_{klm}) &= \sum_{i,j \ge 0} 
\langle 
\phi_{ij}, \varepsilon_{klm} 
\rangle_{\mathcal{C}} \, \phi_{ij} 
= \sum_{i,j \ge 0} \dfrac{1}{w(i,j)} 
\langle 
\varepsilon_{ij0}, \varepsilon_{klm} 
\rangle_{\mathcal{C}} \, \varepsilon_{ij0} 
\\
&= \sum_{i,j \ge 0} \dfrac{1}{w(i,j)} w(i,j) \delta_{i,k} \delta_{j,l-m} \, \varepsilon_{ij0} 
= \sum_{i,j \ge 0} \delta_{i,k} \delta_{j,l-m} \, \varepsilon_{ij0}
\\
&= \varepsilon_{k,l-m, 0}. 
\end{align*}
We establish the convention from now on that 
$ \varepsilon_{rst} =0 $ if either $ r < 0 $
or $ s < 0 $. 
So the last result says $ Q(\varepsilon_{klm}) = 0 $ 
if $k < 0$ or $ l < m $. 

Summarizing.
we have shown the following: 

\begin{prop}
\label{pro1}
The action of the projection $ Q $ on the 
basis elements $ \varepsilon_{klm} $ is given by 
\begin{align*}
   Q(\varepsilon_{klm}) &= 
   \varepsilon_{k,l-m,0} \ne 0 \qquad \mathrm{if~} 
   k \ge 0, \, l \ge m,
   \\
   Q(\varepsilon_{klm}) &= 0 
   \hskip 2.65cm \mathrm{otherwise.}
\end{align*}
\end{prop}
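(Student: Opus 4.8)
The plan is to evaluate the defining series for $Q$ at an arbitrary Hamel basis element $\varepsilon_{klm}$ and to observe that at most one term of that series survives. First I would recall that in this example $Q$ is defined, as in \eqref{specific-P}, by $Q(f) = \sum_{i,j \ge 0} \langle \phi_{ij}, f \rangle_{\mathcal{C}} \, \phi_{ij}$ for $f \in \mathcal{C}$, built from the orthonormal basis $\phi_{ij} = w(i,j)^{-1/2}\, \varepsilon_{ij0}$ of $\mathcal{P}$, and that for any such $f$ this sum has only finitely many non-zero summands. Substituting $f = \varepsilon_{klm}$ and writing each $\phi_{ij}$ in terms of $\varepsilon_{ij0}$ turns the $(i,j)$-coefficient into $w(i,j)^{-1}\, \langle \varepsilon_{ij0}, \varepsilon_{klm} \rangle_{\mathcal{C}}$.

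Next I would apply the defining formula \eqref{define-sequi-form} of the sesquilinear form, being careful with the index bookkeeping: the triple $(i,j,0)$ occupies the role of $(k,l,m)$ in \eqref{define-sequi-form}, so its relevant ``spin'' index is $j - 0 = j$, and one gets $\langle \varepsilon_{ij0}, \varepsilon_{klm} \rangle_{\mathcal{C}} = w(i,j)\, \delta_{i,k}\, \delta_{j,\,l-m}$. The weight factors then cancel, and the series collapses to the finite expression $\sum_{i,j \ge 0} \delta_{i,k}\, \delta_{j,\,l-m}\, \varepsilon_{ij0}$.

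Finally I would read off the surviving term according to two cases. If $k \ge 0$ and $l \ge m$, then the unique pair $(i,j)$ with $i,j \ge 0$ satisfying both Kronecker constraints is $(k,\, l-m)$, so $Q(\varepsilon_{klm}) = \varepsilon_{k,\,l-m,\,0} = a^{k} c^{l-m}$, which is non-zero since it is one of the Hamel basis vectors of $\mathcal{P}$. If instead $k < 0$ or $l < m$, then no admissible pair $(i,j)$ meets the constraints, the sum is empty, and $Q(\varepsilon_{klm}) = 0$; this is also consistent with the convention $\varepsilon_{rst} = 0$ for $r < 0$ or $s < 0$. Together these two cases give the proposition.

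There is no serious obstacle here: the whole argument is an index-chasing computation, and the only place demanding a little attention is the specialization of \eqref{define-sequi-form} to the triple $(i,j,0)$ --- in particular remembering that the weight and the second delta depend on $j$ and not on $i$ alone --- together with keeping straight the vanishing convention for out-of-range subscripts. No convergence issue arises, since for each fixed basis element only one summand of the series defining $Q$ can be non-zero.
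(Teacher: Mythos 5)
Your proposal is correct and follows essentially the same route as the paper: substitute $\varepsilon_{klm}$ into the defining sum for $Q$, expand $\phi_{ij}$ in terms of $\varepsilon_{ij0}$, apply \eqref{define-sequi-form} so the weights cancel and the Kronecker deltas collapse the sum to the single term $\varepsilon_{k,l-m,0}$, and then invoke the vanishing convention for out-of-range indices. Your explicit remark that the specialization of \eqref{define-sequi-form} uses the index $j-0=j$ is exactly the bookkeeping point the paper's computation relies on.
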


In the case $k \ge 0$, one can interpret these  
formulas for $ Q(\varepsilon_{klm})$ 
as saying that all the $c^*$'s  
disappear and each one of them also 
`kills off' exactly one of the $ c $'s. 
The condition $l < m$ means that the monomial
$\varepsilon_{klm}$ has strictly more occurrences of 
$c^*$'s than of $c$'s, in which case 
all the $ c $'s get `killed off', as does everything else,  
and the result is $ 0 $. 
Finally, if $ k < 0 $, then there are occurrences of 
$ a^{*} $ but none of $ a $, 
and this in itself 
suffices to give $ 0 $. 
This last fact has a handy generalization, 
which we now present. 
\begin{prop}
\label{handy} 
Let $w$ be a finite word in the 
alphabet with these four letters: $a, a^*, c , c^*$. 
If $w$ has strictly more occurrences of 
the letter $a^*$ than of the letter $ a $, then $Q(w) = 0$. 
\end{prop}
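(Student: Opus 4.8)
The plan is to reduce the statement to Proposition~\ref{pro1} by showing that, under the hypothesis, every Hamel basis element $\varepsilon_{klm}$ occurring in the expansion of $w$ has first index $k<0$; Proposition~\ref{pro1} then annihilates each of those, hence annihilates $w$ by linearity of $Q$. The tool for this is the $\mathbb{Z}\times\mathbb{Z}$-bi-grading of $\mathcal{C}=SU_q(2)$ together with its compatibility \eqref{compatible-product} with the multiplication.

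First I would record the bi-degrees of the four letters. We are given $a\in A[1,1]$ and $c\in A[-1,1]$, and since $x\in A[m,n]$ implies $x^{*}\in A[-m,-n]$, we also have $a^{*}\in A[-1,-1]$ and $c^{*}\in A[1,-1]$. Now suppose $w$ contains the letter $a$ exactly $p$ times, $a^{*}$ exactly $p'$ times, $c$ exactly $r$ times and $c^{*}$ exactly $r'$ times. Since bi-degree is additive over products (by \eqref{compatible-product}, the co-actions $L_{\mathcal{K}}$ and $R_{\mathcal{K}}$ being algebra morphisms), the element $w$ is bi-homogeneous with
$$
\mathrm{bideg}(w)=p(1,1)+p'(-1,-1)+r(-1,1)+r'(1,-1)=:(M,N),
$$
so in particular $M+N=2(p-p')$.

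Next I would expand $w$ in the Hamel basis $\{\varepsilon_{klm}\}$ of $\mathcal{C}$. Each $\varepsilon_{klm}$ is bi-homogeneous with $\mathrm{bideg}(\varepsilon_{klm})=(k-l+m,\,k+l-m)$, so grouping the basis vectors by bi-degree produces a basis of each summand $A[m,n]$ in \eqref{c-direct-sum}; consequently a bi-homogeneous element is a linear combination of basis vectors of its own bi-degree. Hence every $\varepsilon_{klm}$ appearing in $w$ satisfies $(k-l+m,\,k+l-m)=(M,N)$, and adding the two coordinates gives $2k=M+N=2(p-p')$, i.e. $k=p-p'$ for each such basis vector.

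Finally, the hypothesis $p'>p$ forces $k=p-p'<0$ for every basis vector in the expansion of $w$, and Proposition~\ref{pro1} says $Q(\varepsilon_{klm})=0$ whenever $k<0$. By linearity of $Q$ we conclude $Q(w)=0$. (If $w=0$ in $\mathcal{C}$ the conclusion is immediate, so that case needs no separate treatment.) As an alternative to the Hamel-basis expansion one can argue directly from $Q(f)=\sum_{i,j\ge 0}\langle\phi_{ij},f\rangle_{\mathcal{C}}\,\phi_{ij}$: each $\phi_{ij}$ has bi-degree $(i-j,\,i+j)$ and \eqref{c-direct-sum} is orthogonal for $\langle\cdot,\cdot\rangle_{\mathcal{C}}$, so $\langle\phi_{ij},w\rangle_{\mathcal{C}}$ vanishes unless $2i=M+N=2(p-p')<0$, which is impossible for $i\ge0$. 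The only step that deserves any care is the claim that a bi-homogeneous element expands using only basis vectors of the matching bi-degree; once the bi-grading is understood this is routine, and the rest is bookkeeping.
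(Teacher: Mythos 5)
Your argument is correct, but it is not the route the paper takes for this proposition. The paper proves Proposition~\ref{handy} by induction on the number of occurrences of $a$ in $w$: it uses the relations \eqref{SU-q-2-relations} to push all $c$'s and $c^*$'s to the right, locates an adjacent pair $aa^{*}$ or $a^{*}a$ in the remaining subword, replaces it by $1-q^{2}cc^{*}$ or $1-c^{*}c$, and invokes the induction hypothesis, with Proposition~\ref{pro1} handling the base case $k=0$. You instead argue through the bi-grading: each letter is bi-homogeneous ($a\in A[1,1]$, $a^{*}\in A[-1,-1]$, $c\in A[-1,1]$, $c^{*}\in A[1,-1]$), so by \eqref{compatible-product} the word $w$ is bi-homogeneous with coordinate sum $2(p-p')<0$, every $\varepsilon_{klm}$ in its expansion therefore has $k=p-p'<0$, and Proposition~\ref{pro1} (or, in your alternative phrasing, the orthogonality of the decomposition \eqref{c-direct-sum} against the $\phi_{ij}$ defining $Q$) kills each term. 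All the ingredients you use are established in the paper, and the one step needing care --- that a bi-homogeneous element expands only in basis vectors of its own bi-degree --- follows from the directness of \eqref{c-direct-sum}. It is worth noting that the paper itself remarks, immediately after its inductive proof, that the result ``can also be proved by evaluating the bi-degree,'' and it actually adopts exactly your method for the companion Proposition~\ref{handier} (there using the \emph{difference} of the two bi-degree coordinates where you use the \emph{sum}). The trade-off: the paper's induction is self-contained given only the defining relations and Proposition~\ref{pro1}, at the cost of a case analysis and some bookkeeping with powers of $q$; your bi-degree argument is shorter, avoids the induction entirely, and treats Propositions~\ref{handy} and~\ref{handier} uniformly, at the cost of relying on the bi-grading machinery imported from \cite{KS}.
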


\noindent 
\textit{Remark:} The hypothesis implies that the 
number of occurrences of $ a^{*} $ is strictly
larger than zero. 

\begin{proof}
Using the defining relations \eqref{SU-q-2-relations} 
we can push all occurrences of $c$ and $c^*$ to 
the right, thereby getting $w = q^{n} \, w^\prime \, c^l \, (c^*)^m$, where 
$l,m \in \mathbb{N}$, $n \in \mathbb{Z}$ and  
$w^\prime$ is a word with only occurrences of $a, a^*$. 
The number of occurrences of $a$ (resp., $a^*$) in $w^\prime$
is equal to the number of occurrences of $a$ (resp., $a^*$) in $w$. 
Let $ j $ be the number of occurrences of $ a^{*} $. 
We proceed by using induction on 
$k$, the number of occurrences of $a$ in $w$. 

First, we consider the case $k=0$. 
Then we have $w = q^{n} \, \varepsilon_{-j,l,m}$, 
where $j \ge k +1 = 1$ is 
the number of occurrences of $a^{*}$ in $w$. 
So, $Q(w) = 0$ by Proposition~\ref{pro1}.  

For the induction step we assume that the 
assertion $Q(w) = 0$ is true for some $k \ge 0$,
and then we will prove it for $k+1$. 
So, let $w$ be a word with $k+1 \ge 1$ 
occurrences of $a$.  
Then by hypothesis $  j > k + 1   $. 
We again have $w = q^{n} \, w^\prime \, c^l \, (c^*)^m$ as above. 
Since $w^\prime$ has a non-zero number 
of occurrences of both $a$ and $a^*$, 
we can write $w^\prime$ in at least one of these 
two forms:
$$
           w^\prime = u \, (a a^*) \, v  \quad \mathrm{or} \quad
           w^\prime = u \, (a^* a) \, v,
$$
where $u$ and $v$ are words (possibly empty)
with occurrences of $a$ and $a^*$ only. 
In the first case we see for example that
\begin{align*}
   &Q( w^\prime \, c^l \, (c^*)^m) = Q( u \, (a a^*) \, v  \, c^l \, (c^*)^m) 
   = Q( u \, (1 - q^2 c c^*) \, v  \, c^l \, (c^*)^m)
   \\
   &= Q( u \, v  \, c^l \, (c^*)^m) - q^2 Q( u \, (c c^*) \, v  \, c^l \, (c^*)^m)
   \\
   &= Q( u \, v  \, c^l \, (c^*)^m) - q^r Q( u \, v  \, c^{l+1} \, (c^*)^{m+1})
   = 0 - 0 = 0.
\end{align*}
Here the exponent $r \in \mathbb{N}$ 
arises from pushing 
the factor $c c^*$ to the right through $v$. 
The next to the last equality follows 
from the induction hypothesis and the 
fact that the word $u \, v$ has $k$ occurrences of $a$ 
and $ j - 1 > k \ge 0 $ occurrences of $ a^{*} $. 

The proof for the second form of $w^\prime$ is quite similar and so is
left to the reader. 
And that finishes the proof. 
\end{proof}

This result can also we proved 
by evaluating the bi-degree of a word with more 
$ a^{*} $'s than $ a $'s and 
showing that it is not equal to the bi-degree 
of any $ \varepsilon_{rs0} $ with $ r,s \ge 0 $. 

We have a result similar to Proposition \ref{handy}
for $ c $ and $ c^{*} $. 
\begin{prop}
	\label{handier} 
	Let $w$ be a finite word in the 
	alphabet with these four letters: $a, a^*, c , c^*$. 
	If $w$ has strictly more occurrences of 
	the letter $c^*$ than of the letter $ c$, then $Q(w) = 0$. 
\end{prop}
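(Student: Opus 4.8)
The plan is to prove Proposition~\ref{handier} in a manner entirely analogous to the proof of Proposition~\ref{handy}, interchanging the roles of the pairs $(a,a^{*})$ and $(c,c^{*})$, and then to record the shorter bi-degree argument just as in the remark following Proposition~\ref{handy}.

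First I would normal-order $w$. Using the relations \eqref{SU-q-2-relations}, every occurrence of $c$ and of $c^{*}$ can be pushed to the right past every occurrence of $a$ and of $a^{*}$; each such move merely multiplies $w$ by a power of $q$ and alters no letter counts, and $c$ commutes with $c^{*}$, so this yields $w = q^{n}\, w^{\prime}\, c^{l}\,(c^{*})^{m}$, where $w^{\prime}$ is a word in the letters $a,a^{*}$ only, $l$ is the number of $c$'s in $w$ and $m$ is the number of $c^{*}$'s in $w$; the hypothesis becomes $m>l$. I would then induct on the number $k$ of occurrences of $a$ in $w$. In the base case $k=0$ we have $w^{\prime}=(a^{*})^{j}$, hence $w = q^{n}\varepsilon_{-j,l,m}$: if $j\ge 1$ then $Q(w)=0$ since the first index is negative, and if $j=0$ then $w=q^{n}\varepsilon_{0lm}$ with $l<m$; in either case $Q(w)=0$ by Proposition~\ref{pro1}. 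For the inductive step I assume $w$ has $k\ge 1$ occurrences of $a$ and that the claim is known for all words with fewer. If $w^{\prime}=a^{k}$ is a pure power of $a$, then $w=q^{n}\varepsilon_{klm}$ with $k\ge 0$ and $l<m$, so $Q(w)=0$ again by Proposition~\ref{pro1}; otherwise $w^{\prime}$ contains both $a$ and $a^{*}$ and hence an adjacent pair $aa^{*}$ or $a^{*}a$, which I would replace using $aa^{*}=1-q^{2}c^{*}c$ or $a^{*}a=1-c^{*}c$ from \eqref{SU-q-2-relations}. Pushing the newly created factor $c^{*}c$ back to the right (again only picking up $q$-factors), this writes $w$ as a linear combination of two words: one with the same $l$ and $m$, and one with $l,m$ replaced by $l+1,m+1$. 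Both of these have $k-1$ occurrences of $a$, and both still have strictly more $c^{*}$'s than $c$'s; so the induction hypothesis gives $Q=0$ on each, whence $Q(w)=0$.

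The computation is routine and I do not expect a serious obstacle; the two points that need care in the inductive step are (i) checking that the companion term carrying the extra $c^{*}c$ still satisfies the strict inequality — it does, since it becomes $m+1>l+1$ — and (ii) disposing of the sub-case $w^{\prime}=a^{k}$ separately, which has no analogue in the proof of Proposition~\ref{handy} (there the hypothesis forces at least one $a^{*}$ to be present).

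Finally, I would add the alternative proof via bi-degrees, parallel to the remark after Proposition~\ref{handy}. A word $w$ with strictly more $c^{*}$'s than $c$'s is bi-homogeneous with $\mathrm{bideg}(w)=(M,N)$, where $M-N$ equals twice the excess of $c^{*}$'s over $c$'s, hence $M-N>0$; meanwhile each basis element $\varepsilon_{rs0}$ spanning $\mathcal{P}$ has $\mathrm{bideg}(\varepsilon_{rs0})=(r-s,\,r+s)$, so its two components differ by $-2s\le 0$. Since $Q$ preserves bi-degree (immediate from the formula in Proposition~\ref{pro1}) and maps into $\mathcal{P}$, the value $Q(w)$ lies in $A[M,N]\cap\mathcal{P}=\{0\}$, so $Q(w)=0$.
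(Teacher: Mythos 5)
Your proposal is correct, and it in fact contains two complete arguments. Your lead argument is the induction on the number of occurrences of $a$, modelled on the proof of Proposition~\ref{handy}; the paper does \emph{not} carry this out for Proposition~\ref{handier}, remarking only that ``a similar induction argument \dots could also be made,'' and instead gives the bi-degree proof as its official argument. Your induction is sound: the normal-ordering step only produces powers of $q$, the substitutions $aa^{*}=1-q^{2}c^{*}c$ and $a^{*}a=1-c^{*}c$ reduce the $a$-count by one while preserving the strict inequality $m>l$ (it becomes $m+1>l+1$ in the companion term), and you correctly isolate the sub-case $w^{\prime}=a^{k}$ with no $a^{*}$ present --- a case that genuinely has no analogue in Proposition~\ref{handy}, where the hypothesis forces $a^{*}$ to occur --- disposing of it directly via Proposition~\ref{pro1} since $l<m$. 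Your appended bi-degree argument is essentially the paper's proof: the paper phrases the conclusion as $\mathrm{bideg}(w)\ne\mathrm{bideg}(a^{r}c^{s})$ forcing $\langle a^{r}c^{s},w\rangle_{\mathcal{C}}=0$ for all $r,s\ge 0$ (using the orthogonality of the decomposition \eqref{c-direct-sum} and the defining formula for $Q$), whereas you phrase it as $Q(w)\in A[M,N]\cap\mathcal{P}=\{0\}$ using that $Q$ preserves bi-degree; both are valid, the paper's being marginally more self-contained since it appeals directly to the definition of $Q$ rather than to the computed action in Proposition~\ref{pro1}. The bi-degree route is shorter and makes the structural reason transparent (no element of $\mathcal{P}$ has a bi-degree whose first component exceeds its second), while your induction is more elementary in that it uses only the defining relations and Proposition~\ref{pro1}.
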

\begin{proof}
Here is a proof using bi-degrees instead on 
a similar induction argument, which could also be made. 
Suppose that $ w $ has $ j, k, l, m  $ 
occurrences of $ a, a^{*}, c, c^{*} $ 
respectively.
Then, independent of the order of 
these occurrences, we have that  
\begin{align*}
\mathrm{bideg} (w) &= 
j (1,1) + k (-1,-1) + l (-1,1) + m (1,-1) 
\\
&= ( j - k -l + m, j - k + l - m  ),
\end{align*}
while 
$ \mathrm{bideg} (a^r c^s)  = (r - s, r + s)$. 
The difference of the two entries in $ \mathrm{bideg} (w) $ 
is $ -2l + 2m > 0 $, since by hypothesis $ m > l $. 
However, the corresponding difference for 
$ \mathrm{bideg} (a^r c^s) $ is $ -2 s \le 0 $.  
This implies that 
$ \mathrm{bideg} (w) \ne \mathrm{bideg} (a^r c^s) $ and 
therefore $ \langle a^r c^s , w \rangle_{\mathcal{C}} = 0 $ 
for all $ r, s \ge 0 $, which in turn implies that 
$ Q(w) = 0 $. 
\end{proof} 

We have now on hand enough formulas to calculate the 
action of the co-Toeplitz operators 
$C_{\varepsilon_{klm}}$. 
This is sufficient information, since 
$C_g$ for 
any symbol $g \in SU_q(2)$ 
can be written as a 
finite linear combination with 
complex coefficients of 
the co-Toeplitz operators $C_{\varepsilon_{klm}}$. 
Moreover, it suffices to calculate 
$C_{\varepsilon_{klm}}$ acting on the 
elements $\phi_{r,s}$ in the standard 
orthonormal basis, where $r,s \in \mathbb{N}$. 
We recall that the co-Toeplitz operator 
with symbol $g$ was defined as 
$C_g = \pi_g \, \beta \, j$. 
Since $j$ is simply the inclusion map, we have 
$$
     C_{\varepsilon_{klm}} (\phi_{r,s}) = \pi_{\varepsilon_{klm}} \, \beta (\phi_{r,s}). 
$$
We will take the co-action map 
$\beta : \mathcal{C} \to 
\mathcal{P} \otimes \mathcal{C}$ 
to be of the form~\eqref{beta-special-form}, namely 
$$
       \mathcal{C} \stackrel{\Delta_\mathcal{C}}{\longrightarrow} \mathcal{C} \otimes \mathcal{C} 
       \stackrel{Q \otimes id}{\longrightarrow} \mathcal{P} \otimes \mathcal{C}, 
$$
where $\Delta_\mathcal{C}$ is the 
co-multiplication of $\mathcal{C}$. 
Dropping the normalization constant 
for the moment, we calculate with 
the monomial $a^r c^s$ instead of with $\phi_{r,s}$. 
We then see that 
\begin{align*} 
    \beta (a^r c^s) &= (Q \otimes id ) \big( \Delta_\mathcal{C} (a^r c^s) \big) 
    =  (Q \otimes id ) \big( \Delta_\mathcal{C} (a)^r  \Delta_\mathcal{C} (c)^s \big) 
    \\
    &=  (Q \otimes id ) \Big( (a \otimes a - q \, c^* \otimes c)^r (c \otimes a + a^* \otimes c)^s \Big) 
\end{align*}
We will use the standard binomial theorem on 
the second factor, since 
$ c \otimes a $ and $ a^{*} \otimes c $ commute, 
as follows from \eqref{SU-q-2-relations}. 
To continue with the first factor 
we will use the $q$-binomial theorem 
(see \cite{KS}), which states that 
if variables $ v, w $ satisfy the 
commutation relation $ v w = q w v  $ 
for $ 0 \ne q \in \mathbb{C} $, then 
for any integer $ n \ge 0 $ one has 
$$
     (v + w)^{n} = \sum_{m=0}^{n}
     \Big[ 
     \begin{array}{c}
     n \\ m
     \end{array}
     \Big]_{q^{-1}} v^{m}  w^{n-m}, 
$$
where the coefficient is an explicitly 
given deformation of the standard 
binomial coefficient. 
This is applicable in this situation, since
$$
    ( -q c^* \otimes c) (a \otimes a) 
    = -q c^* a \otimes c a,
$$
and hence for $ v = a \otimes a $ and 
$ w =  -q c^* \otimes c $ 
by using the relations  
\eqref{SU-q-2-relations} again 
we obtain 
\begin{align*}
     v w &= 
    (a \otimes a)( -q c^* \otimes c) 
    = -q a c^* \otimes a c 
    \\
    &= q^2 \big(  -q c^* a \otimes c a \big)
    \\
    &= q^{2} w v. 
\end{align*}

Next, to simplify somewhat 
the rather cumbersome binomial-type notation, 
we introduce 
$
B_{n,q} := \Big[ 
\begin{array}{c}
r \\ n
\end{array}
\Big]_{q^{-2}},  
$
which also suppresses the variable $ r $. 
We also use 
$
B_{p,1} := \Big( 
\begin{array}{c}
s \\ p
\end{array}
\Big),  
$
a standard binomial coefficient (which suppresses 
the variable $ s $). 
We will use this material in the next 
and subsequent calculations. 
The reader can consult \cite{KS} for more details 
about this so-called {\em $ q $-calculus}. 

Then for $r,s \in \mathbb{N}$ we have 
\begin{align*}
       &\beta (a^r c^s) = 
       (Q \otimes id ) \Big( (a \otimes a - q \, c^* \otimes c)^r (c \otimes a + a^{*} \otimes c)^s \Big)
       \\
       &= (Q \otimes id )  
       \sum_{n,p=0}^{r,s}
       \!\! B_{n,q} (a \otimes a)^{r-n} \, 
       (-q)^n (c^* \otimes c)^n 
       B_{p,1}
       (c \otimes a)^{s-p} (a^{*}\otimes c)^p 
       \\
       &= (Q \otimes id ) 
       \Big(  \sum_{n, p=0}^{r,s} 
       (-q)^n B_{n,q} B_{p,1} \, 
       a^{r-n} (c^*)^n c^{s-p}  a^{*p}
       \otimes a^{r-n} c^n a^{s-p} c^p \Big)
       \\
       &= 
      \sum_{n, p=0}^{r,s} 
       (-q)^n B_{n,q} B_{p,1} \, 
       Q ( a^{r-n} (c^*)^n c^{s-p}  a^{*p} )
       \otimes a^{r-n} c^n a^{s-p} c^p 
       \\
       &= 
       \sum_{n, p=0}^{r,s} \phi 
       \otimes a^{r-n} c^n a^{s-p} c^p.   
\end{align*}

To simplify notation we have put 
\begin{equation}
\label{define-phi-nprs}
\phi = \phi_{nprs} = (-q)^n B_{n,q} B_{p,1} \, 
Q ( a^{r-n} (c^*)^n c^{s-p}  a^{*p} ) \in \mathcal{P}. 
\end{equation}
By Propositions \ref{handy} and \ref{handier} we see that 
if $ p > r - n $ or $ n > s - p $, then $ \phi = 0 $. 
In the contrary case the calculation 
of $ \phi $ is a bit more complicated. 
The contrary case occurs when $ p \le r -n $ and 
$ n \le s -p $, that is, 
$ n + p \le r $ and $ n+p \le s $. 
This condition is then equivalent to 
$ n + p \le \min (r,s) $, which we will 
assume to hold throughout the following. 
The summation indices $ n $ and $ p $ also satisfy 
$ 0 \le n \le r $ and $ 0 \le p \le s $. 
To do this calculation we will use the identity 
\begin{equation}
\label{am-astarm-identity}
    a^{m} (a^{*})^{m} = \sum_{i=0}^{m} 
    \Big[  
    \begin{array}{c}
    m \\ i
    \end{array}
    \Big]_{q^{-2}}
    (-1)^{i} q^{i + 2 i m -i^2} c^{i} (c^{*})^i, 
\end{equation}
for integer $ m \ge 0 $. 
(Cp. \cite{KS}, p.~100, Eq.~ (13).
Or prove it yourself by induction on $ m $.)
Note that this identity is not surprising, 
since $ \mathrm{bideg} (a^{m} (a^{*})^{m}) = (0,0)$ 
and $ A[0,0] $ is the polynomial algebra in the 
variable $ c c^{*} $. 
(Recall that $ c $ and $ c^{*} $ commute 
so that $ c^{i} (c^{*})^i = ( c c^{*})^{i} $.) 
What the identity \eqref{am-astarm-identity} 
tells us more specifically 
is that  $ a^{m} (a^{*})^{m} $ is a polynomial of
degree $ m $ and what its coefficients are exactly.  
Then using this identity we have 
\begin{align*}
   &a^{r-n} (c^*)^n c^{s-p}  a^{*p} = 
   q^{p(s-p) + p n} a^{r-n}  a^{*p}  (c^*)^n c^{s-p}  
   \\
   &=  
   q^{p(s-p + n)} a^{r-n-p} a^{p} a^{*p}
    (c^*)^n c^{s-p}   
   \\
   &=  
   q^{p(s-p + n)} a^{r-n-p} 
   \sum_{i=0}^{p} 
   \Big[  
   \begin{array}{c}
   p \\ i
   \end{array}
   \Big]_{q^{-2}}
   (-1)^{i} q^{i + 2 i p -i^2} c^{i} (c^{*})^i 
   (c^*)^n c^{s-p} 
   \\
   &=
   \sum_{i=0}^{p} 
   \Big[  
   \begin{array}{c}
   p \\ i 
   \end{array}
   \Big]_{q^{-2}}
   (-1)^{i} q^{A} a^{r-n-p} 
   c^{i + s -p} (c^{*})^{i + n}
   \\
   &=
   \sum_{i=0}^{p} 
   \Big[  
   \begin{array}{c}
   p \\ i
   \end{array}
   \Big]_{q^{-2}}
   (-1)^{i} q^{A} 
   \varepsilon_{r-n-p, i + s -p, i + n}, 
\end{align*}
where $ A = p(s-p + n) + i + 2 i p - i^2 $. 
Continuing, we see that 
\begin{align*}
\phi_{nprs} &= (-q)^n B_{n,q} B_{p,1} \, 
Q ( a^{r-n} (c^*)^n c^{s-p}  a^{*p} )
\\
&= (-q)^n B_{n,q} B_{p,1} \, \left( 
\sum_{i=0}^{p} 
\Big[  
\begin{array}{c}
p \\ i 
\end{array}
\Big]_{q^{-2}}
(-1)^{i} q^{A} 
Q ( \varepsilon_{r-n-p, i +s -p , i + n} ) \right) 
\\
&= (-q)^n B_{n,q} B_{p,1} \, \left( 
\sum_{i=0}^{p} 
\Big[  
\begin{array}{c}
p \\ i 
\end{array}
\Big]_{q^{-2}}
(-1)^{i} q^{A} \right) 
\varepsilon_{r-n-p, s -n -p , 0}
\\
&= D_{nprs} \, \varepsilon_{r - (n+p), s - (n+p) , 0}, 
\end{align*}
where the real number $ D_{nprs} $ has the 
obvious definition. 
Here we also used Proposition \ref{pro1}, which 
has the fortuitous virtue of changing the scope 
of the sum on $ i $. 
Notice that this shows that 
$ \phi $ is proportional to an element in the 
basis $\{ \varepsilon_{kl0} ~|~ k,l \ge 0 \}$
of $ \mathcal{P} $. 
The bi-degree 
of the bi-homogeneous element 
$ \phi $ is easily seen to be given by 
\begin{equation}
\mathrm{bideg} (\phi) =  
\mathrm{bideg} (\varepsilon_{r- (n+p) , s -(n+p) , 0}) 
= ( \, r - s   , r + s - 2 (n+p) \, ). 
\label{bideg-phi}
\end{equation}

Next, for $r,s \in \mathbb{N}$ we obtain 
\begin{align}
     &C_{\varepsilon_{klm}} (a^r c^s) = \pi_{\varepsilon_{klm}} \, \beta (a^r c^s) \nonumber  
     \\
     &= \pi_{\varepsilon_{klm}} \sum_{n+p=0}^{\min(r,s)} 
     \phi \otimes a^{r-n} c^n a^{s-p} c^p \nonumber 
     \\
     &= \sum_{n+p=0}^{\min(r,s)} 
     \langle \varepsilon_{klm} , a^{r-n} c^n a^{s-p} c^p \rangle_{\mathcal{C}} \, \phi \nonumber 
     \\
     &= \sum_{n+p=0}^{\min(r,s)} q^{n (s-p)}
     \langle \varepsilon_{klm} , a^{r+s - (n+p)} c^{n+p} \rangle_{\mathcal{C}} \, \phi_{nprs} 
    \nonumber 
\\
 &= \sum_{n+p=0}^{\min(r,s)} q^{n (s-p)}
 \langle \varepsilon_{klm} , a^{r+s - (n+p)} c^{n+p} \rangle_{\mathcal{C}} D_{nprs}
\varepsilon_{r- (n+p) , s - (n+p) , 0}.
 \label{last-expn} 
\end{align}

Note that the condition $ 0 \le n+p \le \min (r,s) $
means according to \eqref{bideg-phi} that 
\eqref{last-expn} is in general 
a sum of bi-homogeneous 
elements with different bi-degrees. 
However, the coefficients of these summands 
will be non-zero only if 
the inner product in the expression \eqref{last-expn}  
is non-zero which is equivalent to 
$$
   \mathrm{bideg} (\varepsilon_{klm}) = 
   \mathrm{bideg} ( a^{r+s - (n+p)} c^{n+p} ), 
$$
which itself is equivalent to
$$
(k - l + m, k + l  - m ) =
( \, r + s - 2 (n+p) , r + s \, ). 
$$
The indices $ k,l,m, r, s $ are given and the 
`unknowns' are the summation indices $ n $ 
and $ p $. 
The previous equality is equivalent to 
\begin{equation}
\label{proportional-term}
   n+p = r + s - k =  l - m. 
\end{equation}
If this holds for some pair $ n,p $
satisfying $ n + p \le  \min(r,s)$,  
$ 0 \le n \le r $ and $ 0 \le p \le s $, 
then \eqref{last-expn} is a multiple of 
$$
\varepsilon_{r- (n+p) , s - (n+p) , 0} = 
\varepsilon_{r- (l - m) , s - (l - m) , 0}; 
$$
otherwise, \eqref{last-expn} is $ 0 $. 
In order that there exists at least one solution 
of \eqref{proportional-term} 
for a pair $ n \ge 0, \, p \ge 0 $
it is necessary and sufficient that the five
indices  $ k,l,m, r, s $ satisfy 
\begin{equation}
\label{klmrs-condition}
k \le r + s \quad \mathrm{and} \quad m \le l.  
\end{equation}

And in that case the co-Toeplitz operator 
$ C_{\varepsilon_{klm}} $ lowers the 
degree of each variable $ a,c $ by $ l-m \ge 0 $. 
Alternatively, we note that 
$ C_{\varepsilon_{klm}} $ maps
$ a^r c^s = \varepsilon_{rs0} $ of bi-degree $ (r-s, r+s) $ to 
$ \varepsilon_{r- (l - m) , s - (l - m) , 0} $, an 
element of bi-degree $ (r-s, r + s - 2 (l-m ) )$. 
In other words on this scale the co-Toeplitz operator 
$ C_{\varepsilon_{klm}} $ can be understood as 
an operator having bi-degree $ (0, - 2 (l-m )) $. 
In physics terminology, these co-Toeplitz operators
are not {\em creation operators}
in the sense that the degree of the powers of 
monomials is strictly increased. 
Similarly, the bi-degree also is not strictly 
increased. 

We have shown the following.
\begin{theorem}
Suppose $ k \in \mathbb{Z} $ and $ l,m,r,s \in \mathbb{N} $ satisfy $ r + s - k = l - m  $ and 
$ 0 \le l - m \le \min (r,s) $. 
Suppose that this set is non-empty:
$$
\{ (n,p) ~|~ n + p = l -m, \, 0 \le n \le r, \, 
  0 \le p \le s  \}
$$
Then 
$ C_{\varepsilon_{klm}} (a^r c^s) = 
K a^{r-(l-m)} c^{r-(l-m)} $
for some real number $ K $. 

In terms of basis elements
$ C_{\varepsilon_{klm}} (\phi_{rs} ) = K^{\prime}
\phi_{r-(l-m), s-(l-m)}$
for some real number $ K^{\prime} $. 
And $ K^{\prime} \ne 0$ if and only if $ K \ne 0 $. 

Otherwise, we have 
$ C_{\varepsilon_{klm}} (a^r c^s) = 0 $. 
\end{theorem}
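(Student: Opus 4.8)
The plan is to read the result off the computation that culminates in equation \eqref{last-expn}, which already displays $C_{\varepsilon_{klm}}(a^r c^s)$ as a finite sum of terms, each a scalar multiple of a basis vector $\varepsilon_{r-(n+p),\,s-(n+p),\,0}$ of $\mathcal{P}$ with real coefficient. First I would recall from the discussion immediately preceding the theorem that, by the definition \eqref{define-sequi-form} of the sesquilinear form and the orthogonality of the bi-degree decomposition \eqref{c-direct-sum}, the factor $\langle \varepsilon_{klm},\, a^{r+s-(n+p)}c^{n+p}\rangle_{\mathcal{C}}$ in \eqref{last-expn} is non-zero exactly when the two bi-degrees agree, which by \eqref{proportional-term} amounts to the single condition $n+p = r+s-k = l-m$; moreover in that case this factor equals the positive weight $w(k,\,l-m)$ and so is the same for every such pair $(n,p)$.

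Next I would separate the two announced cases. Suppose the hypotheses hold: $r+s-k = l-m$, the range constraint $0 \le l-m \le \min(r,s)$, and non-emptiness of $\{(n,p) : n+p = l-m,\ 0 \le n \le r,\ 0 \le p \le s\}$. Then every summand of \eqref{last-expn} with $n+p \ne l-m$ vanishes, while each surviving summand points along the common basis vector $\varepsilon_{r-(l-m),\,s-(l-m),\,0}$; the range constraint $l-m \le \min(r,s)$ (together with $l-m \ge 0$) is precisely what makes the two lower indices non-negative, so this is a genuine Hamel basis element, equal to $a^{r-(l-m)}c^{s-(l-m)}$. Factoring out the common weight, \eqref{last-expn} collapses to $C_{\varepsilon_{klm}}(a^r c^s) = K\,a^{r-(l-m)}c^{s-(l-m)}$ with $K = w(k,\,l-m)\sum q^{n(s-p)}D_{nprs}$, summed over the admissible pairs $(n,p)$; this $K$ is real because $q$, the $D_{nprs}$ and $w$ are. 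I would be careful to claim no more than $K \in \mathbb{R}$, since cancellation among the $D_{nprs}$ might still leave $K = 0$ even for a non-empty index set. In the complementary case — $r+s-k \ne l-m$, or $l-m \notin [0,\min(r,s)]$, or an empty index set (which by \eqref{klmrs-condition} is what the index conditions come down to) — no $(n,p)$ in the summation range satisfies \eqref{proportional-term}, every coefficient in \eqref{last-expn} is zero, and hence $C_{\varepsilon_{klm}}(a^r c^s) = 0$.

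For the orthonormal-basis version I would just renormalize: with $\phi_{rs} = w(r,s)^{-1/2} a^r c^s$, linearity of $C_{\varepsilon_{klm}}$ turns the formula above into $C_{\varepsilon_{klm}}(\phi_{rs}) = K'\,\phi_{r-(l-m),\,s-(l-m)}$ where $K' = K\,\bigl(w(r-(l-m),\,s-(l-m))/w(r,s)\bigr)^{1/2}$. Since $w$ is strictly positive the factor multiplying $K$ is a positive real, so $K' \in \mathbb{R}$ and $K' \ne 0$ if and only if $K \ne 0$.

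There is no genuine difficulty here — the theorem merely packages \eqref{last-expn}, \eqref{proportional-term} and \eqref{klmrs-condition} into one statement. The only point that needs care is the equivalence chain ``coefficient non-zero $\iff$ equal bi-degrees $\iff$ \eqref{proportional-term}'' together with the boundary bookkeeping (indices leaving the valid range so that some $\varepsilon_{\bullet}$ collapses to $0$, and the empty-index-set case), all of which was already carried out in the paragraphs leading up to the theorem.
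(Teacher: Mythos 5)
Your proposal is correct and follows essentially the same route as the paper, whose own proof of this theorem is precisely the computation culminating in \eqref{last-expn} together with the bi-degree analysis yielding \eqref{proportional-term}; you merely repackage those steps, and rightly note that $K$ may vanish even when the index set is non-empty. (You also silently correct what appears to be a typo in the theorem statement, writing $c^{s-(l-m)}$ where the paper prints $c^{r-(l-m)}$, which is what the computation actually gives.)
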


Here are some special cases of this theorem. 
First, we consider the case $ l = m $. 
In this case $ C_{\varepsilon_{kll}} $
maps $ a^r c^s $ to a multiple of $ a^r c^s $ 
for any value of $ k \in \mathbb{Z} $.  
Notice that the multiplicative constant 
depends on $ k $ and can be $ 0 $.
In physics terminology this is a 
{\em preservation operator}, which 
simply means mathematically 
that it preserves degrees. 
The sub-case $ l = m = 0 $ is the co-Toeplitz operator 
with `holomorphic' symbol $ a^{k} $ 
if $ k \ge 0 $ or 
with `anti-holomorphic' 
symbol $ (a^{*})^{-k} $ if $ k < 0 $. 

The next case is $ l > 0,  m = 0 $. 
In this case $ C_{\varepsilon_{kl0}} $
maps $ a^r c^s $ to some 
multiple of $ a^{r-l} c^{s-l} $. 
In usual physics terminology this is called an 
{\em annihilation operator}, which 
simply means mathematically 
that it lowers degrees.  
We remark that $ \varepsilon_{kl0} $ is the most
general holomorphic monomial in the variables
$ a $ and $ c $.  
It is because of this particular case that we have 
defined co-Toeplitz operators 
with holomorphic symbols to be annihilation 
operators. 

In the case $ l = 0 $ we have that $ m = 0 $
must hold as well. 
And so this case was already considered as 
part of the first case. 
Or in other words, the case $ l = 0 $ and $ m > 0 $ 
gives a zero co-Toeplitz operator.

This leads us up to the analysis of the co-Toeplitz 
operators whose symbols are one of the four
algebra generators, $ a, a^{*}, c , c^{*} $, of
$ SU_q(2) $. 
For the symbol $ c $ we have $ k = m = 0 $, 
$ l=1 $ and so
$ C_c $ is an annihilation operator that maps
$ a^r c^s $ to a multiple of $ a^{r-1} c^{s-1} $. 

For the symbol $ c^{*} $ we have $ k = l = 0 $, 
$ m = 1 $ and so $ C_{c^{*}} = 0$, since $ m > l $ 
holds. 
The same reasoning applies to the 
`anti-holomorphic' symbol 
$ (a^{*})^{k} (c^{*})^{m} $ for $ m >0 $, 
since $ m > l = 0 $. 
So, $ C_{ (a^{*})^{k}(c^{*})^{m} } = 0$. 

For the symbol $ a^{*} $ we have 
$ l = m = 0 $, $ k = -1 $. 
Now $ n + p = l - m = 0 $ implies that 
$ n = p = 0 $ and therefore that 
$ r + s = k = -1 $, which 
has no solutions
$ r \ge 0, s \ge 0 $. 
Thus, $ C_{a^{*}} = 0 $. 

For the symbol being $ a $ 
we have $ l = m = 0 $, $ k = 1 $, 
and thus $ C_{a} $ is a  
preservation operator. 
But $ n + p = l - m = 0 $ implies that 
$ n = p = 0 $. 
So there is only one term in the sum \eqref{last-expn}. 
We note that $ q^{n(s-p)} = q^{0} = 1 $
and $ D_{00rs} = 1 $. 
But the coefficient in that unique term is
$$
\langle \varepsilon_{100} , a \rangle = 
\langle a , a \rangle = w (1,0) > 0. 
$$
Consequently, $ C_{a} $ is a non-zero multiple 
of the identity operator. 
In particular, 
$ C_{a} \ne 0$ and $ C_{a^{*}} = 0 $
are not adjoints of each other. 
So the condition 
\eqref{M-tilde-condition}
does not hold for our choice 
\eqref{define-sequi-form} 
for the sesquilinear form. 

In this example, the creation and annihilation 
operators have strange properties from the point 
of view of quantum physics. 
This is in part a consequence of the choice of the 
sesquilinear form for this example. 
As I have emphasized elsewhere, the study of 
more examples of the co-Toeplitz 
quantization scheme is really needed 
for getting a better 
understanding of the general theory. 
A similar example for the Toeplitz quantization  
of $ SU_q (2) $ in \cite{sbs5} gave creation and 
annihilation operators 
which are more intuitive physically.  
This goes to show that co-Toeplitz quantization 
has new, rather curious properties, even though it is 
dual in the sense of notion duality 
to Toeplitz quantization. 

This example depends on more than the choice of 
the co-algebra $ SU_q(2) $. 
We have to choose also the sesquilinear form and 
the subspace $ \mathcal{P} $. 
We could continue with the same family 
of sesquilinear forms, where that family is 
parameterized by the weight function. 
Instead, we could use a different subspace, say 
for example:
$$
    \mathcal{P}^{\prime} := 
    \mathrm{span}\{ \varepsilon_{kl0} = a^{k} c^{l}, 
     \varepsilon_{k0m} = a^{k} (c^{*})^{m} 
    ~|~ k,l \ge 0, m > 0 \}.
$$ 
Since no two elements in this set of generators 
have the same bi-degree, we have that this is an orthogonal 
basis of $ \mathcal{P}^{\prime} $. 
So an orthonormal basis of $ \mathcal{P}^{\prime} $ 
is given by 
$$
\phi_{kl} 
= \dfrac{1}{ \sqrt{w(k,l)} } \,  \varepsilon_{kl0} 
\quad \mathrm{and} \quad 
\psi_{km} := 
\dfrac{1}{\sqrt{w(k,-m)}} \, \varepsilon_{k0m}
$$
for $ k,l \ge 0 $ and $ m > 0 $, where we continue 
to use the notation $ \phi_{kl}  $
introduced earlier. 
Thus $\mathcal{P}^{\prime}$ is a pre-Hilbert space. 

The injection 
$j^{\prime} : \mathcal{P}^{\prime} \to \mathcal{C} $ 
is defined to be the inclusion map. 
The quotient map 
$Q^{\prime}: \mathcal{C} \to \mathcal{P}^{\prime} $ 
is defined for $f \in \mathcal{C}$ as 
$$
Q^{\prime}(f) := \sum_{i,j \ge 0} 
\langle \phi_{ij}, f \rangle_{\mathcal{C}} \, \phi_{ij} 
+
\sum_{i \ge 0, j > 0} 
\langle \psi_{ij}, f \rangle_{\mathcal{C}} \, \psi_{ij},
$$
where the sum on the right side has only finitely 
many non-zero terms. 
This shows just one possible way of giving 
another example based on the co-algebra 
$ SU_q(2) $. 

Another possible modification of this example 
is to use the positive definite
inner product 
defined for $ x,y \in SU_q(2) $ by 
$
       \langle x , y \rangle := h (x^{*}y), 
$
where $ h : SU_q(2) \to \mathbb{C} $ is the 
unique Haar state on $ SU_q(2) $ 
(see \cite{KS}), instead 
of the sesquilinear form defined in \eqref{define-sequi-form}. 
This is an approach that is better attuned to 
the Hopf $ * $-algebra structure of $ SU_q(2) $. 
These two alternatives as well as other examples 
of co-Toeplitz quantizations of co-algebras 
will be the subject of forthcoming 
research work.

\section{Concluding Remarks}
\label{concluding-remarks-section}

This paper begins the new theory of 
co-Toeplitz operators and their associated 
quantization, as the title indicates. 
On the other hand, the theory of Toeplitz 
operators is over one hundred years old. 
Obviously, one strategy is to use the ideas and 
results in the Toeplitz setting to inspire 
research in this new theory. 
However, I hope that there will be more new ideas 
arising in the co-Toeplitz setting and that some 
of these may even shed light on the well-known 
Toeplitz setting. 
To bring this theory to maturity requires more 
than anything a reasonable quantity of 
illuminating examples, which could help in 
fine tuning definitions and in providing insights 
into relations among the various structures 
introduced here. 
Also, bi-algebras can now be quantized either by 
using their algebra structure or their co-algebra  
structure. 
So it would be interesting to understand how 
those two quantizations might be related. 
In the more specific case of Hopf algebras 
(or quantum groups) 
one would like to know what the role of the 
antipode is. 
One might also be able to introduce into this 
setting such structures as a symplectic 
form, Poisson brackets or coherent states, just 
to name a few possibilities. 
Finally, other types of quantization schemes may 
also be extended to theories based 
on arbitrary algebras or co-algebras. 
This is a broad outline of possible future 
research in this area. 

\vskip 0.2cm 

\begin{center}
\textbf{Acknowledgments}
\end{center}
I thank Micho \dju~and Jean-Pierre Gazeau 
for providing me 
insights from rather 
complementary points of view 
of mathematical physics. 
I can not imagine how I could ever have possibly 
written this paper without their generosity 
in sharing ideas with me.

\end{document}